\newif\ifabstract
\abstracttrue
\newif\iffull
\ifabstract \fullfalse \else \fulltrue \fi

\newif\iflower
\lowertrue

\documentclass[11pt]{article}
\usepackage{amsfonts}
\usepackage{amssymb}
\usepackage{amstext}
\usepackage{amsmath}
\usepackage{xspace}
\usepackage{theorem}
\usepackage{graphicx}
\usepackage{url}
\usepackage{graphics}
\usepackage{colordvi}
\usepackage{colordvi}
\usepackage{subfigure}

\textheight 9.3in \advance \topmargin by -1.0in \textwidth 6.7in
\advance \oddsidemargin by -0.8in
\newcommand{\myparskip}{3pt}
\parskip \myparskip

\newenvironment{proofof}[1]{\noindent{\bf Proof of #1.}}%
        {\hspace*{\fill}$\Box$\par\vspace{4mm}}





\newcommand{\optcro}[1]{\mathsf{cr}(#1)}





\newcommand{\ceil}[1]{\ensuremath{\left\lceil#1\right\rceil}}
\newcommand{\floor}[1]{\ensuremath{\left\lfloor#1\right\rfloor}}


\newcommand{\tone}{\tilde 1}
\newcommand{\ttwo}{\tilde 2}
\newcommand{\tthree}{\tilde 3}

\newcommand{\set}[1]{\left\{ #1 \right\}}

\newcommand{\tset}{{\mathcal T}}

\newcommand{\pset}{{\mathcal{P}}}
\newcommand{\qset}{{\mathcal{Q}}}
\newcommand{\lset}{{\mathcal{L}}}
\newcommand{\bset}{{\mathcal{B}}}
\newcommand{\aset}{{\mathcal{A}}}
\newcommand{\cset}{{\mathcal{C}}}
\newcommand{\nset}{{\mathcal{N}}}

\newcommand{\mset}{{\mathcal M}}

\newcommand{\rset}{{\mathcal{R}}}

\newcommand{\hset}{{\mathcal{H}}}
\newcommand{\sset}{{\mathcal{S}}}


\newcommand{\be}{\begin{enumerate}}
\newcommand{\ee}{\end{enumerate}}
\newcommand{\bd}{\begin{description}}
\newcommand{\ed}{\end{description}}
\newcommand{\bi}{\begin{itemize}}
\newcommand{\ei}{\end{itemize}}

\newtheorem{theorem}{Theorem}[section]
\newtheorem{lemma}[theorem]{Lemma}
\newtheorem{observation}[theorem]{Observation}
\newtheorem{corollary}[theorem]{Corollary}
\newtheorem{claim}[theorem]{Claim}
\newtheorem{definition}{Definition}[section]
\newenvironment{proof}{\par \smallskip{\bf Proof:}}{\hfill\stopproof}
\def\stopproof{\square}
\def\square{\vbox{\hrule height.2pt\hbox{\vrule width.2pt height5pt \kern5pt
\vrule width.2pt} \hrule height.2pt}}




\renewcommand{\phi}{\varphi}


\newenvironment{properties}[2][0]
{
\begin{enumerate} \setcounter{enumi}{#1}}{\end{enumerate}}

\setlength{\parskip}{2mm} \setlength{\parindent}{0mm}






\begin{document}

\begin{titlepage}

\title{Improved Bounds for the Flat Wall Theorem\footnote{An extended abstract is to appear in SODA 2015}}
\author{Julia Chuzhoy\thanks{Toyota Technological Institute, Chicago, IL
60637. Email: {\tt cjulia@ttic.edu}. Supported in part by NSF grant CCF-1318242.}
}

\maketitle

\thispagestyle{empty}
\begin{abstract}
The Flat Wall Theorem of Robertson and Seymour states that there is some function $f$, such that for all integers $w,t>1$, every graph $G$ containing a wall of size $f(w,t)$, must contain either (i) a $K_t$-minor; or (ii) a small subset $A\subset V(G)$ of vertices, and a flat wall of size $w$ in $G\setminus A$. Kawarabayashi, Thomas and Wollan recently showed a self-contained proof of this theorem with the following two sets of parameters:  (1) $f(w,t)=\Theta(t^{24}(t^2+w))$ with $|A|=O(t^{24})$, and (2) $f(w,t)=w^{2^{\Theta(t^{24})}}$ with $|A|\leq t-5$. The latter result gives the best possible bound on $|A|$. In this paper we improve their bounds to  $f(w,t)=\Theta(t(t+w))$ with $|A|\leq t-5$. For the special case where the maximum vertex degree in $G$ is bounded by $D$, we show that, if $G$ contains a wall of size $\Omega(Dt(t+w))$, then either $G$ contains a $K_t$-minor, or there is a flat wall of size $w$ in $G$. This setting naturally arises in algorithms for the Edge-Disjoint Paths problem, with $D\leq 4$. Like the proof of Kawarabayashi et al., our proof is self-contained, except for using a well-known theorem on routing pairs of disjoint paths. We also provide efficient algorithms that return either a model of the $K_t$-minor, or a vertex set $A$ and a flat wall of size $w$ in $G\setminus A$.

\iflower
We complement our result for the low-degree scenario by proving an almost matching lower bound: namely, for all integers $w,t>1$, there is a graph $G$, containing a wall of size $\Omega(wt)$, such that the maximum vertex degree in $G$ is $5$, and $G$ contains no flat wall of size $w$, and no $K_t$-minor.
\fi
\end{abstract}
\end{titlepage}

\section{Introduction}
The main combinatorial object studied in this paper is a wall. In order to define a wall $W$ of height $h$ and width $r$, or an $(h\times r)$-wall, we start from a grid of height $h$ and width $2r$. 
Let $C_1,\ldots, C_{2r}$ be the columns of the grid in their natural left-to-right order. For each column $C_j$, let $e_{1}^j,e_{2}^j,\ldots,e_{h-1}^j$ be the edges of $C_j$, in their natural top-to-bottom order. If $j$ is odd, then we delete all edges $e_{i}^j$ where $i$ is even. If $j$ is even, then we delete all edges $e_{i}^j$ where $i$ is odd. We then remove all vertices of the resulting graph whose degree is $1$. This  final graph, denoted by $\hat W$, is called an \emph{elementary $(h\times r)$-wall}  (see Figure~\ref{fig: wall}). The \emph{pegs} of $\hat W$ are all the vertices on its outer boundary that have degree $2$. An $(h\times r)$-wall $W$ is simply a subdivision of the elementary $(h\times r)$-wall $\hat W$, and the pegs of $W$ are defined to be the vertices of $W$ that serve as the pegs of $\hat W$. We sometimes refer to a $(w\times w)$-wall as a \emph{wall of size $w$}.

\begin{figure}[h]
\scalebox{0.4}{\includegraphics{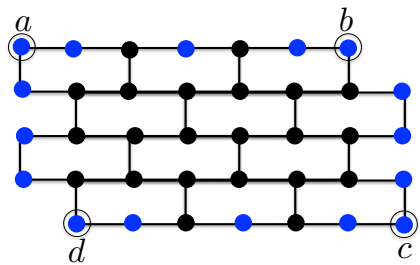}}\caption{An elementary wall of height $5$ and width $4$. The corners are circled, and the pegs are shown in blue.\label{fig: wall}}
\end{figure}

The well-known Excluded Grid Theorem of Robertson and Seymour~\cite{GMT-RS} states that there is some function $g:\mathbb{Z^+}\rightarrow \mathbb{Z^+}$, such that for every integer $w\geq 1$, every graph of treewidth at least $g(w)$ contains the $(w\times w)$-grid as a minor. Equivalently, if the treewidth of $G$ is $g(w)$, then $G$ contains a wall of size $\Omega(w)$. This important theorem has found many applications in graph theory and algorithms. However, in some scenarios it is useful to have more structure than that provided by the presence of a large wall in a graph. The Flat Wall Theorem helps provide this additional structure, and it is used, for example, in  algorithms for the Node-Disjoint Paths problem~\cite{flat-wall-RS}. We start with some basic definitions and results that are needed in order to state the Flat Wall Theorem.

Suppose we are given a graph $G=(V,E)$ with four special vertices $s_1,t_1,s_2,t_2$. In the Two-Disjoint-Paths problem, our goal is to find two disjoint paths $P_1$ and $P_2$ in $G$, with $P_1$ connecting $s_1$ to $t_1$, and $P_2$ connecting $s_2$ to $t_2$. The well-known Two-Disjoint-Paths Theorem~\cite{Jung,RS90, Seymour06, Shiloach80, Thomassen80} states that either there is a solution to the Two-Disjoint-Paths problem, or $G$ can ``almost'' be drawn inside a disc in the plane, with $s_1,s_2,t_1,t_2$ appearing on its boundary in this circular order. In order to define the specific notion of the ``almost'' drawing, we first need to define $C$-reductions.

Recall that a \emph{separation} in a graph $G$ is a pair $X,Y$ of sub-graphs of $G$, such that $G=X\cup Y$, and $E(X)\cap E(Y)=\emptyset$. The \emph{order} of the separation is $|V(X)\cap V(Y)|$. Let $C$ be any set of vertices in graph $G$, and let $(X,Y)$ be any separation of $H$ of order at most $3$ with $C\subseteq V(Y)$. Assume further that all vertices of $X\cap Y$ are connected inside  graph $X$. Let $\tilde G$ be the graph obtained from $Y$ by adding the edges connecting all pairs of vertices in $X\cap Y$. Then we say that $\tilde G$ is an \emph{elementary $C$-reduction} of $H$. Observe that if $s_1,t_1,s_2,t_2\in C$, then there is a solution to the Two-Disjoint-Paths problem in $G$ iff there is such a solution in $\tilde G$. This is since at most one of the two paths $P_1,P_2$ may contain the vertices of $X\setminus Y$. We say that a graph $G^*$ is a \emph{$C$-reduction} of $G$ iff it can be obtained from $G$ by a sequence of elementary $C$-reductions. The Two-Disjoint-Paths Theorem states that either there is a feasible solution to the Two-Disjoint-Paths problem in $G$, or some $C$-reduction of $G$, for $C=\set{s_1,s_2,t_1,t_2}$, can be drawn inside a disc in the plane, with the vertices of $C$ appearing on the boundary of the disc, in the circular order $(s_1,s_2,t_1,t_2)$.

More generally, let $C$ be any set of vertices of $G$, and let $\tilde{C}$ be any circular ordering of the vertices of $C$. A $\tilde{C}$-cross in $G$ is a pair $P_1,P_2$ of disjoint paths, whose endpoints are denoted by $s_1,t_1$ and $s_2,t_2$, respectively, such that $s_1,s_2,t_1,t_2\in C$, and they appear in $\tilde C$ in this circular order. A more general version of the Two-Disjoint-Paths Theorem~\cite{Jung,RS90, Seymour06, Shiloach80, Thomassen80} states that either $G$ contains a $\tilde{C}$-cross, or some $C$-reduction of $G$ can be drawn inside a disc in the plane, with the vertices of $C$ appearing on the boundary of the disc, in the order specified by $\tilde C$. In the latter case, we say that graph $G$ is \emph{$\tilde C$-flat}.


Given a wall $W$, let $\Gamma(W)$ be the outer boundary of $W$, and let $C$ be the set of the pegs of $W$. 
We say that $W$ is a \emph{flat wall} in $G$ iff there is a separation $(X,Y)$ of $G$, with $W\subseteq Y$, $X\cap Y$ contained in $\Gamma(W)$, and the set $C$ of all pegs of $W$ contained in $X\cap Y$, such that, if we denote $Z=X\cap Y$, and $\tilde{Z}$ is the ordering of the vertices of $Z$ induced by $\Gamma(W)$, then graph $Y$ is $\tilde{Z}$-flat.

We are now ready to state the Flat Wall Theorem (a more formal statement appears in Section~\ref{sec: prelims}).  There are two functions $f$ and $g$, such that, for any integers $w,t>1$, for any graph $G$ containing a wall of size $f(w,t)$, either (i) $G$ contains a $K_t$-minor, or (ii) there is a set $A$ of at most $g(t)$ vertices in $G$, and a flat wall of size $w$ in $G\setminus A$. A somewhat stronger version of this theorem was originally proved by Robertson and Seymour~\cite{flat-wall-RS}, with $g(t)=O(t^2)$; however, they do not provide explicit bounds on $f(w,t)$. Giannopoulou and Thilikos~\cite{flat-wall-GT} showed a proof of this theorem with $g(t)=t-5$, obtaining the best possible bound on $|A|$, but they also do not provide explicit bounds on $f(w,t)$. Recently, Kawarabayashi, Thomas and Wollan~\cite{KTW} gave a self-contained proof of the theorem  in the following two settings:  with $g(t)=O(t^{24})$ they achieve $f(w,t)=\Theta(t^{24}(t^2+w))$, and with $g(t)= t-5$, they obtain $f(w,t)=w^{2^{\Theta(t^{24})}}$. 
They also provide an efficient algorithm, that, given a wall of size $\Theta(t^{24}(t^2+w))$, either computes a model of the $K_t$-minor in $G$, or returns a set $A$ of at most $O(t^{24})$ vertices, and a  flat wall of size $w$ in  graph $G\setminus A$.

In this paper we improve their bounds to  $f(w,t)=\Theta(t(t+w))$ with $g(t)= t-5$. We note that this is the best possible bound on $|A|$, since one can construct a graph $G$ containing an arbitrarily large wall and no $K_t$-minor, such that at least $t-5$ vertices need to be removed from $G$ in order to obtain a flat wall of size $w$ for any $w>2$ (see Section~\ref{subsec: thm statements}). For the special case where the maximum vertex degree in $G$ is bounded by $D$, we show that, if $G$ contains a wall of size $\Omega(Dt(t+w))$, then either $G$ contains a $K_t$-minor, or there is a flat wall of size $w$ in $G$. This latter setting naturally arises in algorithms for the Edge-Disjoint Paths problem, with $D\leq 4$. Like the proof of Kawarabayashi et al., our proof is self-contained, except for using the Two-Disjoint-Paths Theorem. We also provide efficient algorithms that return either a model of the $K_t$-minor, or a set $A$ and a flat wall of size $w$ in $G\setminus A$. \iflower We complement our latter result by proving an almost matching lower bound: namely, for all integers $w,t>1$, there is a graph $G$, containing a wall of size $\Omega(wt)$, such that the maximum vertex degree in $G$ is $5$, and $G$ contains no flat wall of size $w$, and no $K_t$-minor.\fi

We now briefly summarize our techniques and compare them to the techniques of Kawarabayashi et al.~\cite{KTW}. The proof of the flat wall theorem in~\cite{KTW} proceeds as follows. Let $W$ be the $(R\times R)$ wall in $G$. Kawarabayashi et al. start by showing that either there is a collection $\pset=\set{P_1,\ldots,P_k}$ of $k=\Omega(t^{12})$ disjoint paths in $G$, where each path $P_i$ connects a pair of vertices $x_i,y_i\in W$, and is internally disjoint from $W$, such that the distance between every pair of vertices in set $\set{x_i,y_i\mid 1\leq i\leq k}$ is large in $W$; or there is a set $A$ of $O(t^{24})$ vertices, such that, if $P$ is a path in graph $G\setminus A$ connecting a pair of vertices $x,y\in W$, such that $P$ is internally disjoint from $W$, 
then the distance between the endpoints of $P$ is small in $W$. In the former case, the paths in $\pset$ are exploited, together with the wall $W$ to find a model of the $K_t$-minor in $G$. Assume now that the latter case happens. The wall $W$ is then partitioned into $O(t^{24})$ disjoint horizontal strips of equal height, so at least one of the strips does not contain any vertex of $A$. Denote this strip by $S$. Strip $S$ is in turn partitioned into a large number of disjoint walls, where each wall spans a number of consecutive columns of the strip $S$. They show that either one of the resulting walls contains a large sub-wall that is flat in graph $G\setminus A$; or we can find a model of a $K_t$-minor in $G$.

The starting point of our proof is somewhat different. Instead of working with a square $(R\times R)$ wall, it is more convenient for us to work with a wall whose width $r$ is much larger than its height $h$. In order to achieve this, we start with the $(R\times R)$ wall, and partition it into horizontal strips of height $h$. We then connect these strips in a snake-like manner to obtain one long wall, of width $\Omega(R^2/h)$ and height $h$. This strip is partitioned into $\Omega(R^2/h^2)$ disjoint walls of size $(h\times h)$, that we call \emph{basic walls}. 
 Let $\bset=(B_1,\ldots,B_N)$ be the resulting sequence of basic walls, for $N=\Omega(R^2/h^2)$. For each such basic wall $B_i$, we define a core sub-wall $B'_i$ of $B_i$, obtained from $B_i$ by deleting the top $2t$ and the bottom $2t$ rows. The construction of the long strip and its partition into basic walls and core walls imposes a convenient structure on the wall, that allows us to improve the parameters of the flat wall theorem. Let $\Gamma'_i$ be the outer boundary of the core wall $B'_i$.
A path $P$ connecting a vertex in $B'_i\setminus \Gamma'_i$ to some vertex of $W\setminus B_i'$, such that $P$ is internally disjoint from $W$, is called a \emph{bridge for $B'_i$}. If the other endpoint of $P$ lies in one of the walls $B_{i-1},B_i,B_{i+1}$, then we call it a \emph{neighborhood bridge}.

We show that if we can find a collection of $\Omega(t^2)$ disjoint neighborhood bridges incident on distinct core walls, or a collection of $\Omega(t^2)$  disjoint non-neighborhood bridges incident on distinct core walls, then we can find a $K_t$-minor in $G$. Our constructions of the $K_t$-minors are  more efficient than those in~\cite{KTW}, in that they require a much smaller number of disjoint bridges. This is achieved by exploiting the convenient structure of a long wall partitioned into basic walls, and several new ways to embed a clique minor into $G$. In a theorem somewhat similar to that of~\cite{KTW}, we show that we can either find a collection of $\Omega(t^2)$ disjoint non-neighborhood bridges incident on distinct core walls, or there is a set $A$ of $O(t^2)$ vertices, and a large subset $\bset'$ of basic walls, such that for each basic wall $B_i\in \bset'$, every bridge for the corresponding core wall $B'_i$ in graph $G\setminus A$ is a neighborhood bridge. In the former case, we use the disjoint non-neighborhood bridges to find a $K_t$-minor. Assume now that the latter case happens.  If many of the walls in $\bset'$ have neighborhood bridges incident on their corresponding core walls, then we construct a collection of $\Omega(t^2)$ disjoint neighborhood bridges incident on distinct core walls, which implies that $G$ contains a $K_t$-minor. Otherwise, for each wall $B_i\in \bset'$, we try to find a pair $P_i,Q_i$ of disjoint paths, with $P_i$ connecting the top left corner of $B'_i$ to its bottom right corner, and $Q_i$ connecting its top right corner to its bottom left corner, such that $P_i,Q_i$ are internally disjoint from $W$. We show that either we can find, for each $B_i\in \bset'$, the desired pair $(P_i,Q_i)$ of paths, such that all these paths are disjoint, or one of the walls $B_i\in \bset'$ contains a large sub-wall that is flat in $G\setminus A$ (a more careful analysis than the one described here leads to an improved bound of $|A|\leq t-5$). In the former case, we again construct a $K_t$-minor, by exploiting the paths $\set{P_i,Q_i}_{B_i\in\bset'}$, while in the latter case we obtain the desired flat wall in $G\setminus A$.
The main difference of our approach from that of~\cite{KTW} is (1) converting the square wall $W$ into a long strip $S$, which is partitioned into smaller square basic walls, and defining a core wall for each basic wall. Performing this step right at the beginning of the algorithm imposes a convenient structure on the wall $W$ that makes the analysis easier; (2) we propose more different ways to embed a $K_t$-minor into $G$, which in turn lead to improved parameters; and (3) careful analysis that allows us to lower $|A|$ from $\Theta(t^2)$ to $t-5$, without increasing the size of the wall we start from.

\paragraph{Organization} We start with preliminaries in Section~\ref{sec: prelims}. Since the formal statements of our main results require defining some graph-theoretic notation, these statements can also be found in Section~\ref{sec: prelims}. In Sections~\ref{sec: two graphs}--\ref{sec: bridges, core walls, wall types} we lay the foundations for proving both upper bounds: in Section~\ref{sec: two graphs} we describe several families of graphs such that, if $G$ contain any such graph as a minor, then it must contain a $K_t$-minor. In Section~\ref{sec: cutting the wall} we describe an algorithm that turns a square $R\times R$ wall into a ``long'' wall of height $h$ and width $R^2/h$. This long wall is then partitioned into $R^2/h^2$ basic walls of size $h\times h$. In Section~\ref{sec: bridges, core walls, wall types} we partition the basic walls into several types, and show how to handle most of these types. Sections~\ref{sec: proof of main week thm} and \ref{sec: proof of main strong theorem} complete the proofs of the two upper bounds, where Section~\ref{sec: proof of main week thm} focuses on the small-degree case, and \ref{sec: proof of main strong theorem} handles general graphs. We provide the proof of our lower bound in Section~\ref{sec: lower bound}.

\section{Preliminaries and Statements of the Main Theorems}\label{sec: prelims}
Throughout the paper, we use two parameters: $t$ and $w$, and our goal is to either find a $K_t$-minor or a flat wall of size $(w\times w)$. We denote $T=t(t-1)/2$ throughout the paper.

We say that a path $P$ is \emph{internally disjoint} from a set $U$ of vertices, if no vertex of $U$ serves as an inner vertex of $P$.
We say that two paths $P,P'$ are \emph{internally disjoint}, iff for each $v\in V(P)\cap V(P')$, $v$ is an endpoint of both paths.

Given a graph $G$ and three sets $A,X,B$ of vertices of $G$, we say that $X$ \emph{separates} $A$ from $B$ iff $G\setminus X$ contains no paths from the vertices of $A\setminus X$ to the vertices of $B\setminus X$.

\begin{definition} A \emph{separation} in graph $G$ is a pair $G_1,G_2$ of subgraphs of $G$, such that $G=G_1\cup G_2$ and $E(G_1)\cap E(G_2)=\emptyset$. The \emph{order of the separation} is $|V(G_1)\cap V(G_2)|$.
\end{definition}

Notice that if $(G_1,G_2)$ is a separation of $G$, then there are no edges in $G$ between $V(G_1\setminus G_2)$ and $V(G_2\setminus G_1)$.

\begin{definition} Given a graph $G$ and a path $P$ in $G$, we say that $P$ is a $2$-path iff every inner vertex of $P$ has degree $2$ in $G$. In other words, $P$ is an induced path in $G$. We say that $P$ is a \emph{maximal $2$-path} iff the degree of each of the two endpoints of $P$ is not $2$.
\end{definition}

\subsection{Minors and Models}
We say that a graph $H$ is a \emph{minor} of a graph $G$, iff $H$ can be obtained from $G$ by a series of edge deletion, vertex deletion, and edge contraction operations. Equivalently, $H$ is a minor of $G$ iff there is a map $f:V(H)\rightarrow 2^{V(G)}$ assigning to each vertex $v\in V(H)$ a subset $f(v)$ of vertices of $G$, such that:

\begin{itemize}
\item For each $v\in V(H)$, the sub-graph of $G$ induced by $f(v)$ is connected;

\item If $u,v\in V(H)$ and $u\neq v$, then $f(u)\cap f(v)=\emptyset$; and

\item For each edge $e=(u,v)\in E(H)$, there is an edge in $E(G)$ with one endpoint in $f(v)$ and the other endpoint in $f(u)$.
\end{itemize}
A map $f$ satisfying these conditions is called \emph{a model of $H$ in $G$}.
The following observation follows easily from the definition of minors.

\begin{observation}\label{obs: transitivity of minors}
If $H$ is a minor of $G$ and $H'$ is a minor of $H$ then $H'$ is a minor of $G$.
\end{observation}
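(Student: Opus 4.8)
The plan is to give the immediate argument from the operational definition of a minor, and then, for completeness, spell out the proof in the language of models. Recall that ``$H$ is a minor of $G$'' means $H$ can be obtained from $G$ by a finite sequence of operations, each being an edge deletion, a vertex deletion, or an edge contraction; likewise $H'$ is obtained from $H$ by such a sequence. Concatenating the sequence that transforms $G$ into $H$ with the sequence that transforms $H$ into $H'$ produces a single finite sequence of operations of these same three types transforming $G$ into $H'$. Hence $H'$ is a minor of $G$, and the statement follows.

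If one prefers to argue directly with models, let $f:V(H)\to 2^{V(G)}$ be a model of $H$ in $G$ and $g:V(H')\to 2^{V(H)}$ a model of $H'$ in $H$, and define $h:V(H')\to 2^{V(G)}$ by $h(v)=\bigcup_{u\in g(v)}f(u)$. I would verify the three model conditions in turn. Disjointness is immediate: for distinct $v,v'\in V(H')$ we have $g(v)\cap g(v')=\emptyset$, so any $u\in g(v)$ and $u'\in g(v')$ are distinct, whence $f(u)\cap f(u')=\emptyset$ and therefore $h(v)\cap h(v')=\emptyset$. For an edge $(v,v')\in E(H')$, the model $g$ supplies an edge of $H$ joining some $u\in g(v)$ to some $u'\in g(v')$; since $(u,u')\in E(H)$, the model $f$ supplies an edge of $G$ with one endpoint in $f(u)\subseteq h(v)$ and the other in $f(u')\subseteq h(v')$. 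The only point requiring a moment's thought is that the subgraph of $G$ induced by $h(v)$ is connected: here one uses that $g(v)$ induces a connected subgraph of $H$, fixes a spanning tree of that subgraph, and observes that each tree edge $(u,u')$ with $u,u'\in g(v)$ is realized by an edge of $G$ between $f(u)$ and $f(u')$; since each $f(u)$ is itself connected in $G$, gluing these pieces along the spanning tree yields a connected subgraph on $h(v)$.

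I do not expect any genuine obstacle: the operational argument is one line, and in the model-based version the connectivity check is the only nonroutine ingredient, and it is standard. I would present the operational proof as the main argument and mention the model composition $h(v)=\bigcup_{u\in g(v)}f(u)$ as the explicit witness, since that composed model is what later sections implicitly rely on.
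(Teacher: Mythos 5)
Your proof is correct, and the operational argument (concatenating the two sequences of deletions and contractions) is exactly the reason the paper states this as an observation with no proof at all. The model-based composition $h(v)=\bigcup_{u\in g(v)}f(u)$ is a valid and standard elaboration, but nothing beyond the one-line argument is needed.
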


It is sometimes more convenient to use embeddings instead of models for graph minors. A valid embedding of a graph $H$ into a graph $G$ is a map $\phi$, mapping every vertex $v\in V(H)$ to a connected sub-graph $\phi(v)$ of $G$, such that, if $u,v\in V(H)$ with $u\neq v$, then $\phi(v)\cap \phi(u)=\emptyset$. Each edge $e=(u,v)\in E(H)$ is mapped to a path $\phi(e)$ in $G$, such that one endpoint of $\phi(e)$ belongs to $V(\phi(v))$, another endpoint to $V(\phi(u))$, and the path does not contain any other vertices of $\bigcup_{v'\in V(H)}\phi(v')$. We also require that all paths in $\set{\phi(e)\mid e\in E(H)}$ are internally disjoint. A valid embedding of $H$ into $G$ can be easily converted into a model of $H$ in $G$, and can be used to certify that $H$ is a minor of $G$.

\subsection{Walls and Grids}
In this part we formally define grid graphs and wall graphs. We note that Kawarabayashi et al.~\cite{KTW} provide an excellent overview and intuitive definitions for all terminology needed in the statement of the Flat Wall Theorem. Many of our definitions and explanations in this section follow their paper.

We start with a grid graph. A grid of height $h$ and width $r$ (or an $(h\times r)$-grid), is a graph, whose vertex set is:
$\set{v(i,j)\mid 1\leq i\leq h;1\leq j\leq r}$.
The edge set consists of two subsets: a set of \emph{horizontal edges} $E_1=\set{(v(i,j),v(i,j+1))\mid 1\leq i\leq h; 1\leq j<r}$; and a set of \emph{vertical edges} $E_2=\set{(v(i,j),v(i+1,j))\mid 1\leq i<h; 1\leq j\leq r}$. The sub-graph induced by $E_1$ consists of $h$ disjoint paths, that we refer to as \emph{the rows of the grid}. The $i$th row, that we denote by $R_i$, is the row incident on $v(i,1)$. Similarly, the sub-graph induced by $E_2$ consists of $r$ disjoint paths, that we refer to as \emph{the columns of the grid}. The $j$th column, that is denoted by $C_j$, is the column starting from $v(1,j)$. Geometrically, we view the rows $R_1,\ldots,R_h$ as ordered from top to bottom, and the columns $C_1,\ldots,C_r$ as ordered left-to-right in the standard drawing of the grid.
We say that vertices $v(i,j)$ and $v(i',j')$ of the grid are separated by at least $z$ columns iff $|j-j'|>z$.

We now proceed to define a wall graph $W$. In order to do so, it is convenient to first define an \emph{elementary wall} graph, that we denote by $\hat W$.
To construct an elementary wall $\hat W$ of height $h$ and width $r$ (or an $(h\times r)$-elementary wall), we start from a grid of height $h$ and width $2r$. Consider some column $C_j$ of the grid, for $1\leq j\leq r$, and let $e_{1}^j,e_{2}^j,\ldots,e_{h-1}^j$ be the edges of $C_j$, in the order of their appearance on $C_j$, where $e_{1}^j$ is incident on $v(1,j)$. If $j$ is odd, then we delete from the graph all edges $e_{i}^j$ where $i$ is even. If $j$ is even, then we delete from the graph all edges $e_{i}^j$ where $i$ is odd. We process each column $C_j$ of the grid in this manner, and in the end delete all vertices of degree $1$. The resulting graph is an elementary wall of height $h$ and width $r$, that we denote by $\hat W$ (See Figure~\ref{fig: wall}). 

Let $E'_1$ be the set of edges of $\hat W$ that correspond to the horizontal edges of the original grid, and let $E'_2$ be the set of the edges of $\hat W$ that correspond to the vertical edges of the original grid, so $E_1'= E_1,E_2'\subseteq E_2$. Notice that as before, the sub-graph of $\hat W$ induced by $E'_1$ defines a collection of $h$ node-disjoint paths, that we refer to as the rows of $\hat W$. We denote these rows by $R_1,\ldots,R_h$, where for $1\leq i\leq h$, $R_i$ is incident on $v(i,1)$. (It will be clear from context whether we talk about the rows of a wall graph or of a grid graph). Let $V_1$ denote the set of all vertices in the first row of $\hat W$, and $V_h$ the set of vertices in the last row of $\hat W$. There is a unique set $\cset$ of $r$ node-disjoint paths, where each path $C\in \cset$ starts at a vertex of $V_1$, terminates at a vertex of $V_h$, and is internally disjoint from $V_1\cup V_h$. We refer to these paths as the columns of $\hat W$. We order these columns from left to right, and denote by $C_j$ the $j$th column in this ordering, for $1\leq j\leq r$. The sub-graph $Z=R_1\cup C_1\cup R_h\cup C_r$ of $\hat W$ is a simple cycle, that we call the \emph{outer bondary of $W$}. We now define the four corners of the wall. The top left corner $a$ is the unique vertex in the intersection of $R_1$ and $C_1$; the top right corner $b$ is the unique vertex in the intersection of $R_1$ and $C_r$. Similarly, the bottom left and right corners, $d$ and $c$ are defined by $R_h\cap C_1$ and $R_h\cap C_r$, respectively (see Figure~\ref{fig: wall}). All vertices of $Z$ that have degree $2$ are called the \emph{pegs} of $\hat W$. 

We say that a graph $W$ is \emph{a wall of height $h$ and width $r$}, or an $(h\times r)$-wall, iff it is a subdivision of the elementary wall $\hat W$ of height $h$ and width $r$. Notice that in this case, there is a natural mapping $f:V(\hat W)\rightarrow V(W)$, such that for $u\neq v$, $f(u)\neq f(v)$, and for each edge $e=(u,v)\in E(\hat W)$, there is a path $P_e$ in $W$ with endpoints $f(u),f(v)$, such that all paths $\set{P_e\mid e\in E(\hat W)}$ are internally disjoint from each other, and do not contain the vertices of $\set{f(u')\mid u'\in V(\hat W)}$ as inner vertices. We call such a mapping \emph{a good $(\hat W,W)$-mapping}. The corners of $W$ are defined to be the vertices to which the corners of $\hat W$ are mapped, and the pegs of $W$ are the vertices to which the pegs of $\hat W$ are mapped. Notice that the mapping $f$ is not unique, and so the choice of the corners and the pegs of $W$ is not fixed. For convenience, throughout this paper, the paths $P_e$ of $W$ corresponding to the horizontal edges of $\hat W$ are called \emph{blue paths}, and the paths $P_e$ corresponding to the vertical edges of $\hat W$ are called \emph{red paths}. For each $1\leq i\leq h$ and $1\leq j\leq r$, the $i$th row of $W$, $R_i$, and the $j$th column of $W$, $C_j$, are naturally defined as the paths  corresponding (via $f$) to the $i$th row and $j$th column of $\hat W$, respectively. A $(w\times w)$-wall is sometimes called \emph{a wall of size $w$}.

\begin{definition}
Let $W',W$ be two walls, where $W'$ is a sub-graph of $W$. We say that $W'$ is a \emph{sub-wall of $W$} iff every row of $W'$ is a sub-path of a row of $W$, and every column of $W'$ is a sub-path of a column of $W$. 
\end{definition}

Notice that if a wall $W$ is a sub-division of an elementary wall $\hat W$, and we are given some $(\hat W,W)$-good mapping $f: V(\hat W)\rightarrow V(W)$, then any sub-wall $\hat W'$ of $\hat W$ naturally defines a sub-wall $W'$ of $W$: wall $W'$ is the union of all paths $P_e$ for $e\in E(\hat W')$. Moreover, since $f$ is fixed, the corners and the pegs of $W'$ are uniquely defined.

We will often work with a special type of sub-walls of a given wall $W$ --- sub-walls spanned by contiguous sets of rows and columns of $W$. We formally define such sub-walls below.

Consider an $(h\times r)$ elementary-wall $\hat W$, and let $1\leq i_1<i_2\leq h$ be integers. We define a sub-wall of $\hat W$ spanned by rows $(R_{i_1},\ldots,R_{i_2})$ to be the sub-graph of $\hat W$ induced by $\bigcup_{i=i_1}^{i_2}V(R_i)$.
Similarly, for integers $1\leq j_1<j_2\leq r$ we define a sub-wall of $\hat W$ spanned by columns $(C_{j_1},\ldots,C_{j_2})$ to be the graph obtained from $\hat W$, by deleting all vertices in $\left(\bigcup_{j=1}^{j_1-1}V(C_j)\right )\cup\left(\bigcup_{j=j_2+1}^{r}V(C_j)\right )$, and deleting all vertices whose degree is less than $2$ in the resulting graph.
The sub-wall $\hat W''$ of $\hat W$  spanned by rows $(R_{i_1},\ldots,R_{i_2})$  and columns $(C_{j_1},\ldots,C_{j_2})$ is computed as follows: let $\hat W'$ be sub-wall of $\hat W$ spanned by rows $(R_{i_1},\ldots,R_{i_2})$. Then $\hat W''$ is the sub-wall of $\hat W'$ 
 spanned by columns $(C_{j_1},\ldots,C_{j_2})$.

Finally, assume we are given any $(h\times r)$-wall $W$, the corresponding $(h\times r)$-elementary wall $\hat W$ and a $(\hat W,W)$-good mapping $f: V(\hat W)\rightarrow V(W)$. For integers $1\leq i_1<i_2\leq h$, and $1\leq j_1<j_2\leq r$, we define the sub-wall $W'$ of $W$ spanned by rows $(R_{i_1},\ldots,R_{i_2})$ and columns $(C_{j_1},\ldots,C_{j_2})$, as follows. Let $\hat W'$ be the sub-wall of $\hat W$ spanned by 
 rows $(R_{i_1},\ldots,R_{i_2})$ and columns $(C_{j_1},\ldots,C_{j_2})$. We then let $W'$ be the unique sub-wall of $W$ corresponding to $\hat W'$ via the mapping $f$. That is, $W'$ is the union of all paths $P_e$ for $e\in E(\hat W')$. As observed before, since the mapping $f$ is fixed, the corners and the pegs of $W'$ are uniquely defined. Sub-walls of $W$ spanned by sets of consecutive rows, and sub-walls spanned by sets of consecutive columns are defined similarly.

From our definition of an elementary wall, it is clear that the $(h\times 2r)$-grid contains the $(h\times r)$-elementary wall as a minor. It is also easy to see that an $(h\times r)$-wall $W$ contains the $(h\times r)$-grid $G$ as a minor: let $\hat W$ be the $(h\times r)$-elementary wall, and assume that we are given some $(\hat W,W)$-good mapping $f:V(\hat W)\rightarrow V(W)$. Clearly, $\hat W$ is a minor of $W$. For every $1\leq i\leq h$, $1\leq j\leq r$, let $P(i,j)=R_i\cap C_j$, where $R_i$ and $C_j$ are the $i$th row and the $j$th column of $\hat W$, respectively. We contract all edges in $P(i,j)$. Once we process all pairs $R_i,C_j$ in this manner, we obtain the $(h\times r)$-grid $\tilde G$. We call $\tilde G$ \emph{a contraction} of $W$. Notice that if the mapping $f:V(\hat W)\rightarrow V(W)$ is fixed, then this contraction is uniquely defined, and so is the model of $\tilde G$ in $W$.
 
\subsection{Linkedness}  
We now turn to define the notion of $t$-linkedness that we use extensively in our proof.
\begin{definition}
For any integer $t>0$, we say that two disjoint sets $X,Y$ of vertices of are $t$-linked in graph $G$, iff for any pair $X'\subseteq X$, $Y'\subseteq Y$ of vertex subsets, with $|X'|=|Y'|\leq t$, there is a set of $|X'|$ node-disjoint paths in graph $G$, connecting the vertices of $X'$ to the vertices of $Y'$.
\end{definition}

A useful feature of grid graphs is that the sets of vertices in the first and the last columns of the grid are $t$-linked, as long as $t$ is no larger than the smaller of the dimensions of the grid. We show this in the following claim.

\begin{claim}\label{claim: t-linkedness of grid}
Let $G$ be an $(h\times r)$ grid, $t\leq \min\set{h,r}$ an integer, $X$ the set of all vertices on the first column of $G$ and $Y$ the set of all vertices on the last column of $G$. Then $X$ and $Y$ are $t$-linked in $G$.
\end{claim}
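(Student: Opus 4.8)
The plan is to prove the statement by induction on $t$, peeling off one path at a time while maintaining a sub-grid in which the remaining terminals still lie on the first and last columns. Fix $X'\subseteq X$ and $Y'\subseteq Y$ with $|X'|=|Y'|=s\leq t\leq\min\set{h,r}$; write the vertices of $X'$ as $v(i_1,1),\dots,v(i_s,1)$ with $i_1<\dots<i_s$, and those of $Y'$ as $v(j_1,r),\dots,v(j_s,r)$ with $j_1<\dots<j_s$. The natural idea is to route the ``topmost'' demand pair, $v(i_1,1)$ to $v(j_1,r)$, along a monotone staircase path that hugs the top of the grid: go along row $R_{i_1}$ as far right as needed, then (if $j_1<i_1$) up column $C_r$, or (if $j_1>i_1$) down some column and then continue. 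To make this clean, I would route the first path entirely within the first $\max\set{i_1,j_1}$ rows using an $L$-shaped (or at most two-bend) path that stays weakly above all of rows $R_{i_2},\dots,R_{i_s}$ and $R_{j_2},\dots,R_{j_s}$; since $i_1$ is the smallest index in $X'$ and $j_1$ the smallest in $Y'$, such a path can be chosen so that after deleting its vertices, the remaining $s-1$ terminals on each side still all lie on the first, resp.\ last, column of a grid of height $h-1$ — namely the grid induced by rows $R_{\max\set{i_1,j_1}+1},\dots$ together with the unused top portion, or more simply an $(h-1)\times r$ sub-grid obtained by deleting one row.

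The cleanest version of the inductive step is this: consider the topmost used row, say row $R_a$ where $a=\min\set{i_1,j_1}$. If $i_1=j_1=a$, route $P_1$ straight along row $R_a$ from $v(a,1)$ to $v(a,r)$; then delete row $R_a$, obtaining an $(h-1)\times r$ grid in which the remaining terminals $v(i_2,1),\dots$ and $v(j_2,r),\dots$ lie on its first and last columns, and apply induction with $t$ replaced by $t-1$ (note $s-1\leq t-1\leq\min\set{h-1,r}$). If $i_1<j_1$, route $P_1$ from $v(i_1,1)$ right along row $R_{i_1}$ to $v(i_1,1)$'s staircase: go right to column $C_1$... — more carefully, go right along $R_{i_1}$ to $v(i_1,k)$ for a suitable $k$, down column $C_k$ to row $R_{j_1}$, then right along $R_{j_1}$ to $v(j_1,r)$, choosing $k$ so the path stays within rows $R_{i_1},\dots,R_{j_1}$ and — since these indices are all strictly below nothing relevant and strictly above $i_2,\dots,i_s$ (as $i_1$ is smallest) but \emph{not} necessarily above $j_2,\dots$ — I instead handle both endpoints' smallest index together: delete the single row $R_{a}$ with $a=\min\set{i_1,j_1}$ only after verifying $P_1$ meets the boundary columns only at its endpoints. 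The symmetric case $j_1<i_1$ is identical with rows and the roles of $X',Y'$ swapped.

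The step I expect to be the main obstacle is choosing the staircase for $P_1$ so that (a) it is vertex-disjoint from all remaining terminals, and (b) after removing it the leftover graph is (isomorphic to, or contains as the relevant substructure) a smaller grid in which the remaining terminals are correctly positioned. The subtlety is that a generic $L$-shaped path removes vertices from the interior, not a whole row, so the leftover need not be a grid. The fix is to always route $P_1$ so that it uses \emph{all} of some row $R_a$ (for $a$ the smallest index appearing among $i_1,j_1$): if $i_1\leq j_1$ take $a=i_1$ and go along $R_{i_1}$ rightward to column $C_{r}$... no — to a bend column, but crucially include the entire segment of $R_{i_1}$ only on the left portion, then drop down. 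To genuinely delete a full row I use monotonicity: route $P_1$ through $v(a,1),v(a,2),\dots,v(a,r)$ entirely (possible when $i_1=j_1$), and in the unequal case first observe we may assume $i_1=j_1$ by a separate easy argument — actually the standard trick is to route the path for the pair whose indices $i_1,j_1$ are \emph{both} extreme in a way that frees a corner. I would therefore present the bend-path explicitly: it occupies rows $R_{\min}$ through $R_{\max}$ and exactly two columns-worth of vertical vertices, argue it avoids the remaining $2(s-1)$ terminals because those sit on $C_1$ or $C_r$ at rows outside $[\,i_1, j_1\,]$ on at least one side, then delete row $R_{\min}$ and one extra row if needed, and induct — a bookkeeping argument, but routine once the disjointness from remaining terminals is nailed down. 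I do not anticipate any deeper difficulty; the whole claim is a warm-up, and the induction on $\min\set{h,r,t}$ closes it.
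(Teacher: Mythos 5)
Your plan has a genuine gap in the inductive step, and it is exactly the one you flag yourself: after routing the first staircase path $P_1$ in the case $i_1\neq j_1$, the graph that remains is not a grid (nor does it obviously contain an $(h-1)\times r$ sub-grid with the surviving terminals on its first and last columns), so the induction hypothesis cannot be invoked. Your proposed repairs are mutually inconsistent --- you alternately suggest deleting row $R_{\min\set{i_1,j_1}}$, ``reducing to the case $i_1=j_1$ by a separate easy argument,'' and ``freeing a corner'' --- and none of them is carried out. The difficulty is real: the vertical segment of $P_1$ occupies an interval of an interior column, so the leftover region is staircase-shaped, and to close the induction you would need either to strengthen the invariant to such staircase regions (tracking a monotone boundary curve below which the remaining routing takes place), or to give one explicit global construction of all $s$ paths at once and verify pairwise disjointness. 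Either is doable, but it is precisely the ``bookkeeping'' you defer, and as written the proof does not go through. A smaller point: the claim only asks for \emph{some} set of $s$ disjoint paths from $X'$ to $Y'$, not for the order-preserving pairing $v(i_k,1)\mapsto v(j_k,r)$; committing to that pairing at the outset makes your construction look harder than it needs to (even though planarity ultimately forces it).

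The paper avoids all of this with a short Menger-type counting argument. If there were no set of $t'$ disjoint paths from $X'$ to $Y'$, some set $Z$ of $t'-1$ vertices would separate them. But the vertices of $X'$ lie on $t'$ distinct rows and those of $Y'$ on $t'$ distinct rows, so some row meeting $X'$ and some row meeting $Y'$ contain no vertex of $Z$; and since the grid has $r\geq t\geq t'$ columns, some entire column also avoids $Z$. Concatenating these two rows with that column produces an $X'$--$Y'$ path in $G\setminus Z$, a contradiction. I would recommend switching to this separator argument rather than patching the induction; if you want the paths themselves, a max-flow computation recovers them, so nothing constructive is lost.
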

\begin{proof}
Let $X'\subseteq X$, $Y'\subseteq Y$ be any pair of vertex subsets with $|X'|=|Y'|=t'\leq t$. We claim that there is a set $\pset$ of $t'$ disjoint paths connecting $X'$ to $Y'$ inside $G$. Assume otherwise. Then there is a set $Z$ of $t'-1$ vertices separating $X'$ from $Y'$. Let $\rset_{X'}$ be the set of all rows on which the vertices of $X'$ lie, and define $\rset_{Y'}$ similarly for $Y'$. Then, since $|\rset_{X'}|=|\rset_{Y'}|=t'$, at least one row $R\in \rset_{X'}$, and at least one row $R'\in \rset_{Y'}$ contain no vertices of $Z$. Moreover, since $G$ contains $r\geq t\geq t'$ columns, at least one column $C$ of $G$ contains no vertices of $Z$. Combining $R,R'$ and $C$, we obtain a path connecting a vertex of $X'$ to a vertex of $Y'$ in $G\setminus Z$, a contradiction.
\end{proof}

A similar claim holds for wall graphs, except that we need to be more careful in defining the sets $X$ and $Y$ of vertices.
The proof of the following claim is identical to the proof of Claim~\ref{claim: t-linkedness of grid}.
\begin{claim}\label{claim: t-linkedness of wall}
Let $\hat W$ be an $(h\times r)$-elementary wall, $t\leq \min\set{h,r}$ a parameter, $X$ a set of vertices lying in the first column of $G$ and $Y$ a set of vertices lying in the last column of $G$, such that for each row $R_i$ of $\hat W$, $|X\cap R_i|\leq 1$ and $|Y\cap R_i|\leq 1$. Then $X$ and $Y$ are $t$-linked in $G$.
\end{claim}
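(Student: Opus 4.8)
The plan is to follow verbatim the template of the proof of Claim~\ref{claim: t-linkedness of grid}, combining Menger's theorem with a pigeonhole count. Fix any $X'\subseteq X$, $Y'\subseteq Y$ with $|X'|=|Y'|=t'\le t$; we must exhibit $t'$ node-disjoint paths from $X'$ to $Y'$ in $\hat W$. Suppose no such family exists. Then by Menger's theorem there is a vertex set $Z$ with $|Z|=t'-1$ separating $X'$ from $Y'$, i.e.\ $\hat W\setminus Z$ contains no path from $X'\setminus Z$ to $Y'\setminus Z$. I will contradict this by producing one such path.

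The counting step is exactly where the hypothesis on $X$ and $Y$ enters. Since $|X\cap R_i|\le 1$ for every row $R_i$, the $t'$ vertices of $X'$ lie on $t'$ distinct rows; as $|Z|=t'-1$, at least one of these rows, say $R$, is disjoint from $Z$, and in particular the vertex $x\in X'\cap R$ satisfies $x\notin Z$. Symmetrically, there is a row $R'$ disjoint from $Z$ with a vertex $y\in (Y'\cap R')\setminus Z$. Finally, since the $r$ columns $C_1,\dots,C_r$ of $\hat W$ are pairwise node-disjoint and $|Z|=t'-1<t\le r$, some column $C$ is disjoint from $Z$.

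It then remains to stitch $R$, $C$, $R'$ into an $x$--$y$ walk avoiding $Z$. Here I would record the one structural fact that differs superficially from the grid case but holds for the same reason: in $\hat W$ every row meets every column, because each column runs from the top row to the bottom row and so contains a vertex in every row $R_i$, which is a vertex of $R_i$ since all horizontal edges of the underlying grid survive in $\hat W$. Hence $R\cap C$ and $R'\cap C$ are nonempty. Concatenating the sub-path of $R$ from $x$ to a vertex of $R\cap C$, a sub-path of $C$ joining that vertex to a vertex of $R'\cap C$, and the sub-path of $R'$ from there to $y$, we obtain a walk---hence a path---from $x\in X'\setminus Z$ to $y\in Y'\setminus Z$ lying entirely in $R\cup C\cup R'\subseteq \hat W\setminus Z$, contradicting the choice of $Z$. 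Therefore the $t'$ disjoint paths exist, and $X,Y$ are $t$-linked in $\hat W$.

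As for difficulty: the argument is routine and, as the statement already notes, essentially identical to Claim~\ref{claim: t-linkedness of grid}; the only place needing a sentence of care is the ``every row meets every column'' property of elementary walls, since the columns are zig-zag paths rather than straight ones. Everything else is Menger plus pigeonhole on rows and on columns.
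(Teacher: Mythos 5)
Your proof is correct and follows exactly the route the paper intends: the paper explicitly states that the proof of this claim is identical to that of Claim~\ref{claim: t-linkedness of grid}, i.e.\ Menger plus pigeonhole on the rows of $X'$, the rows of $Y'$, and the columns. Your added remark that every (zig-zag) column of the elementary wall meets every row is precisely the one point that needs checking when porting the grid argument, and you justify it correctly.
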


 \subsection{$C$-Reductions  and Flat Walls}

 \begin{definition}
Let $G$ be a graph, $X\subseteq V(G)$, and let $(A,B)$ be a separation of $G$ of order at most $3$ with $X\subseteq A$. Moreover, assume that the vertices of $A\cap B$ are connected in $B$. Let $H$ be the graph obtained from $G[A]$ by adding an edge connecting every pair of vertices in $A\cap B$. We say that $H$ is an elementary $X$-reduction in $G$, determined by $(A,B)$. We say that a graph $J$ is an $X$-reduction of $G$ if it can be obtained from $G$ by a series of elementary $X$-reductions.
\end{definition}

We need a definition of $C$-flat graphs. Intuitively, let $G$ be any graph, and let $C$ be any simple cycle of $G$. Suppose there is some $C$-reduction $H$ of $G$, such that $H$ is a planar graph, and there is a drawing of $H$ in which $C$ bounds its outer face. Then we say that $G$ is $C$-flat. Following is an equivalent way to define $C$-flat graphs, due to~\cite{KTW}, which is somewhat more convenient to work with.

\begin{definition} Let $G$ be a graph, and let $C$ be a cycle in $G$. We say that $G$ is $C$-flat if there exist subgraphs $G_0,G_1,\ldots,G_k$ of $G$, and a plane graph $\tilde G$, such that:

\begin{itemize}
\item $G=G_0\cup G_1\cup\cdots\cup G_k$, and the graphs $G_0,G_1,\ldots,G_k$ are pairwise edge-disjoint;

\item $C$ is a subgraph of $G_0$.

\item $G_0$ is a subgraph of $\tilde G$, with $V(\tilde G)=V(G_0)$. Moreover, $\tilde G$ is a plane graph, and the cycle $C$ bounds its outer face;

\item For all $1\leq i\leq k$, $|V(G_i)\cap V(G_0)|\leq 3$. 

\begin{itemize}
\item If $|V(G_i)\cap V(G_0)|=2$, then $u$ and $v$ are adjacent in $\tilde G$;
\item If $V(G_i)\cap V(G_0)=\set{u,v,w}$, then some finite face of $\tilde G$ is incident with $u,v,w$ and no other vertex;
\end{itemize}

\item For all $1\leq i\neq j\leq k$, $V(G_i)\cap V(G_j)\subseteq V(G_0)$.
\end{itemize}
\end{definition}

We are now ready to define a flat wall.
\begin{definition} Let $G$ be a graph, and let $W$ be a wall in $G$ with outer boundary $D$. Suppose there is a separation $(A,B)$ of $G$, such that $A\cap B\subseteq V(D)$, $V(W)\subseteq B$, and there is a choice of pegs of $W$, such that every peg belongs to $A$. If some $A\cap B$-reduction of $G[B]$ can be drawn in a disc with the vertices of $A\cap B$ drawn on the boundary of the disc in the order determined by $D$, then we say that the wall $W$ is flat in $G$.
\end{definition}

\subsection{Statements of the Main Theorems}\label{subsec: thm statements}
We need one more definition in order to state our main theorems. Let $W$ be a wall in some graph $G$, and assume that $G$ contains a $K_t$-minor. Recall that a model of the $K_t$-minor in $G$ maps each vertex $v\in V(K_t)$ to a subset $f(v)$ of vertices of $G$. We say that the $K_t$-minor is \emph{grasped} by the wall $W$ iff for each $v\in V(K_t)$, $f(v)$ intersects at least $t$ rows of $W$, or at least $t$ columns of $W$.
We will use the following simple observation.

\begin{observation}\label{obs: clique grasped by wall}
Let $W$ be an $(h\times r)$-wall in a graph $H$, and $\tilde G$ an $(h\times r)$-grid, such that $\tilde G$ is a contraction of $W$. Suppose we are given a model $f(\cdot)$ of a $K_t$-minor in $\tilde G$, such that for each $v\in V(K_t)$, $f(v)$ intersects at least $t$ rows or at least $t$ columns of $\tilde G$. Then there is a model of $K_t$ in $H$ grasped by $W$.
\end{observation}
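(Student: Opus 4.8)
The plan is to compose the given $K_t$-minor model $f$ in $\tilde G$ with the natural $\tilde G$-minor model of $W$ that is ``read off'' from the contraction, and then to keep track of which rows and columns of $W$ are hit. Concretely, fix the good $(\hat W,W)$-mapping underlying $W$; as remarked right after the definition of a contraction, this fixes the contraction of $W$ onto $\tilde G$, and hence a model $g(\cdot)$ of $\tilde G$ in $W$: for each grid vertex $v(i,j)$ let $g(v(i,j))$ be the subgraph of $W$ consisting of all vertices and edges of $W$ that are contracted onto $v(i,j)$. Explicitly, $g(v(i,j))$ is the union of the blue paths $P_e$ of $W$ over the edges $e$ of $P(i,j)=R_i\cap C_j$ in $\hat W$, together with the images in $W$ of the vertices of $P(i,j)$ (recall that $P(i,j)$ is a single vertex of $\hat W$ or a single horizontal edge). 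Since each $P(i,j)$ is a connected subgraph of $\hat W$, each $g(v(i,j))$ is connected; distinct $P(i,j)$ are vertex-disjoint in $\hat W$, so the $g(v(i,j))$ are pairwise vertex-disjoint in $W$; and each edge of $\tilde G$ is the image of an edge of $\hat W$ joining two distinct $P(i,j)$, which survives in $W$ as an edge between the corresponding blobs. Thus $g$ is a valid model of $\tilde G$ in $W$, and since $W\subseteq H$ it is also a model of $\tilde G$ in $H$.

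Next I would form the composition $f'(v):=\bigcup_{u\in f(v)}V(g(u))$ for $v\in V(K_t)$ and check that it is a model of $K_t$ in $H$. Connectivity of $f'(v)$ holds because $f(v)$ induces a connected subgraph of $\tilde G$ and each grid edge inside $f(v)$ lifts, via $g$, to an edge of $W$ joining the corresponding blobs. Disjointness of the sets $f'(v)$ follows from disjointness of the sets $f(v)$ together with disjointness of the blobs $g(u)$. For an edge $(v,v')\in E(K_t)$, the witnessing grid edge of $\tilde G$ between $f(v)$ and $f(v')$ lifts to an edge of $W\subseteq H$ between $f'(v)$ and $f'(v')$. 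This is exactly the transitivity-of-minors composition of Observation~\ref{obs: transitivity of minors}, but carried out at the level of models so that we retain control over rows and columns.

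Finally I would verify that $f'$ is grasped by $W$. The key point is that for every $i,j$ the blob $g(v(i,j))$ lies on row $R_i$ of $W$ and on column $C_j$ of $W$ (in particular it intersects both): this is immediate from $P(i,j)=R_i\cap C_j$ in $\hat W$, since then the vertices and blue paths that make up $g(v(i,j))$ belong to both the $i$-th row path and the $j$-th column path of $W$. Now fix $v\in V(K_t)$. By hypothesis $f(v)$ intersects at least $t$ rows of $\tilde G$ (the case of $t$ columns is symmetric), so $f(v)$ contains grid vertices $v(i_1,j_1),\dots,v(i_t,j_t)$ with $i_1,\dots,i_t$ pairwise distinct. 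Then for each $k$ we have $f'(v)\supseteq V(g(v(i_k,j_k)))$, which meets row $R_{i_k}$ of $W$; hence $f'(v)$ meets at least $t$ distinct rows of $W$. Therefore $f'$ is a model of $K_t$ in $H$ grasped by $W$, as required.

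The only delicate step, and the main obstacle, is the bookkeeping in the first and third parts: unwinding the two-stage contraction (first absorbing the subdivision of $\hat W$ inside $W$, then contracting the paths $P(i,j)$) to see precisely that the blob of $v(i,j)$ is a connected subgraph of $W$ contained in both the row $R_i$ and the column $C_j$ of $W$. This relies on the brick-wall geometry --- that the intersection of a row and a column of $\hat W$ is a single vertex or a single (horizontal) edge --- and on the good $(\hat W,W)$-mapping being fixed, which is what makes the contraction, the blobs, and hence the final model $f'$ well-defined.
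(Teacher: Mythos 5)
Your proposal is correct and follows exactly the route the paper takes: the paper's one-sentence justification is precisely your key observation that the blob of each grid vertex $v(i,j)$ under the contraction is a connected subgraph of $W$ contained in both the $i$th row and the $j$th column of $W$, so composing the models transfers the ``intersects $t$ rows/columns'' property from $\tilde G$ to $W$. Your write-up merely spells out the two-stage contraction bookkeeping that the paper leaves implicit.
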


The observation follows from the fact that for each row $R_i$ of $\tilde G$, every vertex on $R_i$ is mapped by the contraction to a set of vertices of $H$ contained in the $i$th row of $W$, and the same holds for the columns of $\tilde G$.

We are now ready to state our main theorems.
Our first theorem is a slightly weaker version of the flat wall theorem, in that it is mostly interesting for graphs whose maximum vertex degree is relatively small. Such graphs arise, for example, in edge-disjoint routing problems, and the guarantees given by this theorem are somewhat better than the guarantees given by the stronger flat wall theorem that appears below, as we do not need to deal with apex vertices.

\begin{theorem}\label{thm: main weak}
 Let  $G$ be any graph with maximum vertex degree $D$, let $w,t> 1$ be integers, set $T=t(t-1)/2$, and let $R=(w+4t)\left (2+ \ceil{\sqrt{8D^2(10T+6)+14T+8}}\right )=\Theta(Dt(w+t))$. Then there is an efficient algorithm, that, given any $(R\times R)$-wall $W\subseteq G$, either computes a model of a $K_t$-minor grasped by $W$ in $G$, or returns a sub-wall $W^*$ of $W$ of size at least $(w\times w)$, such that $W^*$ is a flat wall in $G$. The running time of the algorithm is polynomial in $|V(G)|,D,w$ and $t$.\end{theorem}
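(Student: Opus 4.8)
The plan is to follow the roadmap sketched in the introduction, organized around the passage from a square wall to a long strip of basic walls with cores, and then a case analysis on the bridge structure. First, I would apply the algorithm of Section~\ref{sec: cutting the wall} to the given $(R\times R)$-wall $W$: cut it into horizontal strips of height $h=w+4t$, splice them snake-like into one long wall of height $h$ and width $\Omega(R^2/h)$, and partition this long wall into $N=\Omega(R^2/h^2)$ disjoint basic walls $B_1,\ldots,B_N$, each of size $(h\times h)$. The choice $R=(w+4t)(2+\ceil{\sqrt{8D^2(10T+6)+14T+8}})$ is exactly tuned so that $N$ is at least roughly $8D^2(10T+6)+14T+8$, which is the total number of basic walls the later arguments will need to ``use up''. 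For each $B_i$ define its core $B'_i$ by deleting the top $2t$ and bottom $2t$ rows; since $h=w+4t$, each core has size at least $(w\times w)$, so any core we eventually certify to be flat gives the desired $W^*$. Also record that $W$ (hence the long wall) contains a grid contraction, so Observation~\ref{obs: clique grasped by wall} lets us translate any $K_t$-model found in a contraction back into a grasped $K_t$-model in $G$.

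Next, I would invoke the structural dichotomy developed in Sections~\ref{sec: two graphs}--\ref{sec: bridges, core walls, wall types} (the analogue of the Kawarabayashi--Thomas--Wollan ``linkage or small apex set'' step, but in the bounded-degree setting where no apex set is needed): either there are $\Omega(t^2)$ pairwise disjoint non-neighborhood bridges incident on distinct core walls, or a large sub-collection $\bset'$ of basic walls has the property that, in $G$ itself (no vertex deletions, since $D$ is bounded), every bridge for each $B'_i\in\bset'$ is a neighborhood bridge. In the first case, feed these disjoint non-neighborhood bridges together with the structure of the long wall into the $K_t$-minor construction of Section~\ref{sec: two graphs} to produce a model of $K_t$ in a contraction of $W$, and convert it via Observation~\ref{obs: clique grasped by wall}; this is where the bounded-degree assumption and the snake-like layout let us get away with only $\Omega(t^2)$ bridges. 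In the second case, look at how many walls in $\bset'$ actually carry neighborhood bridges on their cores: if $\Omega(t^2)$ of them do, assemble $\Omega(t^2)$ disjoint neighborhood bridges on distinct cores and again build a $K_t$-minor; otherwise, all but a bounded number of walls in $\bset'$ have cores with \emph{no} bridges at all in $G$ except possibly within the immediate neighborhood, and we move to the final, geometric step.

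For that final step, pick a basic wall $B_i\in\bset'$ whose core $B'_i$ has no bridges (or only controlled ones), and try to route a pair $P_i,Q_i$ of disjoint ``diagonal'' paths across $B'_i$ — $P_i$ from the top-left corner to the bottom-right corner, $Q_i$ from the top-right to the bottom-left — internally disjoint from $W$. If for $\Omega(t^2)$ indices such disjoint crossing pairs exist (and can be chosen globally disjoint), the paths $\{P_i,Q_i\}$ plus the long wall yield yet another $K_t$-minor embedding. If instead, for one of the walls $B_i$, no such cross can be routed, then by the Two-Disjoint-Paths Theorem applied to $G[B]$ with $C=A\cap B$ the appropriate portion of $G$ around $B'_i$ is $\tilde C$-flat, which is precisely the statement that a sub-wall $W^*$ of $B'_i$ of size at least $(w\times w)$ is a flat wall in $G$; output it. The main obstacle, and where the careful bookkeeping lives, is the second-to-last step: converting ``one core wall fails to admit a cross'' into an actual flat-wall certificate while making sure all the side conditions of the flat-wall definition (separation $(A,B)$ with $A\cap B\subseteq V(D)$, a valid choice of pegs in $A$, the $C$-reduction drawable on a disc in the boundary order) are met, and simultaneously tracking the exact count of basic walls consumed at each branch so that the stated value of $R$ — hence $N$ — suffices; the polynomial running time follows since every step (cutting the wall, finding disjoint bridges, routing two disjoint paths, building the explicit $K_t$-models) is a polynomial-time graph computation in $|V(G)|,D,w,t$.
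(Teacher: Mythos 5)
Your plan is correct and follows essentially the same route as the paper: cut $W$ into a chain of $N=8D^2(10T+6)+14T+8$ basic walls of height $w+4t$ with $2t$-core sub-walls, classify cores by whether they carry neighborhood bridges, non-neighborhood bridges, a wall-cross, or none of these, derive a grasped $K_t$-minor in the first three cases (using bounded degree to extract $\Omega(t^2)$ disjoint non-neighborhood bridges), and extract a flat $(w\times w)$-wall via the Two-Disjoint-Paths Theorem in the last case. The only cosmetic difference is the order in which the cases are dispatched, which does not affect the argument.
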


\begin{theorem}\label{thm: main strong}
Let $G$ be any graph, let $w,t> 1$ be integers, set $T=t(t-1)/2$, and let $R=(w+4t)\left (2+\ceil{\sqrt{500T+200}}\right )=\Theta(t(w+t))$. Then there is an efficient algorithm, that, given any $(R\times R)$-wall $W\subseteq G$, either finds a model of a $K_t$-minor grasped by $W$ in $G$, or returns a set $A$ of at most $t-5$ vertices, and a sub-wall $W^*$ of $W$ of size at least $(w\times w)$, such that $V(W^*)\cap A=\emptyset$ and $W^*$ is a flat wall in $G\setminus A$. The running time of the algorithm is polynomial in $|V(G)|,w$ and $t$.
\end{theorem}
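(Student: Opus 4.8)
The plan is to follow the roadmap of the introduction, deferring the supporting lemmas to Sections~\ref{sec: two graphs}--\ref{sec: bridges, core walls, wall types}. By Observation~\ref{obs: clique grasped by wall}, whenever I want a $K_t$-minor grasped by $W$ it suffices to produce, in a contraction to a grid of some sub-wall of $W$, a model of $K_t$ whose branch sets each meet $\ge t$ rows or $\ge t$ columns, so all clique-minor constructions below will be of this ``wide'' kind. First I would reshape the input (Section~\ref{sec: cutting the wall}): cut the $(R\times R)$-wall $W$ into horizontal strips of height $h:=w+4t$, splice the strips snake-wise into one long wall of height $h$ and width $\Theta(R^2/h)$, and carve that into a sequence $\bset=(B_1,\dots,B_N)$ of edge-disjoint \emph{basic walls} of size $(h\times h)$, with $N=\Theta(R^2/h^2)=\Theta(T)$ where $T=t(t-1)/2$; the constant inside $R$ is chosen precisely so that $N$ is large enough to survive all later losses. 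For each $B_i$ I delete its top $2t$ and bottom $2t$ rows to get the \emph{core wall} $B'_i$, of height $h-4t=w$ and width $h\ge w$, so any core wall proved flat in $G\setminus A$ immediately yields the required $(w\times w)$ flat sub-wall. Finally I classify paths: a \emph{bridge} for $B'_i$ is a path internally disjoint from $W$ from an interior vertex of $B'_i$ to a vertex of $W$ outside $B'_i$, and it is a \emph{neighborhood bridge} if its far end lies in $B_{i-1}\cup B_i\cup B_{i+1}$ and a \emph{non-neighborhood bridge} otherwise.

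The rest is a case analysis on how bridges attach to core walls. The two clique-minor exits are packaged in Section~\ref{sec: two graphs}: I would prove there that $\Omega(T)$ pairwise disjoint non-neighborhood bridges incident on pairwise distinct core walls, or $\Omega(T)$ pairwise disjoint neighborhood bridges incident on pairwise distinct core walls, already force a $K_t$-minor grasped by $W$ --- roughly, the $t$ wide branch sets are read off the long wall and each of the $\approx T$ bridges realizes one edge of $K_t$ between two of them. Making this work with $\Theta(t^2)$ bridges rather than the $\Theta(t^{12})$ of~\cite{KTW} is exactly where the linear layout of the basic walls pays off, together with several new ways to fold a clique into a wall. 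Then, by a Menger/max-flow argument on an auxiliary graph (Section~\ref{sec: bridges, core walls, wall types}), I would establish the structural dichotomy: either I can extract $\Omega(T)$ pairwise disjoint non-neighborhood bridges on distinct core walls (done), or there is $A\subseteq V(G)$ with $|A|=O(t^2)$ and a sub-family $\bset'\subseteq\bset$ with $|\bset'|=\Omega(T)$ such that in $G\setminus A$ every bridge for every core wall $B'_i$ with $B_i\in\bset'$ is a neighborhood bridge.

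In the structured case I would branch again. If $\Omega(T)$ of the core walls in $\bset'$ still carry a neighborhood bridge in $G\setminus A$, then --- neighborhood bridges of far-apart basic walls being automatically disjoint --- I can pick $\Omega(T)$ pairwise disjoint neighborhood bridges on distinct core walls and reuse the clique-minor exit. Otherwise I can restrict to a still-$\Omega(T)$ sub-family of $\bset'$ whose core walls have no bridge at all in $G\setminus A$; for each such $B_i$, with corners $a_i,b_i,c_i,d_i$ of $B'_i$ in cyclic order, I would test whether the relevant subgraph of $G\setminus A$ contains a cross through these corners, i.e.\ disjoint paths $P_i$ from $a_i$ to $c_i$ and $Q_i$ from $b_i$ to $d_i$, internally disjoint from $W$. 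If $T+1$ of these core walls admit a cross and the crosses can be taken pairwise disjoint, a last gadget in Section~\ref{sec: two graphs} assembles them with the long wall into a $K_t$-minor grasped by $W$. Otherwise some core wall $B'_i$, $B_i\in\bset'$, admits no such cross, and then the algorithmic Two-Disjoint-Paths Theorem quoted in the introduction tells me that the subgraph enclosed by the boundary of $B'_i$ is flat with respect to that boundary cycle --- exactly the assertion that $B'_i$, and hence a $(w\times w)$-sub-wall of it, is flat in $G\setminus A$. At this point $|A|=O(t^2)$; to reach the optimal $|A|\le t-5$ without enlarging $R$ I would re-run the last dichotomy more carefully (Section~\ref{sec: proof of main strong theorem}), tracking how many apex-type vertices a cross-free core wall genuinely needs in order to be made flat and converting any surplus connectivity into clique-minor edges. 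All of this is constructive --- max-flow for the bridges and for $A$, disjoint-path routing for the crosses, the algorithmic Two-Disjoint-Paths Theorem for the flat structure, explicit assembly of the minors --- so the running time is polynomial in $|V(G)|,w,t$. Theorem~\ref{thm: main weak} comes out of the same machinery (Section~\ref{sec: proof of main week thm}): a degree bound $D$ rules out apex-type vertices, so the structural dichotomy can be run with $A=\emptyset$, at the cost of a factor $D$ more basic walls and hence a factor $D$ in $R$.

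\textbf{The main obstacle.} The crux is the clique-embedding step of Section~\ref{sec: two graphs}: getting the gadgets down to $\Theta(t^2)$ disjoint bridges or crosses. This carries all the improvement over~\cite{KTW}, and it is delicate because each embedding must simultaneously (a) produce branch sets that are wide in $W$ so that Observation~\ref{obs: clique grasped by wall} applies, (b) handle non-neighborhood bridges, neighborhood bridges, and crosses on essentially equal footing, and (c) waste none of the $\Theta(t^2)$ objects, since the budget is tight. A close second is the endgame --- squeezing $|A|$ from $\Theta(t^2)$ down to $t-5$ while keeping $R=\Theta(t(w+t))$ --- which relies on the fact that a core wall that is almost flat but for a few apices is still connected enough to supply the missing edges of a clique minor.
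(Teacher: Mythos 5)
Your architecture coincides with the paper's for most of the argument: the snake-wise long wall cut into basic walls, the $2t$-row core walls, the bridge/neighborhood-bridge classification, the augmenting-path dichotomy producing either $\Theta(T)$ disjoint non-neighborhood bridges (routing distinct ``demand pairs'') or an apex set $A$ of size $O(t^2)$, and then the three-way split of the surviving core walls into neighborhood-bridged / crossed / cross-free, with the cross-free case handed to the Two-Disjoint-Paths Theorem to certify flatness. One small point you leave implicit but which the paper gets for free: the $T$ wall-crosses you want to be pairwise disjoint \emph{are} automatically disjoint, because each cross-free-of-bridges core wall sits behind its own order-$\le|\Gamma'_i|$ separation, and in the apex case any two crosses sharing a vertex would concatenate into a routing of some non-routable demand pair $M_i$ in $G\setminus A$.

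The genuine gap is the last step, the reduction of $|A|$ from $O(t^2)$ to $t-5$, which you describe only as ``re-running the dichotomy more carefully'' and whose mechanism you misattribute to a single almost-flat core wall supplying missing clique edges. The paper's actual mechanism is different and is the crux of achieving the optimal bound: for each apex vertex $a_j$ it greedily grows a fan $\qset_j$ of up to $2t$ pairwise disjoint paths, internally disjoint from $W'$, from the interiors of \emph{distinct, well-separated} core walls to $a_j$ (disjointness across different $a_j$'s again follows from non-routability of the $M_i$'s). If at least $t-4$ apex vertices each acquire a full fan, one embeds the graph $Z=K_{t,t-4}$ plus a $K_4$ on four of the degree-$(t-4)$ side's vertices: the $t-4$ saturated apexes are singleton branch sets, the other $t$ branch sets are disjoint ``horizontal'' paths threaded through $W'$ so that the $i$-th path collects exactly the fan-endpoints labelled $i$, and the six extra edges among four of these paths are drawn inside the planar wall. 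Contracting a perfect matching between the two sides of $Z$ yields $K_t$. A second case ($K_{t,t}$ from apexes that are only partially saturated but collectively absorb many walls) must also be handled. Only when fewer than $t-4$ apexes saturate does one output $A^*$ with $|A^*|\le t-5$, and this threshold is exactly why $t-5$ is attainable (four branch sets can be mutually adjacent inside a planar wall, five cannot). Without this construction your proof stalls at $|A|=O(t^2)$, i.e.\ at the weaker of the two Kawarabayashi--Thomas--Wollan regimes rather than the claimed theorem.
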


Observe that the bound of $t-5$ on $|A|$ is the best possible. Indeed, let $G$ be the graph obtained from an $(R\times R)$-elementary wall $W$ (for any value $R$), by adding a set $A=\set{a_1,\ldots,a_{t-5}}$ of new vertices, and connecting every vertex of $A$ to every vertex of $W$. Clearly, in order to obtain a flat wall of size $(w\times w)$ for any $w>2$ in $G$, we need to delete all vertices of $A$ from it. Assume for contradiction that $G$ contains a model $f$ of a $K_t$-minor. Let $\cset=\set{f(v)\mid v\in V(K_t)}$ be the sets of vertices of $G$ to which the vertices of $K_t$ are mapped. Then at most $t-5$ sets in $\cset$ may contain vertices of $A$. So there are at least $5$ sets $S_1,\ldots,S_5\in \cset$ of vertices, where for $1\leq i\leq 5$, $S_i\cap A=\emptyset$. But then sets $S_1,\ldots,S_5$ define a model of a $K_5$-minor in graph $W$, and since $W$ is planar, this is impossible.

Our lower bound is summarized in the following theorem.

\begin{theorem}\label{thm: lower bound}
For all integers $w,t>1$, there is a graph $G$, containing a wall of size $\Omega(wt)$, such that $G$ does not contain a flat wall of size  $w$, and it does not contain a $K_t$-minor. The maximum vertex degree of $G$ is $5$.
\end{theorem}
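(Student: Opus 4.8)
The plan is to take a large wall and ``crumple'' it in a spread‑out way: we scatter throughout it many small, bounded‑degree, non‑planar gadgets that are individually harmless but collectively force every $w$‑wall to be non‑flat, while keeping their total number small enough that the whole graph still has genus $o(t^2)$, hence no $K_t$‑minor. Concretely, fix a sufficiently small absolute constant $c_0>0$, let $R=\Theta(c_0 wt)$ be a multiple of $4$, and start from an $(R\times R)$‑elementary wall $W$. Partition $W$ into an array of $\Theta((4R/w)^2)=\Theta(c_0^2t^2)$ pairwise disjoint sub‑walls (\emph{blocks}) of size $\approx(w/4)\times(w/4)$. In the interior of each block install a \emph{crossing gadget}: add five new vertices forming a copy of $K_5$, and join each of them by a single edge to a distinct vertex of the block, the five attachment vertices being chosen pairwise far apart inside the block. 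No edge of $W$ is ever deleted, so $W\subseteq G$. Each vertex of $W$ receives at most one extra edge, so has degree at most $4$; each new vertex has degree $4$ inside its $K_5$ plus one attachment edge, so has degree $5$. Hence $G$ has maximum degree $5$. (Throughout we assume $t$ exceeds a fixed absolute constant, so that $\gamma(K_t)$, the genus of $K_t$, is comfortably $\Theta(t^2)$.)

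\textbf{A big wall, and no $K_t$‑minor.} $G$ contains the wall $W$, of size $R=\Omega(wt)$. For the minor: $G$ arises from the planar graph $W$ by adding $\Theta(c_0^2t^2)$ vertices and $O(c_0^2t^2)$ edges, and drawing $W$ in the plane and routing each new edge over a private handle shows $G$ has genus $O(c_0^2t^2)$, which, if $c_0$ is chosen small enough, is strictly less than $\gamma(K_t)=\Theta(t^2)$. Since genus is minor‑monotone and every graph with a $K_t$‑minor has genus at least $\gamma(K_t)$, $G$ has no $K_t$‑minor.

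\textbf{No flat wall of size $w$.} Suppose $W^*$ is a flat $(w\times w)$‑wall in $G$, witnessed by a separation $(A,B)$ of $G$ with $V(W^*)\subseteq B$, $A\cap B\subseteq V(\Gamma(W^*))$, all pegs of $W^*$ in $A$, and an $(A\cap B)$‑reduction $H$ of $G[B]$ drawable in a disc with $A\cap B$ on its boundary; in particular $H$ is planar. The key is to force a whole gadget‑$K_5$ into $B$ and keep it there through the reductions. First, one shows that some attachment vertex $x$ of some gadget lies in the interior $V(W^*)\setminus V(\Gamma(W^*))$ of $W^*$. When $W^*$ is an axis‑aligned sub‑wall of $W$ this is clear: such a wall spans $\ge w$ consecutive rows and columns, hence contains an entire block strictly inside its boundary, and each block carries a gadget with its five attachment vertices in its interior. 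For an arbitrary $(w\times w)$‑wall in $G$ one argues, using the $t$‑linkedness routing of Claims~\ref{claim: t-linkedness of grid}--\ref{claim: t-linkedness of wall}, that $W^*$ must ``occupy'' a region of $W$ large enough to contain a whole block strictly in its interior — the gadgets themselves contain no $(3\times3)$‑wall, so they cannot substitute for missing wall structure. Given such an $x$: it is an interior vertex of $W^*\subseteq B$, so all its neighbours lie in $B$, in particular the $K_5$‑vertex attached to it; iterating (each $K_5$‑vertex is adjacent either to an interior vertex of $W^*$ or to another $K_5$‑vertex, and no $K_5$‑vertex lies on $\Gamma(W^*)$), the whole $K_5$, with all its edges, lies in $B$, hence in $G[B]$. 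Finally, this $K_5$ survives every elementary $(A\cap B)$‑reduction: such a reduction deletes the interior of an order‑$\le 3$ separation and adds a triangle, but $K_5$ is $4$‑connected and is joined to the rest of $G[B]$ only through its five vertex‑disjoint attachment edges leading to pairwise‑far‑apart vertices of a block, so no order‑$\le 3$ separation can place a $K_5$‑vertex on the deleted side. Thus $H$ contains $K_5$ and is non‑planar — contradicting disc‑drawability. Therefore $G$ has no flat wall of size $w$, proving Theorem~\ref{thm: lower bound}.

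\textbf{Main obstacle.} The hardest step is the italicized claim for an \emph{arbitrary} $(w\times w)$‑wall in $G$: such a wall may weave through the new vertices and exploit the gadgets, so proving that it must nonetheless engulf a whole block of $W$ in its interior needs a careful ``a large wall in an almost‑planar graph comes from a large sub‑wall of $W$'' argument, using the $t$‑linkedness lemmas together with the fact that the gadgets are $O(1)$‑size and contain no small wall. The other delicate (but routine) point is the numerical balancing: the block size and the constant $c_0$ must be chosen so that the gadgets are simultaneously dense enough to meet every $w$‑wall and sparse enough in total (total extra edges below $\gamma(K_t)$) to preclude a $K_t$‑minor — which is precisely why the starting wall has size $\Omega(wt)$ rather than something larger.
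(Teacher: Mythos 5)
Your construction, degree bound, and wall size are fine, and your genus argument for excluding the $K_t$-minor is a legitimate (arguably cleaner) alternative to the paper's route, which adds two crossing diagonals to a sparse grid of cells and rules out $K_t$ via the crossing-number inequality (Theorem~\ref{thm: crossing number of a clique}). The problem is the step you yourself flag as the main obstacle: it is a genuine gap, not a deferrable technicality, and the strategy you propose for closing it is the wrong one. The statement ``an arbitrary $(w\times w)$-wall $W^*$ in $G$ must occupy a region of $W$ large enough to contain a whole block strictly in its interior'' is not well posed -- a wall in an abstract graph has no canonical region it occupies, its subdivision paths can wander through gadget vertices, and making ``interior'' precise already presupposes the planarity you are trying to refute. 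Your secondary claim that the $K_5$ survives every elementary reduction is also unsupported as written: ``attachment vertices pairwise far apart'' does not preclude an order-$3$ separation of $G[B]$ whose small side contains the gadget; what you actually need is four vertex-disjoint paths from the gadget to $A\cap B$, which requires a Menger-type argument.

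The paper closes exactly this gap without any geometric localization. It considers the nested sub-walls $W_3\supseteq W_4\supseteq\cdots\supseteq W_{w/4}$ of the flat wall $W^*$ \emph{itself}, shows via Theorem~\ref{thm: cross-wall} that each boundary $\Gamma_i$ induces a separation $(A_i,B_i)$ of $G$ with $W_i\subseteq B_i$, and then argues: (i) since every vertex of $G$ is within distance roughly $w/4$ of a gadget, a short path from a central vertex of $W^*$ to the nearest gadget misses one of the $\Omega(w)$ pairwise disjoint cycles $\Gamma_i$, which forces the entire gadget into $B_i$ for some $i\geq 3$; and (ii) by Claims~\ref{claim: four paths} and~\ref{claim: re-routing paths}, the gadget then admits four disjoint paths in $B$ to four pegs of $W^*$, so its internal crossing routes a cross on the pegs of $W^*$, contradicting flatness (Lemma~\ref{lemma: no black cell in Bi}). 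To repair your proof you should replace the ``engulfs a whole block'' claim by this two-step scheme; your $K_5$ gadgets are dense enough for step (i), and step (ii) is precisely the 4-linkage you also need for reduction-survival -- though once you have that linkage, producing a cross directly is simpler than tracking a $K_5$ through a series of $C$-reductions.
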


\subsection{A $C$-cross and a Wall-Cross}

Suppose we are given a graph $G$ and a cycle $C$ in $G$. A \emph{$C$-cross} in $G$ is a pair $P_1$ and $P_2$ of disjoint paths, with ends $s_1,t_1$ and $s_2,t_2$, respectively, such that $s_1,s_2,t_1,t_2$ occur in this order on $C$, and no vertex of $C$ serves as an inner vertex of $P_1$ or $P_2$. The next theorem follows from~\cite{Jung,RS90, Seymour06, Shiloach80, Thomassen80, KTW}.

\begin{theorem}\label{thm: a cross or C-flat}
Let $G$ be a graph and let $C$ be a cycle in $G$. Then the following conditions are equivalent:
\begin{itemize}
\item $G$ has no $C$-cross;
\item Some $C$-reduction of $G$ can be drawn in the plane with $C$ as a boundary of the outer face;
\item $G$ is $C$-flat.
\end{itemize}

Moreover, there is an efficient algorithm, that either computes a $C$-cross in $G$, or returns the subgraphs $G_0,G_1,\ldots,G_k$ of $G$ and the plane graph $\tilde G$, certifying that $G$ is $C$-flat.
\end{theorem}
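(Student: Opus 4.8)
The plan is to obtain Theorem~\ref{thm: a cross or C-flat} by assembling two imported results: the cycle version of the Two-Disjoint-Paths Theorem of~\cite{Jung,RS90,Seymour06,Shiloach80,Thomassen80}, and the reformulation of $C$-flatness from~\cite{KTW}. The easy implication ``some $C$-reduction of $G$ is planar with $C$ bounding the outer face $\Rightarrow$ $G$ has no $C$-cross'' I would dispatch first. Note that every elementary $C$-reduction keeps $V(C)$ together with all edges of $C$, so $C$ survives as a cycle in every $C$-reduction. If $H$ is a $C$-reduction embedded in the plane with $C$ bounding the outer face, then all of $H$ lies in the closed disk bounded by $C$, and the Jordan Curve Theorem forbids two disjoint paths of $H$ whose endpoints interleave on $C$; in particular $H$ has no $C$-cross. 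Finally, a $C$-cross in $G$ survives each elementary $C$-reduction: if $(A,B)$ with $|A\cap B|\le 3$ determines the reduction, then, as the two paths of a cross are disjoint, at most one of them meets $B\setminus A$, and that one reroutes through the at most $3$ new clique edges on $A\cap B$; so a $C$-cross in $G$ would yield one in $H$, a contradiction.

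For the equivalence of the second and third bullets I would argue directly from the definitions, following~\cite{KTW}. From witnesses $G_0,\dots,G_k,\tilde G$ of $C$-flatness, each $G_i$ attaches to $G_0$ in at most three vertices lying on a single face of $\tilde G$; contracting each $G_i$ into that face and deleting everything outside $G_0$ while adding the at most $3$ clique edges on each $V(G_i)\cap V(G_0)$ gives a $C$-reduction of $G$ that inherits the embedding of $\tilde G$ with $C$ on the outer face. Conversely, from a $C$-reduction $H$ of $G$ drawn with $C$ on the outer face, I would re-insert the pieces removed by the elementary reductions one at a time, placing each removed piece in the face of the current plane graph incident with the vertices of its (size at most $3$) attachment, or on the corresponding edge; the re-inserted pieces become $G_1,\dots,G_k$ and the final plane graph becomes $\tilde G$. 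Both directions are routine but need care with the face-incidence and adjacency conditions of the definition.

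The substantive implication, ``$G$ has no $C$-cross $\Rightarrow$ some $C$-reduction of $G$ is planar with $C$ on the outer face'', is where the cited theorem does the work. Applying its cycle formulation to the vertex set $V(C)$ ordered along $C$, either $G$ has a $\tilde C$-cross or some $V(C)$-reduction of $G$ embeds in a disk with $V(C)$ on the boundary in that cyclic order; in the latter case the boundary cycle is $C$ itself (again because $C$ is preserved by reductions), which is exactly the second bullet. So it suffices to show that a $\tilde C$-cross implies a $C$-cross, i.e. that one can always choose the two crossing paths internally disjoint from $C$. I would prove this by the standard analysis of the bridges of $C$ in $G$ (also carried out inside~\cite{KTW}): from a $\tilde C$-cross one passes to the $C$-bridges traversed by its two paths and shows that either two $C$-bridges have interleaving attachment sets on $C$, or a single $C$-bridge contains two disjoint paths with interleaved endpoints on $C$ (here invoking the four-terminal Two-Disjoint-Paths Theorem inside that bridge); routing through these bridges, whose internal vertices avoid $C$ by definition, produces an internally disjoint $C$-cross. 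An alternative is to take a $\tilde C$-cross minimizing the number of its internal vertices lying on $C$ and show by an exchange argument that this number is $0$.

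All of this is algorithmic: a $\tilde C$-cross, or the disk embedding of a $V(C)$-reduction, is found in polynomial time~\cite{Shiloach80,RS90}; the bridge-rerouting that turns a $\tilde C$-cross into a $C$-cross is a local polynomial-time step; and unwinding the elementary reductions into the subgraphs $G_0,\dots,G_k$ and the plane graph $\tilde G$ is polynomial. I expect the bridge-rerouting step to be the main obstacle to a clean write-up: one must check that a $\tilde C$-cross can always be made internally disjoint from $C$ without destroying the interleaving of its endpoints on $C$, while carefully verifying throughout that every reduction used is a $C$-reduction, so that the cycle $C$ and the cyclic order of its vertices are preserved. The genuinely hard graph-theoretic content is imported from the cited literature.
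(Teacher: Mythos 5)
The paper gives no proof of this theorem at all: it is stated as following from \cite{Jung,RS90,Seymour06,Shiloach80,Thomassen80,KTW} and used as a black box, so there is no in-paper argument to compare yours against. Your assembly of the statement from those sources is sound and correctly organized: the easy direction (a planar $C$-reduction with $C$ on the outer face excludes a $C$-cross, because a $C$-cross survives elementary $C$-reductions via the at-most-$3$ cut and the added clique edges), the definitional equivalence of the second and third bullets following \cite{KTW}, and the hard direction delegated to the Two-Disjoint-Paths Theorem. You correctly isolate the one step that is not purely a citation, namely upgrading a $\tilde C$-cross (paths allowed to meet $C$ internally) to a $C$-cross (paths internally disjoint from $C$); this is a true and standard lemma, proved exactly by the bridge/exchange argument you sketch, and it is carried out in \cite{KTW}. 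So your proposal is a correct and somewhat more explicit account of what the paper simply cites.
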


We will be extensively using a special type of cross, connecting the corners of a wall. For brevity of notation, we define it below, and we call it a \emph{wall-cross}.

\begin{definition} Let $G$ be any graph, and $W$ an $(h\times r)$-wall in $G$. Let $\hat W$ be the corresponding $(h\times r)$-elementary wall, and assume that we are given a $(\hat W,W)$-good mapping $f: V(\hat W)\rightarrow V(W)$. Let $a,b,c,d$ be the four corners of $W$ (whose choice is fixed given $f$), appearing on the boundary of $W$ in this order. A wall-cross for $W$ is a pair $P_1,P_2$ of disjoint paths, where $P_1$ connects $a$ to $c$, and $P_2$ connects $b$ to $d$. 
\end{definition}

Assume that we are given any pair $u,v$ of vertices of a  wall $W$. We say that $u$ and $v$ are separated by a column $C_j$ of $W$, iff $u,v\not\in V(C_j)$, and $V(C_j)$ separates $u$ from $v$ in graph $W$. Similarly, we say that $u,v$ are separated by a row $R_i$ of $W$, iff $u,v\not\in V(R_i)$, and $V(R_i)$ separates $u$ from $v$ in $W$. We will repeatedly use the following simple theorem.

\begin{theorem}\label{thm: cross-wall} Assume that we are given a wall $W$ of height $h\geq 5$ and width $r\geq 5$, with corners $a,b,c,d$ appearing on the boundary of $W$ in this order. Let $u,v\in V(W)$ be any pair of vertices, such that one of the following holds: either (1) neither $u$ nor $v$ lie on the boundary of $W$ and they are separated by some row or some column of $W$; or (2) $u$ lies on the boundary of $W$, and $v$ lies in the sub-wall spanned by rows $(R_3,\ldots,R_{h-2})$ and columns $\set{C_3,\ldots,C_{r-2}}$. Let $W'$ be the graph obtained from $W$ by adding the edge $(u,v)$ to it. Then there is a wall-cross for $W$ in $W'$.
\end{theorem}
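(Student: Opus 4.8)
I want to show that adding the edge $(u,v)$ to a wall $W$ of height $h\geq 5$ and width $r\geq 5$ creates a wall-cross, i.e.\ two disjoint paths, one from corner $a$ to corner $c$ and one from corner $b$ to corner $d$, where $a,b,c,d$ appear in this cyclic order on the boundary of $W$. The key point is that $W$ is essentially the $(h\times r)$-grid (it contains it as a contraction, and conversely a grid contains a wall structure), so it suffices to prove the statement for grids and then lift it back via the contraction; in fact, since a wall-cross only needs paths, working directly with the wall $W$ and its rows and columns is cleanest. The new edge $(u,v)$ provides the ``shortcut'' that lets two monotone paths get past each other.

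\textbf{Step 1: reduce to the grid picture.} Using the good $(\hat W,W)$-mapping $f$, I identify each row $R_i$ and column $C_j$ of $W$ as a path, and the corners $a=R_1\cap C_1$, $b=R_1\cap C_r$, $c=R_h\cap C_r$, $d=R_h\cap C_1$. A path from $a$ to $c$ that uses a prefix of $R_1$, then descends along some column, then a suffix of $R_h$ (an ``L-shaped'' or ``staircase'' path) is automatically present in $W$; similarly for $b$ to $d$. The obstruction to having both simultaneously is exactly that the natural $a$--$c$ path and the natural $b$--$d$ path must cross inside the planar wall. The extra edge $(u,v)$ will be used as a ``bridge'' so that one of the two paths can hop over the other.

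\textbf{Step 2: case analysis according to the hypothesis.} In case (1), $u$ and $v$ are interior vertices separated by some row $R_i$ (say $u$ is strictly above $R_i$, $v$ strictly below) or by some column $C_j$ (say $u$ strictly left, $v$ strictly right). Treat the ``separated by a row'' subcase: route $P_1$ from $a$ down the left side and across to reach $u$ staying weakly above $R_i$, use the edge $(u,v)$ to cross $R_i$, then continue from $v$ to $c$ staying weakly below $R_i$; route $P_2$ from $b$ to $d$ entirely \emph{inside} row $R_i$ together with short vertical detours, staying on $R_i$ — since $P_1$ only touches $R_i$-level at the single crossing provided by the edge (which is not an edge of $R_i$ itself), disjointness holds after a little care with the detours. (Because $u,v$ are interior and $h,r\geq 5$, there is enough room — at least two rows above $u$ and two below $v$, etc. — to route $P_1$ around without touching $R_i$.) The ``separated by a column'' subcase is symmetric, swapping the roles of rows and columns and of the pairs $\{a,c\}$, $\{b,d\}$. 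In case (2), $u$ lies on the boundary and $v$ lies deep inside (rows $R_3,\dots,R_{h-2}$, columns $C_3,\dots,C_{r-2}$); then $v$ is separated from \emph{every} boundary vertex by, say, row $R_2$ and also row $R_{h-1}$ and column $C_2$ and column $C_{r-1}$, so depending on which side of the boundary $u$ sits, one reduces to an instance of case~(1)-type routing: pick a boundary side not containing $u$ and a separating row or column between $v$ and that side, route one path through $u$ along the boundary and then via $(u,v)$ into the interior and out to the appropriate corner, and route the other path ``behind'' $v$ using the guaranteed two-layer margin.

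\textbf{Step 3: disjointness bookkeeping and the main obstacle.} The heart of the argument — and the part most prone to error — is the explicit routing that guarantees $P_1$ and $P_2$ are vertex-disjoint while using the edge $(u,v)$ exactly once and respecting the fixed cyclic order $a,b,c,d$ of the corners. The slogan is ``one path goes around the outside of the separator, jumps across via $(u,v)$, and the other path runs inside the separator'', but making this precise requires: (a) checking that the margins guaranteed by $h,r\geq 5$ in case (1) and by the rows/columns $3,\dots,h-2$ / $3,\dots,r-2$ in case (2) are genuinely enough; (b) handling the degenerate positions of $u,v$ (e.g.\ $u$ or $v$ lying on column $C_1$ or $C_r$ in case (1)) — here one may need to route $P_1$ with a small staircase rather than a straight L; and (c) in the wall (as opposed to grid) setting, recalling that a wall vertex has degree $\leq 3$, so ``detours'' along a row must be built from the blue and red paths carefully, but since we only ever need node-disjoint \emph{paths} and the wall contains a grid minor, this is not a real obstruction, only a notational one. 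I expect the cleanest writeup to first prove the statement for the $(h\times r)$-grid by the routing above and then invoke the fact that $W$ contains such a grid as a contraction together with the presence of $u,v,(u,v)$ in the corresponding ``inflated'' positions — but one must double-check that the contraction maps the hypotheses (``$u$ interior'', ``separated by a row'', ``$v$ in the central sub-wall'') to the corresponding grid hypotheses, which it does because rows and columns of the grid pull back to rows and columns of the wall.
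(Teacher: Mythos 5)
Your proposal is correct and follows essentially the same route as the paper: one of the two corner-to-corner paths runs along the separating row (reaching the corners via the first and last columns), while the other uses the edge $(u,v)$ to hop across that row, with the column case and case (2) handled symmetrically / by reduction; the paper merely assigns the edge to the $b$--$d$ path rather than the $a$--$c$ path, which is immaterial. One small imprecision: $P_2$ cannot run ``entirely inside row $R_i$'' since $b$ and $d$ lie on rows $R_1$ and $R_h$, so it must also traverse segments of $C_r$ and $C_1$ -- exactly as in the paper's construction -- but this does not affect the disjointness argument.
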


\begin{proof}
Assume first that $u,v$ do not lie on the boundary of $W$ and they are separated by some row $R_i$ of $W$. We assume w.l.o.g. that $u$ lies above $R_i$ and $v$ lies below $R_i$ in the natural drawing of $W$. Recall that a blue path of $W$ is a path corresponding to a horizontal edge of the corresponding elementary wall $\hat W$, and a red path corresponds to a vertical edge of $\hat W$. If $u$ is a vertex on a blue path of $W$, then let $u'=u$, $P_u=\emptyset$, and let $R_{i'}$ be the row of $W$ to which $u'$ belongs. Otherwise, let $P'_u$ be the red path to which $u$ belongs, and $u_1,u_2$ the endpoints of $P'_u$. Then $u_1,u_2$ belong to a pair of consecutive rows $R_{i'},R_{i'+1}$ of $W$. We let $u'\in \set{u_1,u_2}$ be the endpoint that belongs to $R_i'$, and we let $P_u$ be the sub-path of $P'_u$ between $u$ and $u'$. Similarly, if $v$ lies on a blue path, then we set $v'=v$, $P_v=\emptyset$, and $R_{i''}$ the row of $W$ to which $v'$ belongs. Otherwise, let $P'_v$ be the red path on which $v$ lies, and let $v_1,v_2$ be its two endpoints. Then $v_1,v_2$ belong to two consecutive rows $R_{i''-1}$, $R_{i''}$ of $W$. Let $v'\in \set{v_1,v_2}$ be the endpoint that belongs to $R_{i''}$, and let $P_v$ be the sub-path of $P'_v$ between $v$ and $v''$. Notice that since $u,v$ are separated by $R_i$, $i'<i<i''$.

We are now ready to define the two paths $P_1$ and $P_2$. We assume w.l.o.g that $a$ is the top left corner of $W$, and that $a,b,c,d$ appear on the outer boundary of $W$ in the clock-wise order. Path $P_1$ follows column $C_1$ until it reaches row $R_i$; it then follows $R_i$ until it reaches $C_r$, and finally it follows $C_r$ to $c$. Path $P_2$ starts at vertex $b$ and follows $C_r$ until it reaches $R_{i'}$, then follows $R_{i'}$ until it reaches $u'$, and then follows $P_u$ to $u$. It then uses the edge $(u,v)$ to reach $v$, follows $P_v$ to $v'$, 
row $R_{i''}$ to $C_1$, and $C_1$ to $d$ (see Figure~\ref{fig: wall-cross}).

The case where $u,v$ do not lie on the boundary of $W$ and they are separated by a column of $W$ is symmetric.
The proof for the case where $u$ lies on the boundary of $W$, and $v$ lies in the sub-wall spanned by rows $(R_3,\ldots,R_{h-2})$ and columns $\set{C_3,\ldots,C_{r-2}}$ is very similar and is omitted here.
\begin{figure}[h]
\scalebox{0.35}{\includegraphics{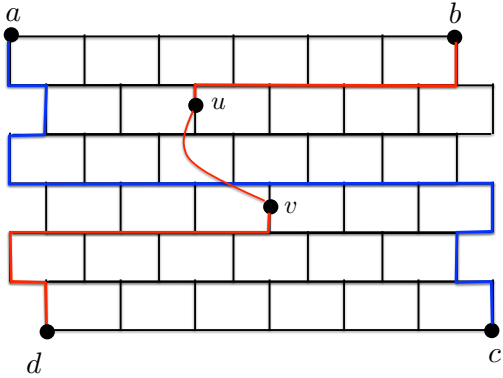}}\caption{Paths $P_1$ and $P_2$ are shown in blue and red, respectively. \label{fig: wall-cross}}
\end{figure}

\end{proof}

\label{--------------------------------------sec: two graphs--------------------------}
\section{Some Useful Graphs}\label{sec: two graphs}
In this section we construct a graph $H^*$, and define three families of graphs $\hset_1,\hset_2,\hset_3$, such that, if $G$ contains $H^*$ or one of the graphs in $\hset_1\cup\hset_2\cup\hset_3$ as a minor, then it contains a $K_t$-minor. 

\subsection{Graph $H^*$}
We start with a grid containing $2t$ rows, that we denote by $R_1,\ldots,R_{2t}$, and $Tt+1$ columns, denoted by $C_1,\ldots,C_{Tt+1}$. The vertex lying at the intersection of row $R_i$ and column $C_j$ is denoted by $v(i,j)$.
 
Consider the vertices $u_i=v(t,ti)$ for $1\leq i\leq T$ (so these are vertices roughly in the middle row of the grid, spaced $t$ apart horizontally). For each such vertex $u_i$, let $L_i$ be the cell of the grid for which $u_i$ is the upper left corner. We add two diagonals to this cell, that is, two edges: $e_i=(v(t,ti),v(t+1,ti+1))$, and $e_{i+1}=(v(t+1,ti),v(t,ti+1))$ (see Figure~\ref{fig: H1}). We call these edges \emph{cross edges}. This completes the definition of the graph $H^*$.

We now partition the graph $H^*$ into blocks. Given any pair $1\leq i<j\leq Tt+1$ of integers, let $B[i,j]$ be the sub-graph of $H^*$, induced by the set $V(C_i)\cup V(C_{i+1})\cup\cdots\cup V(C_j)$ of vertices. We define a sequence $J_1,M_1,J_2,M_2,\ldots,J_T,M_T$ of sub-graphs of $H^*$, as follows. For each $1\leq i\leq T$, $J_i=B[t(i-1)+1,ti]$, and $M_i=B[ti,ti+1]$  (see Figure~\ref{fig: H1}). We denote the set of all horizontal edges of $M_i$, together with the two cross edges contained in $M_i$, by $E_i$. Graphs $J_1,\ldots,J_T$ are called \emph{odd blocks}, and graphs $M_1,\ldots,M_T$ are called \emph{even blocks}.

\begin{figure}[h]
\scalebox{0.5}{\includegraphics{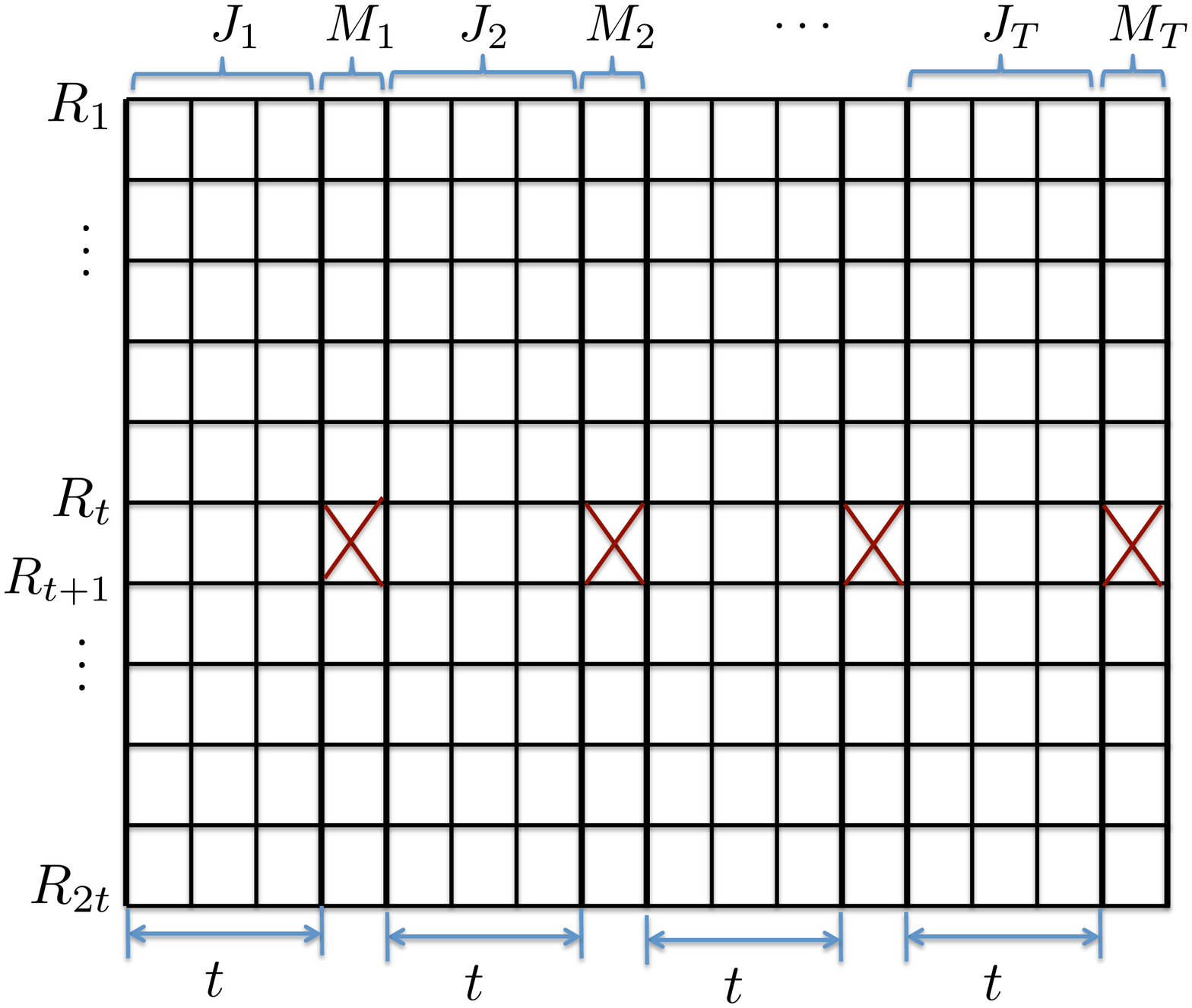}}\caption{Graph $H^*$\label{fig: H1}}
\end{figure}

\begin{lemma}\label{lemma: from H1 to clique}
Let $G$ be any graph and let $W$ be a wall in $G$. Assume that $G$ contains $H^*$ as a minor. Then $G$ contains a $K_t$-minor. Moreover, if $J_1$ is a contraction of some sub-wall of $W$, then there is a model of $K_t$ grasped by $W$ in $G$, and this model can be found efficiently given a model of $H^*$ in $G$.
\end{lemma}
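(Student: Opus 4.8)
My plan is to build an explicit model of $K_t$ inside $H^*$ itself and then push it forward into $G$. Since $K_t$ being a minor of $H^*$ and $H^*$ being a minor of $G$ forces $K_t$ to be a minor of $G$ (Observation~\ref{obs: transitivity of minors}), and since this composition is effective, the real task is to exhibit a model of $K_t$ in $H^*$. For the ``moreover'' part I will make the model have the extra property that every branch set meets all $t$ columns of the first odd block $J_1=B[1,t]$ (a $2t\times t$ grid, containing no cross edges). If $J_1$ is the contraction of a sub-wall $W'$ of $W$, then the $t$ columns of $J_1$ are obtained by contracting sub-paths of $t$ distinct columns of $W'$, which are themselves sub-paths of $t$ distinct columns of $W$; hence each branch set of the resulting model of $K_t$ in $G$ meets at least $t$ columns of $W$, i.e.\ the model is grasped by $W$. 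This is the same bookkeeping as in the proof of Observation~\ref{obs: clique grasped by wall}.

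For the model in $H^*$, think of the $t$ vertices of $K_t$ as $t$ horizontal ``rails'' $f(1),\dots,f(t)$ running left to right across $H^*$, and of the $T=\binom t2$ even blocks $M_1,\dots,M_T$ as one ``junction box'' per edge of $K_t$. The rails enter $H^*$ at the left boundary of $J_1$ in the top-to-bottom order $f(1),\dots,f(t)$ and leave at the right boundary of $M_T$ in the reversed order $f(t),\dots,f(1)$; we realize this reversal permutation as a product of $T$ adjacent transpositions, namely a reduced word for the longest element of $S_t$ (for concreteness $(s_1)(s_2s_1)\cdots(s_{t-1}\cdots s_1)$). The key combinatorial fact is that in a reduced word for the longest element every unordered pair of rails is transposed exactly once: each of the $\binom t2$ inversions of the longest element is created exactly once and, since the word is reduced, never destroyed. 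We arrange for the $i$-th transposition of the word to take place inside $M_i$, and for the transposition of rails $k,l$ to simultaneously create an edge of $H^*$ between $f(k)$ and $f(l)$; after all $T$ steps every edge of $K_t$ has then been realized exactly once.

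It remains to carry out one transposition, and this is the only place where the cross edges matter. All junction cells $L_1,\dots,L_T$ lie on the two central rows $R_t,R_{t+1}$. To perform the $i$-th transposition, which swaps two rails that are consecutive in the current order, I first route the rails through the odd block $J_i$ (a $2t\times t$ grid) so that these two rails occupy rows $R_t$ and $R_{t+1}$ at the right boundary of $J_i$, keeping the rails pairwise disjoint and preserving their relative order; there is always room for this since $J_i$ has $2t$ rows but only $t$ rails pass through it, so at most $t-2$ rails must be placed above $R_t$ and at most $t-2$ below $R_{t+1}$. I then route the two rails through the cell $L_i$: writing $a,b$ for its corners on $R_t$ (left, right) and $c,d$ for those on $R_{t+1}$, the rail entering at $a$ follows the cross edge $(a,d)$ and leaves $M_i$ on $R_{t+1}$, while the rail entering at $c$ follows the cross edge $(c,b)$ and leaves $M_i$ on $R_t$. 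Since $(a,d)$ and $(c,b)$ are vertex-disjoint the two rails stay disjoint; their positions on $R_t,R_{t+1}$ have been interchanged; and the grid edge $(a,b)$, which neither rail uses, joins $f(k)$ to $f(l)$ and is the connecting edge for the edge $kl$ of $K_t$. Every other rail runs straight through $M_i$ on its own row via the horizontal edges of $M_i$, avoiding $L_i$.

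What remains is routine verification: the rails are pairwise disjoint globally (they meet only inside junction cells, where the two cross edges keep the two participating rails disjoint and the remaining rails avoid the central rows); every pair of rails meets in exactly one junction box, by the reduced-word fact, so we obtain precisely the $\binom t2$ connecting edges of a valid embedding of $K_t$; and each rail crosses $J_1$ from its left boundary to its right boundary, hence meets all $t$ columns of $J_1$, giving the grasped conclusion above. The conceptual heart of the proof is the remark that disjoint paths crossing a planar grid strip from left to right cannot change their top-to-bottom order, so without the cross edges only the $t-1$ consecutive pairs of rails could ever be brought onto the central rows and made adjacent; the cross edges of a junction cell are exactly what allows two adjacent rails to swap, and hence what lets every pair of rails eventually become adjacent. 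The main technical (but standard) point that needs care is this width-$t$ re-layout of the rails inside each odd block.
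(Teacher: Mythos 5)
Your proposal is correct and follows essentially the same route as the paper: $t$ disjoint left-to-right paths serving as branch sets, re-laid out inside each odd block via the $t$-linkedness of the $2t\times t$ grid (with planarity preserving their order), and swapped pairwise at the even blocks using the two cross edges, with a bubble-sort/reduced-word sequence of adjacent transpositions guaranteeing that every pair of rails is brought together exactly once. The only cosmetic difference is that you take the connecting edge for the pair $\{k,l\}$ to be the horizontal edge of the junction cell itself, whereas the paper takes it to be the vertical edge joining the two consecutive endpoints in the first column of the next odd block (and accordingly only needs every pair to become adjacent somewhere, rather than to be transposed exactly once); both choices are valid.
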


\begin{proof}
For each odd block $J_i$, let $X_i$ be the set of the $2t$ vertices of the first column of $J_i$, and $Y_i$ the set of the $2t$ vertices on the last column of $J_i$. For convenience, let $X_{T+1}$ be the set of vertices of the last column of $M_T$.
For any subset $X'_i\subset X_i$ of vertices, we can view $X'_i$ as an ordered set, where the ordering is given by the order in which the vertices of $X'_i$ appear on the column of the grid induced by $X_i$. Similarly, any subset $Y'_i\subseteq Y_i$ can be viewed as an ordered set of vertices. From Claim~\ref{claim: t-linkedness of grid}, for each $1\leq i\leq T$, $X_i$ and $Y_i$ are $t$-linked inside $J_i$.
For each odd block $J_i$, a \emph{$t$-linkage} is some collection $\lset_i$ of $t$ disjoint paths contained in $J_i$, that connect some subsets $X\subset X_i$, $Y\subset Y_i$ of $t$ vertices each to each other.

For each odd block $J_i$, we will define a $t$-linkage $\lset_i$, and for each even block $M_i$, we will select a subset $E'_i\subset E_i$ of $t$ edges, in such a way that, by concatenating $\lset_1,E'_1,\lset_2,\ldots,\lset_{T},E'_T$, we obtain a collection $\pset$ of $t$ disjoint paths. For every pair $P,P'\in \pset$ of such paths, we will ensure that there is an edge $e_{P,P'}$ in $H^*$, with one endpoint in $P$ and one endpoint in $P'$. It is then easy to see that $H^*$ (and hence $G$) contains a model of a $K_t$-minor, where every vertex of $K_t$ is mapped to a distinct path of $\pset$. We also describe an efficient algorithm to construct the set $\pset$ of paths in graph $H^*$. 

It now remains to define the linkages $\lset_i$, and the subsets $E'_i$ of edges. We perform $T$ iterations. In iteration $i$, we define the linkage $\lset_i$, and the set $E_i'\subset E_i$ of edges. When iteration $i$ is completed, we obtain a collection $\pset_{i}$ of $t$ disjoint paths, by concatenating $\lset_1,E_1',\ldots,\lset_i,E_i'$. Each path $P\in \pset_i$ starts at a vertex of the first column of $H^*$, and terminates at a distinct vertex of $X_{i+1}$. Let $X'_{i+1}\subset X_{i+1}$ be subset of vertices where the paths in $\pset_i$ terminate. We will ensure that $X'_{i+1}$ is a consecutive subset of vertices on $X_{i+1}$ (using the natural ordering of the vertices in $X_{i+1}$ along the corresponding column of the grid). The natural ordering of the vertices of $X_{i+1}'$ then defines an ordering $\sigma_i$ of the paths in $\pset_i$: path $P$ appears before path $P'$ in this ordering iff the last endpoint of $P$ appears before the last endpoint of $P'$ in $\sigma_i$. Notice that any pair $P,P'\in \pset_i$ of paths that appears consecutively in this ordering, has an edge $e$ connecting a vertex of $P$ to a vertex of $P'$ - the edge of the first column of $J_{i+1}$, that connects the two corresponding vertices of $X'_{i+1}$.
We will next define a collection $\sigma_0',\ldots,\sigma_T'$ of permutations of $[t]$, such that for each $1\leq r \neq q\leq t$, there is some index $0\leq i\leq T$, with $r$ and $q$ appearing consecutively in $\sigma'_i$. We will then define the linkages $\lset_1,\ldots,\lset_T$, and the sets $E_1',\ldots,E'_T$ of edges in such a way, that for each $1\leq i\leq T$, the permutation $\sigma_i$ defined by the paths in $\pset_i$ is exactly $\sigma'_i$.
This will ensure that for every pair $P_{r},P_q\in \pset$ of paths, there is some index $i$, such that, if $v$ is a vertex of the first column of $J_{i}$ lying on $P_{r}$, and $v'$ is defined similarly for $P_q$, then $v$ and $v'$ are connected by a vertical edge in the first column of $J_{i}$.
We now proceed to define the desired set of permutations.
 
Suppose we are given two permutations $\sigma,\sigma'$ of $\set{1,\ldots,t}$, where $\sigma=(x_1,\ldots,x_t)$ and $\sigma'=(x'_1,\ldots,x'_t)$ (we view each permutation as an ordered set of elements from $\set{1,\ldots,t}$). We say that $\sigma$ and $\sigma'$ are an $i$-swap, for $1\leq i<t$, iff for all $j\not\in\set{i,i+1}$, $x_j=x'_j$, while $x_i=x'_{i+1}$ and $x_{i+1}=x'_i$. In other words, we can obtain $\sigma'$ from $\sigma$ by swapping the elements at the locations $i$ and $i+1$. We say that $\sigma,\sigma'$ are a \emph{swap}, iff they are an $i$-swap for some $1\leq i<t$.

Suppose we are given a sequence $\sigma_0,\sigma_1,\sigma_2,\ldots,\sigma_T$ of permutations of $\set{1,\ldots,t}$. We say that a pair $\set{r,q}$, for $1\leq r\neq q\leq t$ is \emph{explored}, iff there is some $0\leq i\leq T$, such that $r$ and $q$ appear consecutively in $\sigma_i$. We need the following claim.

\begin{claim}\label{claim: sequence of permutations with swaps}
There is a sequence of $T+1$ permutations $\sigma_0,\sigma_1,\ldots,\sigma_T$ of $\set{1,\ldots,t}$, such that:
\begin{itemize}
\item For each $1\leq i<T$, $\sigma_i$ and $\sigma_{i+1}$ are a swap;

\item Every pair $\set{r,q}$, for $1\leq r\neq q\leq t$ is explored; 

\item $\sigma_0=(1,\ldots,t)$ and $\sigma_1=(2,1,3,4,\ldots,t)$.
\end{itemize}
\end{claim}

\begin{proof}
We partition the sequence $\sigma_0,\ldots,\sigma_T$ of $T+1$ permutations into $t$ subsets, $S_0,S_1,\ldots,S_{t-1}$. Set $S_0$ only contains one permutation $\sigma_0=(1,2,3,\ldots,t)$. Set $S_1$ contains the following $t-1$ permutations, set $S_2$ the following $t-2$ permutations, and so on, until the last set $S_{t-1}$ that contains one permutation.

In order to define the permutations of $S_1$, we start with $\sigma_0$. For each $1\leq i\leq t-1$, the $i$th permutation of $S_1$ is obtained by performing an $i$-swap of the preceding permutation. In other words,  we obtain $\sigma_1$ from $\sigma_0$ by swapping the elements $1$ and $2$; $\sigma_2$ is obtained from $\sigma_1$ by swapping the elements $1$ and $3$, and so on. The last permutation of $S_1$ is $(2,3,4,\ldots,t,1)$. It is clear that by the time we reach the last permutation of $S_1$, all pairs $\set{1,i}$ for all $2\leq i\leq t$ have been explored.

We then define the permutations of $S_2$ similarly: for $1\leq i\leq t-2$, the $i$th permutation of $S_2$ is obtained by an $i$-swap of the preceding permutation. The last permutation of $S_2$ is $(3,4,\ldots,t,2,1)$, and by the end of the last permutation of $S_2$, all pairs $\set{2,i}$, for $3\leq i\leq t$ have been explored. We define the remaining sets $S_i$ of permutations similarly.
\end{proof}

We are now ready to complete the proof of Lemma~\ref{lemma: from H1 to clique}. Let $S=(\sigma_0',\sigma'_1,\ldots,\sigma'_T)$ be the sequence of permutations given by Claim~\ref{claim: sequence of permutations with swaps}.
We start by choosing $X'_1\subset X_1$ to be the subset of the first $t$ vertices on the first column of $J_1$. The first set $\pset_0$ of paths contains $t$ paths, where each path consists of a single distinct vertex from $X'_1$. We label these paths $P_1,\ldots,P_t$ according to the order in which the corresponding vertices appear in $X'_1$. For each $1\leq i\leq t$, we will ensure that the ordering $\sigma_i$ of the paths in $\pset_i$ defined by the ordering of their endpoints in $X'_{i+1}$ is exactly the same as the permutation $\sigma'_i$.

Consider some $1\leq i\leq T$, and assume that we are given a collection $X'_i\subset X_i$ of vertices where the paths in $\pset_{i-1}$ terminate. Assume further that $\sigma'_{i}$ is a $j$-swap of $\sigma'_{i-1}$, where $1\leq j<t$. We define a subset $Y'_{i}\subset Y_i$ of vertices as follows.  Denote the vertices of $Y_i$ by $v_1,\ldots,v_{2t}$, where $v_{\ell}$ is the vertex lying in row $\ell$. The idea is to select a consecutive set of $t$ vertices of $Y_i$, such that $v_t$ is the $j$th vertex in this ordered set. Formally, $Y_i'=\set{v_{t-j+1},v_{t-j+2},\ldots,v_{2t-j}}$. Notice that $v_{t}$ and $v_{t+1}$ are the $j$th and the $(j+1)$th vertices of $Y'_i$, respectively. Since $J_i$ is $t$-linked, we can find an $X'_i$-$Y'_i$ linkage $\lset_i$ in $J_i$. Let $\pset_i'$ be the set of paths obtained by concatenating $\pset_{i-1}$ and $\lset_i$. Since $J_i$ is a planar graph, the ordering in which the endpoints of the paths of $\pset_{i-1}$ appear in $X'_i$, and the ordering in which the endpoints of paths in $\pset_i'$ appear in $Y'_i$ are the same. In order to define $E_i'$, we select, for each $v\in Y_i'$, a single edge $e\in E_i$ incident on $v$. For $v\not\in\set{ v_{t},v_{t+1}}$, we select the unique edge of $E_i$ (the horizontal edge of the grid) incident on $v$, and for $v_{t},v_{t+1}$, we select the cross edges incident on these vertices. By concatenating $\pset_i'$ with $E_i'$, we obtain a new set $\pset_i$ of paths, and their corresponding ordering $\sigma_i$. It is easy to see that $\sigma_i$ is obtained from $\sigma_{i-1}$ via a $j$-swap, and therefore, $\sigma_i=\sigma'_i$ as required.

This completes the construction of the $K_t$-minor in $H^*$. Assume now that $J_1$ is a contraction of a sub-wall $W'$ of $W$. We claim that $W$ grasps $K_t$. Indeed, it is easy to see that every path in $\pset$ intersects at least $t$ columns of $J_1$. Therefore, from Observation~\ref{obs: clique grasped by wall}, there is a model of $K_t$ grasped by $W$. Our proof also gives an efficient algorithm, that finds the model of $K_t$ in $H^*$. Therefore, if we are given a model of $H^*$ in $G$, then we can efficiently compute the model of $K_t$ in $G$ with the required properties.
\end{proof}

\subsection{Graph Families $\hset_1,\hset_2,\hset_3$}

In this section, we define three graph families $\hset_1,\hset_2,\hset_3$. We will show that if $G$ contains a graph from any of these families as a minor, then it contains a $K_t$-minor. Before we proceed to define these families of graphs, we need a few definitions.

Let $G'$ be the $(h\times r)$-grid. As with walls, we define sub-grids of $G'$ spanned by subsets of rows and columns of $G'$. For any consecutive subset $\rset'$ of the rows of $G'$, the sub-grid of $G'$ spanned by $\rset'$ is $G'[S]$, where $S$ contains all vertices $v(i,j)$ with $R_i\in \rset'$. Similarly, given any consecutive subset $\rset'$ of the rows of $G'$, and a consecutive subset $\cset'$ of the columns of $G'$, the sub-grid of $G$ spanned by the rows in $\rset'$ and the columns in $\cset'$ is $G[S']$, where $S'$ contains all vertices $v(i,j)$ with $R_i\in \rset'$ and $C_j\in \cset'$.

Assume now that we are given two vertices $v(i,j)$ and $v(i',j')$ of the grid $G'$, where $j\leq j'$. We say that $v(i,j)$ and $v(i',j')$ are \emph{separated} by column $C_{j''}$ of the grid iff $j<j''<j'$. We say that they are separated by $x$ columns of $G'$ iff at least $x$ distinct columns $C_{j''}$ separate $v(i,j)$ from $v(i',j')$, or, equivalently, $j'-j>x$.

\subsection{Graph Family $\hset_1$}
A graph $H$ belongs to the family $\hset_1$ iff $H$ is the union of the $(h\times r)$ grid $G'$, where $h>2t$, and a set $E'$ of $T$ edges, such that the following additional conditions hold. Let $G_1$ be the sub-grid of $G'$ spanned by the top $t$ rows, $G_2$ the sub-grid of $G'$ spanned by the bottom $t$ rows, and $G_3$ the sub-grid of $G'$ spanned by the remaining rows. Let $X$ be the set of all endpoints of the edges in $E'$. Then the following conditions must hold: 

\begin{itemize}
\item $X\subseteq V(G_3)$, and $|X|=2T$, so all edges in $E'$ have distinct endpoints.

\item Every pair of vertices in $X$ is separated by at least $t+2$ columns, and no vertex of $X$ belongs to the first column of $G'$.
\end{itemize}

Let $B_1$ be the sub-grid of $G'$ spanned by the first $t$ columns of $G'$ and all rows of $G'$.

\begin{theorem}\label{thm: family H1 to clique}
Let $G$ be any graph, and assume that it contains a graph $H\in \hset_1$ as a minor. Then $G$ contains a $K_t$-minor. Moreover, if $G$ contains a wall $W$, and $B_1$ is a contraction of a sub-wall of $W$, then $G$ contains a model of a $K_t$-minor grasped by $W$, and this model can be found efficiently given a model of $H$ in $G$.
\end{theorem}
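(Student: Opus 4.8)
The plan is to reduce an arbitrary graph $H\in\hset_1$ to the graph $H^*$ from Lemma~\ref{lemma: from H1 to clique}, and then invoke that lemma. Recall that $H$ consists of the $(h\times r)$-grid $G'$ together with a set $E'$ of $T$ edges whose $2T$ endpoints lie in the ``middle strip'' $G_3$, are pairwise separated by at least $t+2$ columns, and avoid the first column of $G'$. The graph $H^*$ has exactly the same flavor: it is a $(2t\times(Tt+1))$-grid together with, in each of the $T$ ``even blocks'' $M_i$, a pair of crossing diagonal edges incident on the middle-row vertices. The key geometric point is that a crossing pair of edges $e_i,e_{i+1}$ in a single grid cell of $H^*$ can be realized as a minor using just a slightly larger grid region plus \emph{any} edge joining two vertices that lie in different rows and are separated by at least one column: route the ``crossing'' through that extra edge using an L-shaped detour, exactly as in the wall-cross constructions of Theorem~\ref{thm: cross-wall} and in the omitted $H_2$ argument. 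So the strategy is: show $H^*$ (or a graph obtained from it by contractions, which suffices by Observation~\ref{obs: transitivity of minors}) is a minor of $H$.

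Concretely, first I would discard all but $2t$ consecutive rows of $G'$ chosen so that $G_1$ (top $t$ rows) and $G_2$ (bottom $t$ rows) are included and $t$ rows of the middle strip $G_3$ remain — actually, more carefully, we keep $t$ rows from $G_1$, some rows straddling the endpoints of $E'$, and $t$ rows from $G_2$, contracting vertical paths so that each retained endpoint of $E'$ sits on one of $2t$ rows. The ``$h>2t$'' and ``endpoints in $G_3$'' hypotheses guarantee there are at least $t$ rows strictly above and strictly below every endpoint of $E'$, which is what is needed for the $K_t$-construction (the paths must pass through $t$ rows, and Lemma~\ref{lemma: from H1 to clique}'s construction only uses that the odd blocks are $t$-linked and the even-block edges reach the required rows). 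Next, I partition the columns: between consecutive endpoints of $E'$ there are at least $t+2$ columns, which gives enough room to carve out, around each edge $e\in E'$, a little region functioning as an even block $M_i$ of $H^*$, with the $t$ odd blocks $J_i$ — each a $(2t\times t)$-subgrid, which by Claim~\ref{claim: t-linkedness of grid} is $t$-linked — sitting between them. The condition that no endpoint of $E'$ is in the first column ensures the leftmost block $B_1$ is a genuine $(2t\times t)$-subgrid of $G'$ disjoint from all the edge gadgets; this is exactly the block whose role in Lemma~\ref{lemma: from H1 to clique} lets the wall grasp the minor.

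The one real subtlety is that $H^*$ uses a \emph{pair} of crossing diagonals in each even block, whereas a graph in $\hset_1$ gives us only a \emph{single} extra edge $e\in E'$ per block. I would resolve this by observing that the proof of Lemma~\ref{lemma: from H1 to clique} never needs both diagonals simultaneously: in each even block $M_i$ it selects a subset $E'_i$ of $t$ edges, and for all but the two ``swap'' vertices $v_t,v_{t+1}$ it uses horizontal grid edges, using cross edges only for $v_t$ and $v_{t+1}$ — and in fact a single edge joining $v_t$'s row to $v_{t+1}$'s row (skipping one column) suffices to perform the required $j$-swap of the two paths routed through those rows, via an L-shaped detour just like in the omitted $H_2$ argument. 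So what we actually need in each even block is: all the straight horizontal connections, plus one ``swap edge'' joining two appropriately positioned vertices — and the single edge $e=(x,y)\in E'$ supplies that swap edge, with the rest of the block being plain grid. Thus $H$ contains, as a minor, a graph on which the permutation-swapping argument of Lemma~\ref{lemma: from H1 to clique} goes through verbatim, yielding $t$ pairwise-adjacent disjoint paths and hence a $K_t$-minor.

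For the ``grasped'' conclusion: if $B_1$ is a contraction of a sub-wall $W'$ of $W$, then taking $J_1$ to be (a sub-grid of a contraction of) $B_1$, every one of the $t$ constructed paths passes through at least $t$ columns of $B_1$, so by Observation~\ref{obs: clique grasped by wall} the resulting $K_t$-model is grasped by $W$; and since every step — the row/column contractions, the $t$-linkage routings (Claim~\ref{claim: t-linkedness of grid}), the permutation sequence (Claim~\ref{claim: sequence of permutations with swaps}), the L-shaped detours — is constructive, the whole model is computable in polynomial time from a model of $H$ in $G$. I expect the main obstacle to be purely bookkeeping: setting up the column/row partition of $G'$ so that the blocks $B_1=J_1,M_1,J_2,\dots$ have exactly the right dimensions and alignments to invoke $t$-linkedness and to route the detour edges, and checking that the separation-by-$(t+2)$-columns hypothesis is precisely what the detour L-shapes require (one column of slack for the ``skip'' plus room for the $t$-wide odd block). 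None of this is conceptually hard once the reduction to (a contraction of) $H^*$ is identified as the right move.
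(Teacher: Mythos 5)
Your reduction to $H^*$ has a genuine gap, and it stems from a misreading of the geometry of $\hset_1$. You treat each edge $e\in E'$ as a \emph{local} object that can be enclosed in ``a little region functioning as an even block $M_i$ of $H^*$,'' with $t$-linked odd blocks sitting between consecutive gadgets. But the separation condition in the definition of $\hset_1$ applies to \emph{every} pair of vertices of $X$, including the two endpoints of the same edge: each $e=(u,u')\in E'$ has its endpoints at least $t+2$ columns apart, and nothing prevents the endpoints of distinct edges from interleaving or nesting in the column order (e.g.\ $c(x_1)<c(x_2)<c(y_1)<c(y_2)$, or even $T$ pairwise ``crossing'' chords). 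So the edges are long-range chords whose column-projections may overlap arbitrarily, and the sequential decomposition $J_1,M_1,J_2,M_2,\dots$ with one swap gadget per edge simply does not exist. This is fatal for the transplanted swap argument: if a path is ``in mid-jump'' along $e_1$ (absent from all columns strictly between $c(x_1)$ and $c(y_1)$), it cannot participate in — or even be correctly positioned for — a swap whose gadget lies inside that column interval, and the planar order-preservation bookkeeping of Lemma~\ref{lemma: from H1 to clique} breaks down. This generality is not hypothetical: the graphs of $\hset_1$ arising later in the paper (e.g.\ in the proof of Theorem~\ref{thm: family H2 to clique}) really do have interleaved edges. No cheap fix works either: an arbitrary matching of $T$ chords need not contain anywhere near $T$ pairwise non-interleaved ones.

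The paper's proof sidesteps swaps entirely and uses the edges of $E'$ for a different purpose: they become the images of the \emph{edges} of $K_t$, not routing gadgets. One labels the $T$ edges bijectively by the unordered pairs $\{i,j\}\subseteq\{1,\dots,t\}$ and labels the two endpoints of the $\{i,j\}$-edge by $i$ and $j$. Then one builds $t$ disjoint paths $P_1,\dots,P_t$ in the grid that \emph{never change their top-to-bottom order}; processing the $2T$ endpoints one at a time in column order (irrespective of which edge each belongs to), the band of $t$ rows occupied by the paths is shifted inside each $(\ge t+2)$-column gap — using exactly the $t$-linkedness of Claim~\ref{claim: t-linkedness of grid} — so that when the paths pass the column of a vertex labelled $\ell$, the $\ell$th path from the top runs through that vertex. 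Mapping vertex $i$ of $K_t$ to $P_i$ and edge $\{i,j\}$ to the correspondingly labelled edge of $E'$ gives the model, and the grasping claim follows because every $P_i$ crosses all $t$ columns of $B_1$. Your last paragraph (grasping via Observation~\ref{obs: clique grasped by wall}, efficiency) is fine once a correct construction of the paths is in place, but the core routing argument needs to be replaced along these lines.
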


\begin{proof}
Let $\mset$ be the set of all unordered pairs of distinct elements of $\set{1,\ldots,t}$, that is, $\mset=\set{\set{i,j}\mid 1\leq i,j\leq t, i\neq j}$,  and $|\mset|=|E'|=T$. We assign labels $\ell(e)\in \mset$ to the edges $e\in E'$, so that every label in $\mset$ is assigned to exactly one edge. We then assign labels $\ell(u)$ to the vertices $u\in X$, as follows. For each edge $e=(u,u')\in E'$, if $\ell(u,u')=\set{i,j}$, then we assign label $i$ to $u$ and label $j$ to $u'$, breaking the symmetry arbitrarily. 

We will construct $t$ disjoint paths $P_1,\ldots,P_t$ in the grid $G'$ in such a way that for each $1\leq i\leq t$, every vertex of $X$ with label $i$ belongs to $P_i$. It is then immediate to find a model of the $K_t$-minor in $H$: every vertex of $K_t$ is mapped to a distinct path $P_i$, and the edges of $K_t$ are mapped to the edges of $E'$. Since for each pair $\set{i,j}$ with $1\leq i\neq j\leq t$, there is an edge $e\in E'$ whose endpoints are labeled $i$ and $j$ respectively, every pair $P_i$, $P_j$ of distinct paths is connected with an edge in $E'$.

Therefore, in order to show that $G$ contains a $K_t$-minor, it is now enough to construct the set $\set{P_1,\ldots,P_t}$ of disjoint paths, such that for each $1\leq i\leq t$, every vertex of $X$ with label $i$ belongs to $P_i$. We assume that $X=\set{v(i_1,j_1),v(i_2,j_2),\ldots,v(i_{2T},j_{2T})}$, where $j_1<j_2<\ldots<j_{2T}$. For each $1\leq q\leq 2T$, let $M_q$ be the sub-grid of $G'$ spanned by columns $C_{j_q-1},C_{j_q},C_{j_q+1}$. For $1<q\leq 2T$,  let $J_q$ be the sub-grid of $G'$ spanned by columns $C_{j_{q-1}+1},\ldots,C_{j_q-1}$, and let $J_1$ be the sub-grid of $G'$ spanned by columns $C_1,\ldots,C_{j_1-1}$. Recall that from the definition of $\hset_1$, for $1< q\leq 2T$, $J_q$ contains at least $t+2$ columns.

Fix some $1\leq q\leq 2T$ and let $\ell(v(i_q,j_q))=\ell$. Since all vertices of $X$ lie in $G_3$, $t<i_q\leq  h-t$. Let $\rset_q$ be a set of $t$ consecutive rows, such that $R_{i_q}$ is the $\ell$th row in this set. That is, $\rset_q=\set{R_{i_q-\ell+1},\ldots,R_{i_q-\ell +t}}$. Let $S_q$ be the set of $t$ vertices in column $C_{j_q-1}$ that belong to the rows of $\rset_q$, and let $T_q$ be the set of $t$ vertices in column $C_{j_q+1}$ that belong to the rows of $\rset_q$.

We are now ready to define our set $\pset=\set{P_1,\ldots,P_t}$ of paths. The paths in $\pset$ are obtained by concatenating $4T$ sets of paths: $\pset_1,\pset_1',\pset_2,\pset_2',\ldots,\pset_{2T},\pset_{2T}'$. For each $1\leq q\leq 2T$, set $\pset_q'$ contains, for each row $R\in \rset_q$, the segment of $R$ from the unique vertex in $S_q\cap V(R)$ to the unique vertex in $T_q\cap V(R)$. From the definition of $\pset_q'$, if $\ell(v(i_q,j_q))=i$, then the $i$th path of $\pset_q'$ from the top contains $v(i_q,j_q)$. We now turn to define the path sets $\pset_q$ for $1\leq q\leq 2T$. Assume first that $q>1$. The set $\pset_q$ of paths is defined to be a set of $q$ disjoint paths connecting the vertices of $T_{q-1}$ to the vertices of $S_q$ in graph $J_q$. Notice that from Claim~\ref{claim: t-linkedness of grid}, since $J_q$ contains at least $t+2$ columns, such a set of paths exists. Finally, assume that $q=1$. Then $\pset_1$ contains $t$ rows of $J_1$, which are sub-paths of the rows in $\rset_1$. The final set $\pset$ of paths is obtained by concatenating the paths in $\pset_1,\pset_1',\pset_2,\pset_2',\ldots,\pset_{2T},\pset_{2T}'$. It is immediate to verify that $\pset$ contains $t$ disjoint paths. We assume that $\pset=\set{P_1,\ldots,P_t}$, where the paths are indexed in their natural top-to-bottom ordering. From our construction, for each $1\leq i\leq t$, every vertex of $X$ with label $i$ belongs to $P_i$.  Notice that every path in $\pset$ intersects every column of $B_1$. Therefore, if $B_1$ is a contraction of a sub-wall of $W$, then the $K_t$-minor that we have constructed is grasped by $W$. Our proof also gives an efficient algorithm, that finds the model of $K_t$ in $H$. Therefore, if we are given a model of $H$ in $G$, then we can efficiently compute the model of $K_t$ in $G$ with the required properties.
\end{proof}

\subsection{Graph Family $\hset_2$}
A graph $H$ belongs to the family $\hset_2$ iff $H$ is the union of the $(h\times r)$ grid $G'$, where $h>4t$, and a set $E'$ of $2T+2$ edges, and the following conditions hold. Let $G_1$ be the sub-grid of $G'$ spanned by the top $t$ rows, $G_2$ the sub-grid of $G'$ spanned by the bottom $t$ rows, and $G_3$ the sub-grid of $G'$ spanned by rows $\set{R_{2t+1},\ldots,R_{h-2t}}$.
We assume that $E'=\set{e_1,\ldots,e_{2T+2}}$, and for each $1\leq i\leq 2T+2$, the endpoints of $e_i$ are labeled as $x_i$ and $y_i$. Let $X=\set{x_i\mid 1\leq i\leq 2T+2}$, and $Y=\set{y_i\mid 1\leq i\leq 2T+2}$.
Then the following conditions must hold: 

\begin{itemize}

\item $X\cup Y$ contains $4T+4$ distinct vertices.

\item $X\subseteq V(G_3)$, and every pair of vertices in $X$ is separated by at least $t+2$ columns.

\item $Y\subseteq V(G_1)\cup V(G_2)$.
\end{itemize}

Let $B_1$ be the sub-grid of $G'$ spanned by the first $t$ columns of $G'$ and all rows of $G'$.

\begin{theorem}\label{thm: family H2 to clique}
Let $G$ be any graph, that contains a graph $H\in \hset_2$ as a minor. Then $G$ contains a $K_t$-minor. Moreover, if $G$ contains a wall $W$, and $B_1$ is a contraction of a sub-wall of $W$, then $G$ contains a model of a $K_t$-minor grasped by $W$, and this model can be found efficiently given a model of $H$ in $G$.
\end{theorem}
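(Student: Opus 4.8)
The plan is to combine the embedding technique of Theorem~\ref{thm: family H1 to clique} with a ``detour'' mechanism analogous to the (now-commented-out) graph $H_2$: instead of connecting paths directly by the edges of $E'$ inside $G_3$, we will use the extra edges whose $y$-endpoints lie in the top/bottom strips $G_1\cup G_2$ to \emph{reroute} one path at a time through the top or bottom of the grid, producing the required ordering changes. First I would set up the labeling exactly as before: let $\mset$ be the set of the $T$ unordered pairs from $\{1,\dots,t\}$, and additionally pick two ``dummy'' labels or simply use $2T+2$ edges so that every pair in $\mset$ is realized, with two spare edges to handle parity. Assign to each edge $e_i\in E'$ a label, and thereby label its endpoints $x_i,y_i$; the $x$-endpoints all sit in $G_3$, well separated horizontally, while the $y$-endpoints sit in the boundary strips. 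I would then sort the vertices of $X$ by column, $x$ lying in columns $j_1<j_2<\dots$, and cut the grid into alternating blocks: wide ``linkage blocks'' $J_q$ (with $\geq t+2$ columns, so Claim~\ref{claim: t-linkedness of grid} applies), and narrow ``connector blocks'' $M_q$ around each column $C_{j_q}$.

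The core of the argument is the iterative construction of $t$ disjoint paths $P_1,\dots,P_t$ running left-to-right, maintaining the invariant that after block $q$ the path endpoints occupy a \emph{consecutive} set of $t$ rows in the current column, in an order described by a permutation $\sigma_q$ of $[t]$. In each connector block $M_q$, the vertex $x_q$ carries a label $i$; I route the $i$th path (from the top) so that it passes through $x_q$, and then out through the edge $e_q$ to $y_q$ in $G_1$ or $G_2$. Now the key new ingredient: because $y_q$ is in the top strip $G_1$ (or bottom strip $G_2$), I can run that path along a horizontal row inside $G_1$ across several columns and bring it back down into $G_3$ at a chosen row, which has the effect of a \emph{shift}: removing element $i$ from its position and re-inserting it at the top (or bottom) of the consecutive block. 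Using the $t$-linkedness of each $J_q$ I then re-align all $t$ endpoints into a consecutive block again before the next connector. Since each edge of $E'$ is used to connect exactly the pair $P_i,P_j$ labeling it, and every pair in $\mset$ gets realized, mapping each vertex of $K_t$ to a distinct path $P_i$ yields a model of $K_t$ in $H$ (hence in $G$, via Observation~\ref{obs: transitivity of minors}). For the ``grasped'' conclusion, I note as in Theorem~\ref{thm: family H1 to clique} that every path $P_i$ meets all $t$ columns of $B_1$, so when $B_1$ is a contraction of a sub-wall of $W$, Observation~\ref{obs: clique grasped by wall} gives a model of $K_t$ grasped by $W$; all steps are constructive, giving the claimed efficient algorithm.

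I expect the main obstacle to be \textbf{keeping the $t$ paths globally disjoint while performing the reroutes through $G_1\cup G_2$}: when a path detours into the top strip and travels horizontally, it must not collide with a \emph{later} path that also needs to detour through the same strip, nor with the ``straight'' portions of the other $t-1$ paths passing underneath. The resolution is the horizontal-separation hypothesis on $X$ (and the freedom to choose which of $G_1$ or $G_2$ to detour through, plus enough vertical room since $h>4t$ and the strips have height $t$ each): by processing the connector blocks left to right and always routing each detour within the horizontal window strictly between consecutive $x$-columns, the detour of one path occupies a horizontal band that is disjoint from those of all others, and the two spare edges absorb any parity issue in making the permutation sequence realize every pair. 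A careful but routine bookkeeping of which rows in $G_1$, $G_3$, $G_2$ are used — entirely analogous to the bookkeeping in the proof of Lemma~\ref{lemma: from H1 to clique} and in Theorem~\ref{thm: family H1 to clique} — then finishes the proof.
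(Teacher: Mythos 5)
There is a genuine gap, and it sits exactly where you predicted the main obstacle would be. Your plan routes the detouring path out of $G_3$ through the edge $e_q$, along a horizontal run inside $G_1$ (or $G_2$), and back down into $G_3$, and you justify disjointness of these runs by ``the horizontal-separation hypothesis on $X$.'' But the definition of $\hset_2$ imposes \emph{no} separation condition on $Y$: the $y$-endpoints may sit anywhere in $G_1\cup G_2$, e.g.\ all of them packed into the two leftmost columns of $G_1$, or far to the left of their corresponding $x$-endpoints. In that case every detour's horizontal run must traverse a common range of columns of the top strip, so the runs must occupy pairwise distinct rows there; you need on the order of $T=\Theta(t^2)$ such runs, while the strip has only $t$ rows. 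So the claimed disjoint ``horizontal bands'' do not exist in general, and the construction collapses. A secondary (but real) inconsistency: you simultaneously make $e_q$ an \emph{internal} edge of the single path $P_i$ (the detour) and assert that ``each edge of $E'$ is used to connect exactly the pair $P_i,P_j$ labeling it.'' An edge internal to $P_i$ cannot serve as the $K_t$-edge between $P_i$ and $P_j$; if you use the shift mechanism of the (suppressed) graph $H_2$, the pair-connections must instead come from vertical grid edges between consecutive rows of the current block, and then the whole pair-labeling of $E'$ is the wrong bookkeeping.

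For comparison, the paper avoids sending any of the $t$ clique paths into $G_1\cup G_2$ at all. It builds a single Hamiltonian (snake) path $P_1$ of $G_1$ and $P_2$ of $G_2$, cuts them into $\lfloor|Y_1|/2\rfloor+\lfloor|Y_2|/2\rfloor\geq T$ disjoint segments whose endpoints lie in $Y$, and prepends/appends the two incident edges of $E'$ to each segment. Each resulting path joins two vertices of $X$ and is internally disjoint from $G_3$; contracting $G_1\cup G_2$ away, these become $T$ ``virtual edges'' between well-separated vertices of $G_3$, i.e.\ a graph of $\hset_1$, and Theorem~\ref{thm: family H1 to clique} finishes the proof. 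This is why the family is defined with $2T+2$ edges rather than $T$: two edges per realized pair, plus two to absorb the parity loss in the two strips. That reduction is insensitive to where the $y$-endpoints lie, which is precisely the point your argument is missing; if you want to rescue your approach you would need some analogous device (a single path through the strip collecting the $y$-endpoints) rather than per-edge detours.
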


\begin{proof}
Let $Y_1=Y\cap V(G_1)$ and $Y_2=Y\cap V(G_2)$. We build two paths: path $P_1\subseteq G_1$ containing all vertices of $G_1$, and path $P_2\subseteq G_2$, containing all vertices of $G_2$. In order to construct $P_1$, we start with $R_1$, and add the edge $(v(1,r),v(2,r))$ to $P_1$. We then add the row $R_2$ and the edge $(v(2,1),v(3,1))$. We continue in this fashion, until $P_1$ contains all rows $R_1,\ldots, R_t$. Path $P_2$ is constructed similarly in $G_2$. Clearly, $Y_1\subseteq V(P_1)$ and $Y_2\subseteq V(P_2)$. We can find $\floor{|Y_1|/2}$ disjoint sub-paths of $P_1$, such that for each sub-path, both its endpoints belong to $Y_1$. Let $\mset_1$ be this set of paths. Similarly, we can find a set $\mset_2$ of $\floor{|Y_2|/2}$ disjoint sub-paths of $P_2$, such that for each sub-path, both its endpoints belong to $Y_2$. Let $\mset=\mset_1\cup \mset_2$. Then $|\mset|\geq T$. If $|\mset|>T$, then we remove arbitrary paths from $\mset$ until $|\mset|=T$ holds. Consider some path $P\in\mset$, and let $y,y'$ be its endpoints. Let $e\in E'$ be the unique edge incident on $y$, and let $e'\in E'$ be the unique edge incident on $y'$. We define $P'$ to be the concatenation of $e,P$ and $e'$. Let $\mset'=\set{P'\mid P\in \mset}$.

We now construct a minor $H'$ of $H$ as follows. Graph $H'$ is the union of the sub-grid $G'\setminus (G_1\cup G_2)$ of $H$, and a set $E''$ of edges, where for each path $P'\in \mset'$ with endpoints $x$ and $y$, $E''$ contains the edge $(x,y)$. It is immediate to verify that $H'\in \hset_1$. Since $H'$ is a minor of $G$, from Theorem~\ref{thm: family H1 to clique}, it contains a $K_t$-minor. Let $B'_1$ be the sub-grid of $G'\setminus (G_1\cup G_2)$, spanned by its first $t$ columns. Then, if $B_1$ is a contraction of a sub-wall of $W$, so is $B_1'$, and therefore there is a model of a $K_t$-minor grasped by $W$ in $G$. Our proof also gives an efficient algorithm, that finds the model of $K_t$ in $H$. Therefore, if we are given a model of $H$ in $G$, then we can efficiently compute the model of $K_t$ in $G$ with the required properties.
\end{proof}

\subsection{Graph Family $\hset_3$}
A graph $H$ belongs to the family $\hset_3$ iff $H$ is the union of the $(h\times r)$ grid $G'$, where $h>4t$, and a set $E'$ of $10T+6$ edges, and the following conditions hold. Let $G_1$ be the sub-grid of $G'$ spanned by the top $t$ rows, $G_2$ the sub-grid of $G'$ spanned by the bottom $t$ rows, and $G_3$ the sub-grid of $G'$ spanned by rows $\set{R_{2t+1},\ldots,R_{h-2t}}$.
We assume that $E'=\set{e_1,\ldots,e_{10T+6}}$, and for each $1\leq i\leq 10T+6$, the endpoints of $e_i$ are labeled as $x_i$ and $y_i$. We call $x_i$ the $x$-endpoint of $e_i$ and $y_i$ its $y$-endpoint. Let $X=\set{x_i\mid 1\leq i\leq 10T+6}$, and $Y=\set{y_i\mid 1\leq i\leq 10T+6}$.
Then the following conditions must hold: 

\begin{itemize}

\item $X\cup Y$ contains $20T+12$ distinct vertices.

\item $X\subseteq V(G_3)$, and every pair of vertices in $X$ is separated by at least $2t+2$ columns in $G'$.

\item For each $1\leq i\leq 10T+6$, $x_i$ and $y_i$ are separated by at least $t+1$ columns in $G'$.

\item No vertex of $X\cup Y$ lies in the first $t$ columns, or in the last column of $G'$.
\end{itemize}

Let $J_1$ be the sub-grid of $G'$ spanned by the first $t+1$ columns of $G'$ and all rows of $G'$.

\begin{theorem}\label{thm: family H3 to clique}
Let $G$ be any graph, that contains a graph $H\in \hset_3$ as a minor. Then $G$ contains a $K_t$-minor. Moreover, if $G$ contains a wall $W$, and $J_1$ is a contraction of a sub-wall of $W$, then $G$ contains a model of a $K_t$-minor grasped by $W$, and this model can be found efficiently given a model of $H$ in $G$.
\end{theorem}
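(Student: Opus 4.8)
The plan is to follow the pattern of the proof of Theorem~\ref{thm: family H2 to clique}: from $H\in\hset_3$ I would produce a graph $H'\in\hset_2$ that is a minor of $H$, and then invoke Theorem~\ref{thm: family H2 to clique} (together with Observation~\ref{obs: transitivity of minors}) to get a $K_t$-minor of $G$. The abundance of edges is exactly the budget for this: $\hset_2$ needs only $2T+2$ edges while $H$ has $10T+6$, and most of them will be discarded.

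First I would select a sub-collection of $2T+2$ edges $e_i=(x_i,y_i)$ of $E'$ and, for each, build a path $P_i=e_i\cup Q_i$ from $x_i$ to a vertex $z_i\in G_1\cup G_2$ (the top, resp.\ bottom, $t$ rows of $G'$), where $Q_i$ runs from $y_i$ horizontally along its row toward $x_i$ and then climbs a single column up (or down) into $G_1$ (or $G_2$); both the horizontal run and the climbing column are drawn from the band of at least $t$ columns lying strictly between $x_i$ and $y_i$, which exists because $x_i$ and $y_i$ are separated by at least $t+1$ columns. Contracting each $P_i$ to an edge $(x_i,z_i)$ and deleting from $G'$ the columns occupied by the corridors $Q_i$ yields a graph $H'$. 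Since the corridors are column-disjoint from the surviving grid, $H'$ is a sub-grid of $G'$ (all rows, the surviving columns) together with the $2T+2$ contracted edges; its top $t$ rows are still $G_1$, its bottom $t$ rows $G_2$, and its middle block is still $G_3$, so the new $x$-endpoints $x_i$ lie in $V(G_3)$ and the new $y$-endpoints $z_i$ in $V(G_1)\cup V(G_2)$. The one quantitative point is to keep the $x_i$'s pairwise separated by at least $t+2$ columns in $H'$: they are separated by at least $2t+2$ columns in $G'$, so it suffices to route the corridors so that at most $t$ deleted columns land between any two consecutive retained $x$-endpoints. Granting this, $H'\in\hset_2$, and Theorem~\ref{thm: family H2 to clique} applies.

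For the grasping statement: since no vertex of $X\cup Y$ lies in the first $t$ columns of $G'$, and the horizontal run of each $Q_i$ only moves toward $x_i$ (which itself lies in a column of index at least $t+1$), every corridor can be kept inside columns of index at least $t+1$. Hence the first $t$ columns of $G'$ are never deleted, so the first $t$ columns of $H'$'s grid coincide with the first $t$ columns of $G'$, which are contained in $J_1$. Therefore, if $J_1$ is a contraction of a sub-wall of $W$, so is the sub-grid spanned by the first $t$ columns of $H'$'s grid --- exactly the sub-grid ``$B_1$'' in the statement of Theorem~\ref{thm: family H2 to clique} --- and that theorem returns a model of $K_t$ grasped by $W$. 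All steps (selecting the edges, routing the corridors, performing the contractions) are explicit, so a model of $H$ in $G$ is transformed into a model of $K_t$ in $G$ in polynomial time.

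I expect the main obstacle to be the combinatorial selection: one must pick $2T+2$ of the $10T+6$ edges and route their corridors so that the corridors are pairwise vertex-disjoint, miss all retained $x$-endpoints, stay in columns of index at least $t+1$, and contribute at most $t$ deleted columns to every gap between consecutive retained $x$-endpoints. This is where the fivefold surplus of edges is spent: scanning the edges and committing corridors greedily, one discards an edge whenever its only feasible corridors would collide with an already-committed corridor or overload an $x$-gap, and a counting argument --- using that the $x$-endpoints are $2t+2$-column-separated and that each edge offers a width-$\ge t$ window of candidate corridor columns --- bounds the number of discards by $8T+4$, leaving $2T+2$ edges.
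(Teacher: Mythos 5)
Your reduction to $\hset_2$ is the right instinct for part of the problem --- indeed the paper's proof begins exactly this way, splitting off the subset $E_1$ of edges whose $y$-endpoints already lie in $G_1\cup G_2$ and invoking Theorem~\ref{thm: family H2 to clique} when $|E_1|\geq 2T+2$. But for the remaining (main) case your corridor construction has a genuine gap, and it sits precisely where you locate "the main obstacle." The definition of $\hset_3$ imposes \emph{no} mutual separation on the $y$-endpoints: only the $x$-endpoints are pairwise $2t+2$-column-separated, and each $y_i$ is only separated from its own $x_i$. So $\Theta(t^2)$ of the $y$-endpoints may be crammed into a band of $O(t)$ columns (or sit in the same column as, or adjacent to, retained $x$-endpoints of other edges). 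Your corridors then cannot be made column-disjoint by any discarding scheme whose loss is bounded by a constant fraction of the edges, because the number of corridors forced through a narrow region can exceed the number of columns available there. A second, independent problem is the surgery itself: if you delete every column a corridor touches, you delete the climbing column and with it the vertex $z_i$ that is supposed to be the new $y$-endpoint of the contracted edge; if instead you delete only the sub-$G_3$ portion of those columns, the survivor is no longer a sub-grid and $\hset_2$ does not apply. Finally, a climbing column traverses all rows between $y_i$'s row and $G_1$, so it can intersect the horizontal runs of other corridors; controlling this requires more than counting deleted columns per $x$-gap.

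The paper's proof avoids individual corridors entirely. It partitions the columns of $G'$ into blocks (a width-$(2t+1)$ block around each $x$-endpoint, plus fillers and a special first block), and proves --- via an orientation argument and an odd/even-level selection in the resulting arborescences --- that one can keep $|E''|/4\geq 2T+1$ edges so that no block contains both a kept $x$-endpoint and a kept $y$-endpoint. This lets it split $G'$ into an upper region $G_1'$ and a lower region $G_2'$ block by block, thread a single snake path through $G_2'$ that visits \emph{all} kept $y$-endpoints at once (so their pairwise disjoint connections come for free as segments of one path), and then run the $\hset_1$-style linkage argument for the $t$ branch paths entirely inside $G_1'$. That global decomposition is the missing idea: it replaces your per-edge disjoint routing, which the hypotheses of $\hset_3$ are too weak to support, by one structure whose disjointness is automatic.
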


\begin{proof}
Let $E_1\subseteq E'$ be the set of edges $(x_i,y_i)$, where $y_i\in V(G_1)\cup V(G_2)$. Assume first that $|E_1|\geq 2T+2$. Then $G'\cup E_1$ is a graph from family $\hset_2$, and invoking Theorem~\ref{thm: family H2 to clique} finishes the proof. From now on we assume that $|E_1|<2T+2$.

Let $E''=E'\setminus E_1$, so $|E''|\geq 8T+4$.
Our next step is to partition the grid $G'$ into blocks, spanned by disjoint consecutive sets of columns. Consider some edge $e=(x,y)\in E''$, and let $C_{i_e}$ be the column to which $x$ belongs. We define the block $B(e)$ to be the sub-grid of $G'$ spanned by columns $C_{i_e-t},\ldots,C_{i_e+t}$. Notice that since every pair of vertices of $X$ is separated by at least $2t$ columns, for $e\neq e'$, the blocks $B(e)$ and $B(e')$ are disjoint. We also define a special block $B^*$, that is spanned by the first $t$ columns of $G'$.
Let $\bset$ be the set of blocks we have constructed so far. Consider any consecutive pair $B,B'$ of blocks in $\bset$. Assume that $B$ appears to the left of $B'$, and let $C_i$ be the last column of $B$ and $C_{i'}$ the first column of $B'$.
If $i'=i+2$, then we extend $B$ to contain column $C_{i+1}$. Otherwise,
 if $i'>i+2$, then we add a new block $B''$ to $\bset$, spanned by columns $C_{i+1},\ldots,C_{i'-1}$. 
 Let $B$ be the last block in $\bset$, and let $C_{i}$ be the last column of $B$. If $i<r$, then we add a new block spanned by columns $C_{i+1},\ldots, C_r$ to $\bset$. This finishes the definition of the set $\bset$ of blocks. Clearly, all blocks in $\bset$ are disjoint, and every column of $G'$ belongs to some block. Each block, except for possibly the last one, contains at least two columns. As observed before, each block $B\in \bset$ may contain at most one vertex of $X$ (and possibly several vertices of $Y$). However, for each edge $e=(x,y)\in E''$, since $x$ and $y$ are separated by at least $t+1$ columns,  they must belong to distinct blocks. Block $B^*$ contains no vertices of $X\cup Y$. Our next step is to select a subset $E^*\subseteq E''$ of at least $2T+1$ edges, such that each block $B\in \bset$ may contain an $x$-endpoint of an edge in $E^*$, or $y$-endpoints of edges in $E^*$, but not both. 

\begin{claim}
We can efficiently find a subset $E^*\subseteq E''$ of $|E''|/4\geq 2T+1$ edges, such that, if we denote by $X^*$ the $x$-endpoints of the edges in $E^*$, and by $Y^*$ their $y$-endpoints, then for each block $B\in \bset$, either $V(B)\cap X^*=\emptyset$, or $V(B)\cap Y^*=\emptyset$.
\end{claim}

\begin{proof}
Consider any edge $e=(x,y)\in E''$. Let $B$ be the block to which $x$ belongs, and $B'$ the block to which $y$ belongs, so, as observed before, $B\neq B'$. We say that $e$ is \emph{directed towards right} if $B$ appears before $B'$ in the natural left-to-right ordering of the blocks; otherwise, we say that it is directed towards left. At least half the edges in $E''$ are all directed towards right, or are all directed towards left. We assume w.l.o.g. that at least half the edges of $E''$ are directed towards right. Let $\tilde E$ be this set of edges.


Next, we build a directed graph $\tilde H$, with $V(\tilde H)=\set{v_B\mid B\in \bset}$. There is an edge $(v_B,v_{B'})$ in $\tilde H$  iff there is an edge $(x_i,y_i)\in \tilde E$ with $x_i\in B$ and $y_i\in B'$. Therefore, there is a natural mapping between the edges of $\tilde H$ and the edges of $\tilde E$, and we do not distinguish between those edges. Let $\tilde H'$ be the underlying undirected graph for $\tilde H$. We claim that graph $\tilde H'$ is a forest. Indeed, recall that each block may contain at most one vertex of $X$, and so each vertex of $\tilde H$ has at most one outgoing edge. Assume for contradiction that $\tilde H'$ is not a forest. Then there is a cycle $C$ in $\tilde H'$. Let $\bset'=\set{B\mid v_B\in V(C)}$, and let $\hat B\in \bset$ be the left-most block in the set $\bset'$. Then there are two edges incident on $v_{\hat B}$ in $C$. Both of these edges must be directed away from $v_{\hat B}$ in $\tilde H$, since all edges of $\tilde E$ are directed towards right. But every vertex of $\tilde H$ has at most one outgoing edge, a contradiction.
We conclude that $\tilde H'$ is a forest. Consider some tree $\tau$ of $\tilde H'$, and let $\tau'$ be the corresponding sub-graph of $\tilde H$. Then, since every vertex of $\tilde H$ has at most one outgoing edge, and there are no cycles in $\tau$, there is a vertex $x\in V(\tau')$ that has no outgoing edges. Rooting $\tau'$ at vertex $x$, it is easy to see that $\tau'$ is an arborescence, where all edges are directed towards the root.

We now turn to construct the final set $E^*$ of edges. We start with $E^*=\emptyset$, and then process the arborescences of $\tilde H$ one-by-one. Consider some such rooted arborescence $\tau$. Let $e=(u,v)$ be a directed edge of $\tau$. We say that $e$ belongs to level $j$ iff the unique path in $\tau$ from $u$ to the root of $\tau$ has $j$ edges. Let $E_1(\tau)$ be the subset of edges of $\tau$ lying in odd-indexed levels, and $E_2(\tau)$ the subset of edges of $\tau$ lying in even-indexed levels. If $|E_1(\tau)|\geq |E_2(\tau)|$, then we add the edges of $E_1(\tau)$ to $E^*$, and otherwise we add the edges of $E_2(\tau)$ to $E^*$. Notice that for each vertex $v$ of $\tau$, each of the sets $E_1(\tau)$ and $E_2(\tau)$ may contain either the edge leaving $v$, or all the edges entering $v$, but not both types of edges. Once all arborescences $\tau$ in $\tilde H$ are processed, we obtain our final set $E^*$ of edges with $|E^*|\geq |\tilde E|/2\geq |E'' |/4$. For every block $B\in \bset$, $E^*$ may contain an edge leaving $v_B$, or a number of edges entering $v_B$, but not both. Therefore, if $X^*=\set{x_i\mid (x_i,y_i)\in E^*}$ and $Y^*=\set{y_i\mid (x_i,y_i)\in E^*}$, then for each block $B\in \bset$, $V(B)\cap X^*=\emptyset$ or $V(B)\cap Y^*=\emptyset$ must hold.
\end{proof}

Let $X^*$ be the set of the $x$-endpoints of the edges in $E^*$, and $Y^*$ the set of their $y$-endpoints. We say that a block $B\in \bset$ is \emph{an $x$-block} iff $V(B)\cap X^*\neq \emptyset$. Otherwise, we call it a \emph{$y$-block}. In the rest of the proof, we will assume that $h-t$ is an odd number. If this is not the case, then we delete the bottom row from the grid $G'$, and adjust $h$ accordingly. Our next step is to partition the grid $G'$ into two sub-graphs. We do so by first partitioning the vertices of $G'$ into two subsets: $U$ and $U'$. Set $U$ of vertices contains, for each $x$-block $B$, all vertices in the top $h-1$ rows of $B$ (that is, all but the last row of $B$). It also contains all vertices of the special block $B^*$. For each $y$-block $B$, set $U$ contains all vertices lying in the top $t$ rows of $B$. We then set $U'=V(G')\setminus U$, and we define two sub-graphs of $G'$: $G_1'=G'[U]$ and $G_2'=G'[U']$.
Notice that for each edge $e=(x_i,y_i)\in E^*$, $x_i\in U$ and $y_i\in U'$ must hold.
 The rest of the proof consists of three steps. 
 
 In the first step, we will construct a path $P$ in graph $G_2'$ that contains all vertices of $G_2'$. In order to do so, we will view $P$ as a graph. We will start with $P=\emptyset$, and will gradually add edges and vertices to $P$, until it becomes a path with the desired properties. For each $x$-block $B$, we add the bottom row of $B$ to $P$. Consider now some $y$-block $B'$, and assume that it contains at least two columns.  Let $a$ and $b$ be its bottom left and right corners of $B'$, respectively. We build a path $P(B')$ whose vertex set is $V(B')\cap U'$, and whose endpoints are $a$ and $b$. Such a path can be constructed, for example, by following the first column of $B'$ until the row $R_{t+1}$ of $B'$ is reached, and then traversing the rows $R_{t+1},\ldots,R_h$ of $B'$ in a snake-like fashion (see Figure~\ref{fig: traversing the grid}). (We use the fact that $h-t$ is odd here). Since all blocks, except for possibly the last one, contain at least $2$ columns, we can build path $P(B')$ for each such block. If the last block $B''$ contains only one column, then we set $P(B'')=B''\cap G_2'$.

 \begin{figure}[h]
 \scalebox{0.4}{\includegraphics{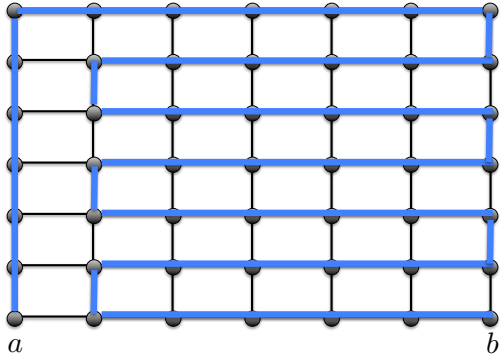}}\caption{Path $P(B')$ \label{fig: traversing the grid}}
 \end{figure}
 
 Finally, for every consecutive pair $B,B'$ of blocks, we add an edge connecting the bottom right corner of $B$ to the bottom left corner of $B'$. It is easy to see that the resulting graph $P$ is a path contained in $G_2'$, with $V(P)=V(G_2')$. In particular, $Y^*\subseteq V(P)$. Therefore, we can find $\floor{ |Y^*|/2}\geq T$ disjoint segments of $P$, where for each segment both endpoints belong to $Y^*$, and it is internally disjoint from $Y^*$. Let $\Sigma$ be this set of segments of $P$. If $|\Sigma|>T$, we discard paths from $\Sigma$ until $|\Sigma|=T$ holds. Consider some segment $\sigma\in \Sigma$, and assume that its endpoints are $y_i$ and $y_j$, where $y_i$ is an endpoint of some edge $e_i=(x_i,y_i)\in E^*$, and $y_j$ is an endpoint of some edge $e_j=(x_j,y_j)\in E^*$. Concatenating $e_i,\sigma,e_j$, we obtain a path $Q(\sigma)$ whose endpoints are $x_i$ and $x_j$, and $Q(\sigma)$ is internally disjoint from $G_1'$. Let $\qset=\set{Q(\sigma)\mid \sigma\in \Sigma}$.
 
 The rest of the proof is very similar to the proof of Theorem~\ref{thm: family H1 to clique}. Let $\mset$ be the set of all unordered pairs of distinct elements of $\set{1,\ldots,t}$, $\mset=\set{\set{i,j}\mid 1\leq i,j\leq t, i\neq j}$, so $|\mset|=|\qset|=T$. We assign labels $\ell(Q)\in \mset$ to the paths of $\qset$, so that every label in $\mset$ is assigned to exactly one path. We then assign labels $\ell(u)$ to all vertices  $u\in X^*$, as follows. For each path $Q\in \qset$ with endpoints $u$ and $u'$, if $\ell(Q)=\set{i,j}$, then we assign label $i$ to $u$ and label $j$ to $u'$, breaking the symmetry arbitrarily. 

As in the proof of Theorem~\ref{thm: family H1 to clique},
we will construct $t$ disjoint  simple paths $P_1,\ldots,P_t$ in the graph $G'_1$ in such a way that for each $1\leq i\leq t$, every vertex of $X^*$ with label $i$ belongs to $P_i$. It is then immediate to find a model of a $K_t$-minor in $H$: every vertex of $K_t$ is mapped to a distinct path $P_i$, and the edges of $K_t$ are mapped to the paths of $\qset$. Since for each pair $\set{i,j}$ with $1\leq i\neq j\leq t$, there is a path $Q\in \qset$ whose endpoints are labeled $i$ and $j$ respectively, every pair $P_i$, $P_j$ of distinct paths is connected with a path in $\qset$.

In order to construct the set $\pset=\set{P_1,\ldots,P_t}$ of paths, we will view $\pset$ as a graph. We start with $\pset=\emptyset$, and gradually add vertices and edges to it, until it becomes a set of $t$ disjoint paths with the required properties. In our first step, for each $y$-block $B$, we add the top $t$ rows of $B$ to $\pset$. We also add the top $t$ rows of the special block $B^*$ to $\pset$. Consider now some $x$-block $B$. Recall that $B$ contains at least $2t+1$ columns, that we label $C_1,\ldots,C_{t'}$ with $t'\geq 2t+1$ in their natural order, and there is one vertex $x\in X^*$ that belongs to $B$, which must lie in column $C_{t+1}$. Let $R_i$ be the row in which $x$ lies. Since $x$ belongs to $G_3$, $2t<i<h-2t+2$ (this takes into account the possibility that the bottom row has been deleted if $h-t$ was initially even). We denote the label of $X$ by $\ell$. Let $L$ be the set of the top $t$ vertices in the column $C_1$ and $M$ the set of the top $t$ vertices in the column $C_{t'}$ of $B$. We will construct a set $\pset(B)$ of $t$ disjoint paths contained in $B\cap G_1'$, such that each path starts from a vertex of $L$ and terminates at a vertex of $M$. Moreover, the $\ell$th path from the top will contain the vertex $x$.

In order to do so, we partition $B\cap G'_1$ into three sub-blocks: $B_1$ spanned by columns $C_1,\ldots,C_{t}$, $B_2$ spanned by columns $C_{t},\ldots,C_{t+2}$, and $B_3$ spanned by columns $C_{t+2},\ldots,C_{t'}$. Let $\rset(B)$ be a set of $t$ consecutive rows of $B$, such that $R_i$ is the $\ell$th row of $\rset(B)$. That is, $\rset(B)=\set{R_{i-\ell+1},\ldots,R_{i-\ell+t}}$ (since $2t< i< h-2t+2$, all rows of $\rset_B$ are contained in $G_1'$). Let $L'\subseteq V(C_{t})$ be the set of the $t$ vertices lying in the rows of $\rset(B)$, and let $M'\subseteq V(C_{t+2})$ be the set of the $t$ vertices lying in the rows of $\rset_B$. Set $\pset(B)$ is a concatenation of three sets of paths: $\pset(B_1),\pset(B_2),\pset(B_3)$, that are defined as follows. Set $\pset(B_1)$ is a set of $t$ disjoint paths contained in $B_1$, that connect the vertices of $L$ to the vertices of $L'$ (since $B_1$ has more than $t$ rows and at least $t$ columns, from Claim~\ref{claim: t-linkedness of grid} such a set of paths exists). Similarly, $\pset(B_3)$ is a set of $t$ disjoint paths contained in $B_3$, connecting the vertices of $M$ to the vertices of $M'$. Finally, $\pset(B_2)$ is the set of $t$ rows of $B_2$, that are sub-paths of the rows in $\rset(B)$. Let $\pset(B)$ be the set of paths obtained by concatenating the paths in $\pset(B_1),\pset(B_2)$ and $\pset(B_3)$. Then $\pset(B)$ contains $t$ paths, connecting the vertices of $L$ to the vertices of $M$, such the $\ell$th path from the top contains $x$. From the construction, all paths of $\pset(B)$ are contained in $B\cap G_1'$. We add the paths in $\pset(B)$ to $\pset$.

Finally, in order to turn $\pset$ into a collection of $t$ disjoint paths with the required properties, we add, for every consecutive pair $B,B'$ of blocks, the $t$ edges of $G'$ that lie in rows $R_1,\ldots,R_t$ and connect the vertices of the last column of $B$ to the vertices of the first column of $B'$.
This completes the description of the set $\pset$ of paths. It is now immediate to see that $H$ contains a $K_t$-minor. Moreover, since each path in $\pset$ intersects every column of $J_1$, if $B^*=J_1$ is a contraction of a sub-wall of $W$, then the $K_t$-minor is grasped by $W$. Our proof also gives an efficient algorithm, that finds the model of $K_t$ in $H$. Therefore, if we are given a model of $H$ in $G$, then we can efficiently compute the model of $K_t$ in $G$ with the required properties.
\end{proof}

 \label{---------------------------------------sec: cutting the wall-------------------------------------}
\section{Cutting the Wall}\label{sec: cutting the wall}
In the statement of the Flat Wall Theorem, the starting point is a ``square'' wall - that is, a wall whose height and width are the same. However, in our algorithm, we need a wall whose width $r$ is much larger than its height $h$. One can, of course, simply use a wall of size $(r\times r)$ as a starting point, but this is very wasteful. In this section we provide an algorithm that starts with a square wall of size $(\Theta(h\sqrt r)\times \Theta(h\sqrt r))$, and turns it into a wall of height $h$ and width $r\cdot h$. We do so by first cutting the square wall into thin horizontal strips, and then connecting those strips in a snake-like fashion. This allows us to save significantly on the size of the wall in the statement of the Flat Wall Theorem.

We use two parameters $z$ and $N$, whose values will be specified later. Our starting point is a wall $W$ of size $(Nz\times Nz)$. Let $\hat W$ be the corresponding elementary wall of the same size, and let $f:V(\hat W)\rightarrow W$ be a good $(\hat W,W)$-mapping. Our goal in this section is to construct a sub-graph $W'$ of $W$, such that $W'$ is a wall of height $z$ and width $\Omega(N^2z)$. We will then partition $W'$ into $\Omega(N^2)$ disjoint walls, that we will call \emph{basic walls}, of size $z\times z$ each, where each basic wall is a sub-wall of $W'$ spanned by $z$ consecutive columns. Our construction will ensure that each such basic wall is also a sub-wall of $W$. 
In order to define the sub-graph $W'$ of $W$, we first construct a sub-graph $\hat W'$ of $\hat W$. This graph then naturally defines the sub-graph $W'$ of $W$, where $W'$ is the union of all paths $P_e$ for edges $e\in E(\hat W')$. (Recall that the paths $P_e$ are uniquely defined via the good $(\hat W,W)$-mapping $f$).

We now turn to define the sub-graph $\hat W'$ of $\hat W$.
We start by deleting, for each $1\leq i< N$, all vertical edges of $\hat W$, connecting rows $R_{iz}$, $R_{iz+1}$, and then deleting any resulting degree-1 vertices.
As a result, $\hat W$ decomposes into $N$ disjoint subgraphs $S_1,\ldots,S_N$, where for each $1\leq j\leq N$, $S_j$ is a sub-wall of $\hat W$ of height $z$ and width $Nz$. We sometimes call the walls $S_j$ \emph{horizontal strips}. 

Consider any such strip $S_j$. We partition $S_j$ into $N$ disjoint sub-walls of size $z\times z$ each, as follows. For each $1\leq i\leq N$, wall $B_{j,i}$ is the sub-wall of $S_j$ spanned by columns $(C_{(i-1)z+1},\ldots,C_{iz})$, and rows $(R_1,\ldots,R_z)$ of $S_j$. We call the walls $B_{j,i}$ \emph{basic elementary walls}. Therefore, each strip $S_j$ is now partitioned into $N$ basic elementary walls of size $(z\times z)$ each (see Figure~\ref{fig: cutting the wall1}). Let $\bset=\set{B_{j,i}\mid 1\leq i\leq N; 1\leq j\leq N}$ be the set of all basic elementary walls. For each $1\leq j\leq N$, we remove from $\bset$ the walls $B_{j,1}$ and $B_{j,N}$ - the first and the last basic elementary walls of the strip $S_j$. We will use these walls to connect the strips to each other. Therefore, $|\bset|=N(N-2)$.

\begin{figure}[h]
\scalebox{0.5}{\includegraphics{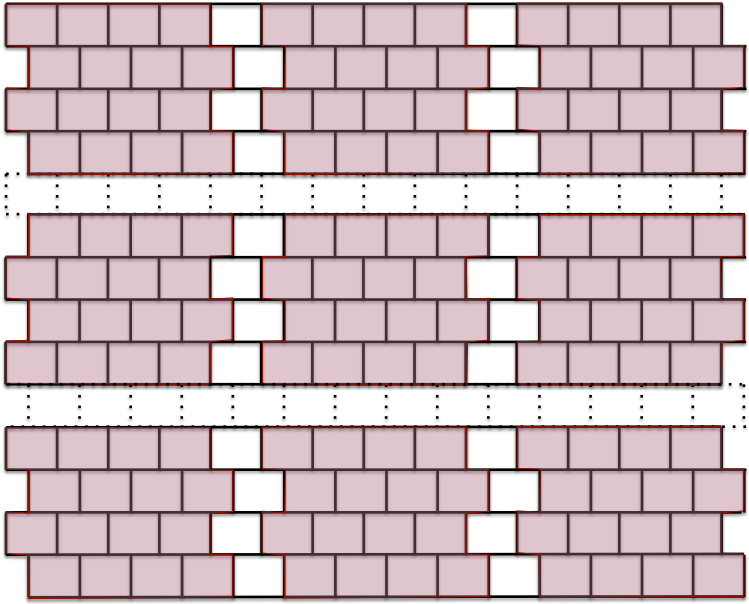}}\caption{Cutting the wall into basic elementary walls.\label{fig: cutting the wall1}}
\end{figure}

Finally, we would like to connect the strips in a snake-like fashion to create one long strip. Fix some $1\leq j<N$, where $j$ is odd.
We construct a graph $G_j$, that will be used to connect the strips $S_j$ and $S_{j+1}$ via the walls $B_{j,N}$ and $B_{j+1,N}$.
In order to construct $G_j$, we start with the sub-graph of $\hat W$ induced by $V(B_{j,N})\cup V(B_{j+1,N})$. Consider now the last column $C_z$ of $B_{j,N-1}$. We construct a subset $X\subseteq V(C_z)$ of vertices, as follows. For each row $R_i$ of $S_j$, if $|C_z\cap R_i|=1$, then we add the unique vertex of $C_z\cap R_i$ to $X$. Otherwise, $|C_z\cap R_i|=2$, and among the two vertices, we add the one that lies more to the right on $R_i$ to set $X$. We define a subset $Y$ of the vertices of the last column of $B_{j+1,N-1}$ similarly. Notice that $|X|=|Y|=z$. We add to $G_j$ the vertices of $X\cup Y$, and all horizontal edges connecting the vertices of $X\cup Y$ to the vertices of $B_{j,N}\cup B_{j+1,N}$. This completes the description of the graph $G_j$. Clearly, $G_j$ is a planar graph. Moreover, using the same reasoning as in Claim~\ref{claim: t-linkedness of wall}, it is easy to see that $X$ and $Y$ are $z$-linked in $G_j$. Therefore, there is a set $\pset_j$ of $z$ node-disjoint paths, contained in $G_j$, that connect the vertices of $X$ to the vertices of $Y$ (see Figure~\ref{fig: cutting the wall2}). Since $G_j$ is a planar graph, if $v\in X$ lies in row $R_i$ of $S_j$, then $v$ is connected to the vertex $u\in Y$ lying in row $z-i+1$ of $S_{j+1}$.

Similarly, if $1<j<N$ is an even number, then we construct a graph $G_j$ that will be used to connect strips $S_j$ and $S_{j+1}$ via the walls $B_{j,1}$ and $B_{j+1,1}$. In order to construct $G_j$, we start with the sub-graph of $\hat W$ induced by $V(B_{j,1})\cup V(B_{j+1,1})$. We also define the sets $X$ and $Y$ of $z$ vertices each as before: $X$ is a subset of vertices of the first column of $B_{j,2}$, containing exactly one vertex lying in each row of $S_j$. For each row $R_i$, if $|R_i\cap C_1|=2$, we add the vertex of $R_i\cap C_1$ that lies more to the left on $R_i$, to $X$. Set $Y$ is defined similarly for the first column of $B_{j+1,2}$. We add to $G_j$ the vertices $X\cup Y$, and all horizontal edges of $S_j$, connecting these vertices to the vertices of $B_{j,1}\cup B_{j+1,1}$. As before, $X$ and $Y$ are $z$-linked in $G_j$, and therefore there is a set $\pset_j$ of $z$ node-disjoint paths connecting them in $G_j$.

\begin{figure}[h]
\scalebox{0.6}{\includegraphics{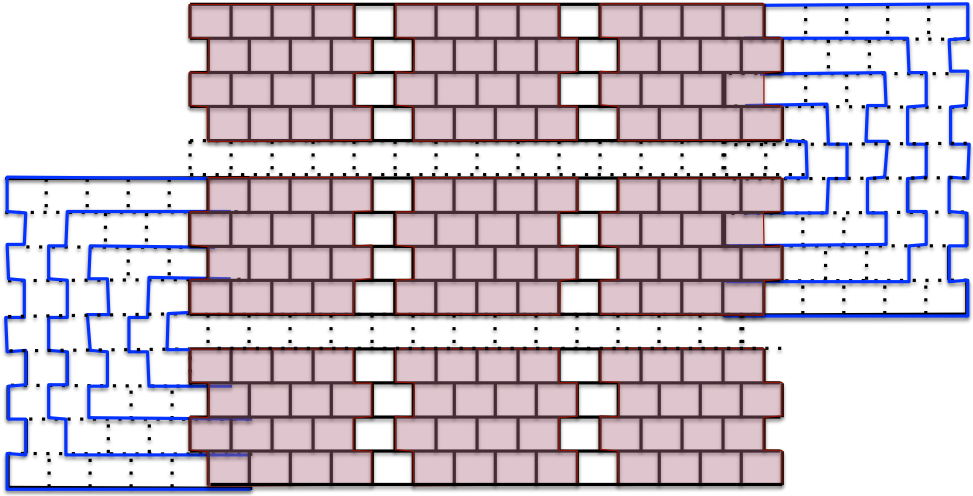}}\caption{Connecting the strips into one long strip. \label{fig: cutting the wall2}}
\end{figure}

The final graph $\hat W'$ is constructed by taking the union of all basic elementary walls in $\bset$. For each consecutive pair $B_{j,i},B_{j,i+1}$ of such walls (where $1\leq j\leq N$, $2\leq i\leq N-1$), we add all horizontal edges of $\hat W$ connecting $V(B_{j,i})$ and $V(B_{j,i+1})$. Finally, for all $1\leq j<N$, we add all vertices and edges of paths in $\pset_j$ to $\hat W'$. This completes the definition of the graph $\hat W'$.

Recall that each edge $e$ of $\hat W$ corresponds to some path $P_e$ of the original wall $W$ via the good $(\hat W,W)$-mapping $f$, where the paths in $\set{P_e}_{e\in E(\hat W)}$ are pairwise internally node-disjoint. The subgraph $W'$ of $W$ is obtained by replacing every edge $e\in E(\hat W')$ by the corresponding path $P_e$. The $z\times z$ walls in $W'$, that correspond to the basic elementary walls in $\bset$ are called \emph{basic walls}. Therefore, $W'$ is a wall of height $z$ and width $zN(N-2)$. 
We call the corresponding object \emph{a chain of $N(N-2)$ basic walls of height $z$}, and we formally define it below.

\begin{definition}
A \emph{chain $(\bset,\pset)$ of $N$ basic walls of height $z$}  consists of:

\begin{itemize}
\item A collection $\bset$ of $N$ disjoint walls $B_1,\ldots,B_N$, that we call \emph{basic walls}, where each wall $B_i$ has height $z$ and width at least $z$.

\item A set $\pset=\bigcup_{j=1}^{N-1}\pset_j$ of disjoint paths, where for each $1\leq j<N$, $\pset_j=\set{P_1^j,\ldots,P^j_z}$ is a set of $z$  paths, connecting the pegs of $B_j$ lying in the last column of $B_j$ to the pegs of $B_{j+1}$ lying in the first column of $B_{j+1}$, and for $1\leq i\leq z$, $P_i^j$ connects a vertex in the $i$th row of $B_j$ to a vertex in the $i$th row of $B_{j+1}$. Moreover, the paths in $\pset$ do not contain the vertices of $\bigcup_{j'=1}^{N}V(B_{j'})$ as inner vertices. 
\end{itemize}
We denote by $W'(\bset,\pset)$ the corresponding graph $\left(\bigcup_{j=1}^N\bset_j\right )\cup\left(\bigcup_{j=1}^{N-1}\pset_j\right )$.
\end{definition}

Observe that graph $W'=W'(\bset,\pset)$ is a wall of height $z$ and width at least $Nz$. We will always assume that we are given some fixed choice of the four corners $(a,b,c,d)$ of the wall $W'$, that appear along the boundary of $W'$ in this order clock-wise, and $a$ is the top left corner of $W'$. Therefore, for each basic wall $B_i$, the four corners of $B_i$ are also fixed, and are denoted by $a_i,b_i,c_i,d_i$, where $a_i$ is the top left corner, and the four corners appear in this order clock-wise along the boundary of $B_i$. From the discussion in this section, we obtain the following theorem.

\begin{theorem}\label{thm: getting the chain}
For any integers $N,z\geq 2$, given a wall $W$ of size $(Nz\times Nz)$, there is an efficient algorithm to construct a chain $(\bset,\pset)$ of $N(N-2)$ basic walls of height $z$, such that $W'(\bset,\pset)$ is a sub-graph of $W$, and each basic wall $B\in \bset$ is a sub-wall of $W$.
\end{theorem}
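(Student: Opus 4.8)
The plan is to carry out exactly the surgery described above on the elementary wall, and then pull everything back to $W$ through the good $(\hat W, W)$-mapping $f$. Concretely, first I would fix the elementary $(Nz\times Nz)$-wall $\hat W$ together with a good mapping $f:V(\hat W)\to V(W)$, so that it suffices to build the desired chain structure inside $\hat W$ and then replace each edge $e\in E(\hat W')$ by the internally disjoint path $P_e$ of $W$. The first step is to delete, for each $1\le i<N$, the vertical edges of $\hat W$ joining $R_{iz}$ to $R_{iz+1}$, together with any degree-$1$ vertices this creates; this severs $\hat W$ into $N$ horizontal strips $S_1,\dots,S_N$, each an elementary wall of height $z$ and width $Nz$. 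Each $S_j$ is then cut along column boundaries into $N$ consecutive sub-walls $B_{j,i}$ of size $z\times z$, and the two extreme ones $B_{j,1},B_{j,N}$ of every strip are set aside, leaving the $N(N-2)$ walls that will be the basic walls.

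The second, and really the only substantive, step is to wire the strips together in a snake pattern so that the union becomes a single long wall. For odd $j$ I would build the connector graph $G_j$ on $V(B_{j,N})\cup V(B_{j+1,N})$ together with one ``outlet'' vertex per row taken from the last column of $B_{j,N-1}$ (resp.\ $B_{j+1,N-1}$) and the horizontal edges reaching them; for even $j$ the symmetric construction on the first columns is used. The key point is that $G_j$ is planar and that the two sets $X,Y$ of $z$ outlet vertices are $z$-linked in $G_j$ --- this is exactly Claim~\ref{claim: t-linkedness of wall} applied to $G_j$ --- so there is a set $\pset_j$ of $z$ vertex-disjoint paths in $G_j$ joining $X$ to $Y$, internally disjoint from all basic walls; planarity forces the path starting at the row-$i$ outlet of $S_j$ to terminate at the row-$i$ outlet of the reoriented strip $S_{j+1}$, which yields the ``row $i$ to row $i$'' property demanded in the definition of a chain. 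Taking $\hat W'$ to be the union of the surviving basic elementary walls, the horizontal edges between consecutive basic walls within a strip, and all the paths in $\pset_j$, and then replacing each edge by its path $P_e$, produces the graph $W'=W'(\bset,\pset)$.

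Finally I would verify the three assertions. That $W'\subseteq W$ and that each basic wall $B$ is a sub-wall of $W$ is immediate from the fact that $W'$ is obtained from a subgraph $\hat W'\subseteq\hat W$ by replacing edges with the canonical paths $P_e$ of $f$, and that each $B_{j,i}$ is a sub-wall of the strip $S_j$, which in turn is a sub-wall of $\hat W$ (the rows of $B_{j,i}$ are sub-paths of rows of $\hat W$ and its columns are sub-paths of columns of $\hat W$). That $(\bset,\pset)$ is a chain of $N(N-2)$ basic walls of height $z$ is a matter of checking the bullet conditions of the definition: the $B_i$ are disjoint walls of height $z$ and width $\ge z$; the $\pset_j$ are disjoint, connect the correct pegs of consecutive basic walls, respect row indices, and avoid the interiors of all basic walls. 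I expect the only place needing genuine care is confirming that the snake re-orientation of alternate strips makes the global object a bona fide wall (the concatenated rows and columns are simple paths with the correct incidence pattern) and that the choice of outlet vertex when a row meets a boundary column in two vertices is made consistently --- but these are bookkeeping matters rather than real obstacles, since the single nontrivial ingredient, $z$-linkedness of $G_j$, follows directly from the already-proved claim.
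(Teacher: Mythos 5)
Your proposal reproduces the paper's own construction essentially verbatim: cutting $\hat W$ into $N$ horizontal strips, partitioning each strip into $z\times z$ basic elementary walls while reserving the first and last of each strip for connectors, linking consecutive strips via the planar connector graphs $G_j$ whose outlet sets are $z$-linked by Claim~\ref{claim: t-linkedness of wall}, and pulling the whole structure back to $W$ through the good $(\hat W,W)$-mapping. This is exactly how the paper establishes Theorem~\ref{thm: getting the chain}, so the proposal is correct and follows the same route.
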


\label{---------------------------------------bridges, core walls, wall types--------------------------}
\section{Bridges, Core Walls, and Wall types}\label{sec: bridges, core walls, wall types}

Let $G$ be any graph, $(\bset,\pset)$ a chain of $N$ basic walls of height $z$ in $G$, and let $W'=W'(\bset,\pset)$ be the corresponding sub-graph of $G$.
Let $\cset$ be the set of all connected components of $G\setminus V(W')$. We say that a component $F\in \cset$ \emph{touches} a vertex $v\in V(W')$ iff $G$ contains an edge from a vertex of $F$ to $v$.

Given an integer parameter $1\leq \tau<z/2$, for each $1\leq i\leq N$, we define a \emph{$\tau$-core sub-wall $B'_i$} of $B_i$, as follows. Wall $B'_i$ is the sub-wall of $B_i$ spanned by rows $(R_{\tau},\ldots,R_{z-\tau+1})$ and all columns of $B_i$ (see Figure~\ref{fig: core}). 
The boundary of the $\tau$-core wall $B'_i$ is denoted by $\Gamma_i'$, and its four corners are denoted by $a_i',b_i',c_i',d_i'$. We assume that the four corners appear in this order clock-wise along the boundary of $B'_i$, and $a_i'$ is the top left corner. We will assume throughout this section that the value of $\tau$ is fixed, and we will sometimes refer to the $\tau$-core walls simply as core walls.

\begin{figure}[h]
\scalebox{0.4}{\includegraphics{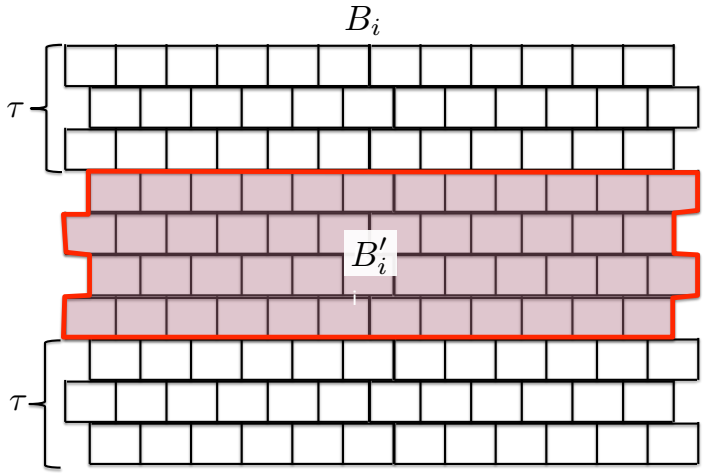}}\caption{Graphs $B_i$ and $B_i'$, with $\Gamma_i'$ shown in red.\label{fig: core}}
\end{figure}

\begin{definition}
A \emph{bridge} incident on a core wall $B'_i$ is one of the following: either an edge with one endpoint in $V(B'_i\setminus \Gamma_i')$, and another in $V(W'\setminus B'_i)$; or a connected component $F\in \cset$, that touches a vertex of $V(B'_i\setminus \Gamma_i')$, and a vertex of $V(W'\setminus B'_i)$.
\end{definition}

For $1<i<N$, we define the neighborhood of $B_i$ as follows: $\nset(B_i)=V(B_{i-1}\cup \pset_{i-1}\cup B_i\cup \pset_i\cup B_{i+1})$. We say that a bridge $F$ incident on the core wall $B'_i$ is a \emph{neighborhood bridge} for $B'_i$ iff either $F$ is an edge whose both endpoints lie in $\nset(B_i)$, or $F$ is a connected component of $\cset$, such that all vertices of $W'$ that $F$ touches belong to $\nset(B_i)$. Otherwise, we say that $F$ is a \emph{non-neighborhood} bridge for $B_i'$.

For $1<i<N$, we say that the core wall $B'_i$ is a type-1 wall iff there is at least one bridge $F$ incident on $B'_i$, such that $F$ is a neighborhood bridge for $B'_i$. We say that it is a type-2 wall if it is not a type-1 wall, and at least one bridge is incident on $B'_i$. Therefore, if $B'_i$ is a type-2 wall, then at least one bridge $F$ incident on $B'_i$ is a non-neighborhood bridge for $B'_i$.

Assume now that $B_i'$ is not type-1 and not type-2 wall. Then no bridge is incident on $B'_i$, so graph $G\setminus \Gamma_i'$ consists of at least two connected components, with one of them containing $B'_i\setminus \Gamma'_i$. Therefore, there is a separation $(X,Y)$ of $G$, with $B'_i\subseteq X$, $X\cap Y\subseteq \Gamma_i'$, and for each $1\leq j\leq N$ with $j\neq i$, $B_j\subseteq Y$. Recall that the corners $a_i',b'_i,c_i',d_i'$ of the wall $B'_i$ are fixed. We assume that they appear on $\Gamma_i'$ in this order clockwise, with $a_i'$ being the top left corner. If graph $X$ contains a wall-cross for $B'_i$ (that is, a pair of disjoint paths connecting $a_i'$ to $c_i'$ and $b_i'$ to $d_i'$), then we say that wall $B_i'$ is of type 3. Otherwise, it is of type 4. 
Notice that given $W'$, for each $1<i<N$, we can efficiently determine what type wall $B_i'$ belongs to, and if it is a type-3 wall, then we can find the corresponding wall-cross efficiently.

The proof of both Theorems~\ref{thm: main weak} and \ref{thm: main strong} proceeds in a similar way: we start with a wall $W$ of an appropriate size, and apply Theorem~\ref{thm: getting the chain} to obtain a chain of walls $(\bset,\pset)$ with some parameters $z$ and $N$. We then show that if, for any of the basic walls in the chain, the corresponding $\tau$-core wall is of type $4$, then $G$ contains a flat sub-wall of $W$ of size $((z-2\tau)\times (z-2\tau))$. Therefore, with an appropriate choice of $z$ and $\tau$, we can assume that all $\tau$-core walls are of types $1$, $2$ or $3$. For each one of these three types, we show that if there are many $\tau$-core walls of that type, then $G$ contains a $K_t$-minor grasped by $W$. Therefore, large parts of the proofs of both theorems are similar, and only differ in the specific parameters we choose. In the rest of this section we formally state and prove theorems that allow us to handle walls of each one of the four types. The statements of the theorems are generic enough so we can apply them in the different settings with the different choices of the parameters that we need. 

We start by observing that if at least one basic wall of $\bset$ is of type $4$, then $G$ contains a flat wall of size $((z-2\tau)\times (z-2\tau))$.

\begin{figure}[h]
\scalebox{0.4}{\includegraphics{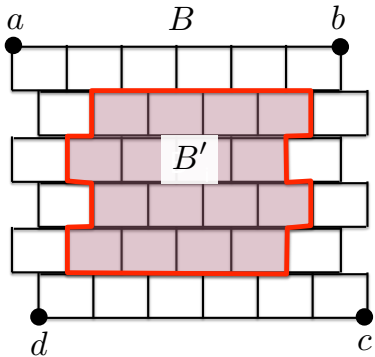}}\caption{Graphs $B$ and $B'$, with $\Gamma'$ shown in red.\label{fig: Bi'}}
\end{figure}

\begin{lemma}\label{lemma: type 4 gives flat wall}
Let $G$ be any connected graph, $B$ a wall of size $((z+2)\times (z+2))$ in $G$, $\Gamma$ an outer boundary of $B$, and $a,b,c,d$ the corners of $B$ appearing on $\Gamma$ in this order circularly.  Assume further that there is a separation $(X,Y)$ of $G$, with $B\subseteq X$, $X\cap Y\subseteq \Gamma$, such that $X$ does not contain a wall-cross for $B$. Then there is an efficient algorithm to find a flat wall $B'$ of size $(z\times z)$ in $G$, such that $B'$ is a sub-wall of $B$. 
\end{lemma}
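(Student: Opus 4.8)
The plan is to take $B'$ to be the $(z\times z)$ sub-wall of $B$ spanned by rows $(R_2,\ldots,R_{z+1})$ and columns $(C_2,\ldots,C_{z+1})$. Its outer boundary, denoted $D'$, is then vertex-disjoint from $\Gamma$ and is enclosed by a single ``ring'' of $B$; write $\Lambda = B\setminus (V(B')\setminus V(D'))$ for this one-ring frame, a sub-wall of $B$ containing both $D'$ and $\Gamma$. Let $F^*$ be the connected component of $G\setminus V(D')$ containing $V(B')\setminus V(D')$ (this set is connected once $z$ exceeds a small constant, which we may assume; otherwise the statement is vacuous or trivial). Put $B_0 = G[V(D')\cup V(F^*)]$, and let $A_0$ be obtained from $G$ by deleting $V(F^*)\setminus V(D')$ and assigning the edges of $D'$ to $A_0$. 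It is routine that $(A_0,B_0)$ is a separation of $G$ with $V(A_0)\cap V(B_0)=V(D')$, that $V(B')\subseteq V(B_0)$, and that the pegs of $B'$ (which lie on $D'$) all belong to $A_0$ --- exactly the conditions of the definition of a flat wall, apart from the required disc drawing. Since $D'$ is a cycle in $B_0$, Theorem~\ref{thm: a cross or C-flat} applied to $B_0$ with $C=D'$ gives: \emph{either} $B_0$ has a $D'$-cross, \emph{or} some $D'$-reduction of $B_0$ can be drawn in a disc with $V(D')$ on its boundary in the order of $D'$ --- and in the second case $(A_0,B_0)$ witnesses that $B'$ is a flat wall in $G$, with the algorithm of that theorem producing the certificate. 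So it remains to rule out a $D'$-cross in $B_0$, and this is where the hypothesis is used.

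First I would show that $V(F^*)$ is disjoint from $V(B)\setminus V(B')$ and is contained in $V(X)$. Suppose some vertex of $V(B)\setminus V(B')$ lies in $F^*$; since $F^*$ is connected and contains all of $V(B')\setminus V(D')$, while in the planar wall $B$ the cycle $D'$ separates the interior of $B'$ from the rest of $B$, there is a path in $G$ from a vertex of $V(B')\setminus V(D')$ to a vertex of $V(B)\setminus V(B')$ whose internal vertices avoid $V(B)$; contracting it to a single edge $e$ and invoking Theorem~\ref{thm: cross-wall} on $B+e$ --- part~(1) when the outer endpoint is strictly between $D'$ and $\Gamma$ (both endpoints are then off $\Gamma$ and separated by a row or column of $B$), and part~(2) when the outer endpoint lies on $\Gamma$, after first following $B'$ from the inner endpoint to a vertex of the sub-wall spanned by rows $(R_3,\ldots,R_{z})$ and columns $(C_3,\ldots,C_z)$ --- produces a wall-cross for $B$. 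Because $V(X)\cap V(Y)\subseteq V(\Gamma)\subseteq V(B)$ and our path has no internal vertex in $V(B)$, the path (which starts in $V(X)\setminus V(Y)$) never leaves $V(X)$, so this wall-cross lies inside $X$, contradicting the hypothesis. The same observation (a path cannot reach $V(Y)\setminus V(X)$ without meeting $\Gamma$) gives $V(F^*)\subseteq V(X)$, hence $V(B_0)=V(D')\cup V(F^*)\subseteq V(X)$; moreover the internal vertices of any $D'$-cross of $B_0$ lie in $V(F^*)$, so such a cross avoids $V(B)\setminus V(B')$ entirely and lies inside $X$.

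Finally, suppose $Q_1$ (with ends $s_1,t_1$) and $Q_2$ (with ends $s_2,t_2$) form a $D'$-cross in $B_0$, so $s_1,s_2,t_1,t_2$ occur in this cyclic order on $D'$. Since the corners $a,b,c,d$ of $B$ occur in the same cyclic order on $\Gamma$, one can route four pairwise disjoint paths $\rho(s_1),\rho(s_2),\rho(t_1),\rho(t_2)$ inside the one-ring frame $\Lambda$ --- from $s_1$ to $a$, from $s_2$ to $b$, from $t_1$ to $c$, and from $t_2$ to $d$, each meeting $D'$ only at its inner endpoint --- because the corresponding matching between the two concentric boundary cycles of $\Lambda$ is order-preserving. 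These frame paths are pairwise disjoint and meet $Q_1\cup Q_2$ only in $\{s_1,s_2,t_1,t_2\}$ (the interiors of $Q_1,Q_2$ avoid $V(\Lambda)$ by the previous step), so $\rho(s_1)\cup Q_1\cup\rho(t_1)$ and $\rho(s_2)\cup Q_2\cup\rho(t_2)$ are disjoint paths joining $a$ to $c$ and $b$ to $d$ respectively --- a wall-cross for $B$ inside $X$, a contradiction. Hence $B_0$ has no $D'$-cross, so by Theorem~\ref{thm: a cross or C-flat} the required disc drawing exists and $B'$ is a flat wall in $G$; and every step --- choosing $B'$, computing $F^*$, and running the algorithm of Theorem~\ref{thm: a cross or C-flat} (which, by the argument above, necessarily returns the flatness certificate rather than a cross) --- runs in polynomial time.

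The step I expect to be the main obstacle is the routing in the last paragraph: realizing the four disjoint ``radial'' paths through the width-one frame $\Lambda$, and in particular the degenerate cases in which some $s_i$ or $t_i$ coincides with a corner of $B'$, where the short piece near that corner has to be laid out by hand. A secondary technical point is extracting a clean interior-to-frame path in the second paragraph and performing the local detour needed to bring it into the form required by Theorem~\ref{thm: cross-wall} when the frame endpoint is a peg of $\Gamma$ and the interior endpoint lies close to $D'$.
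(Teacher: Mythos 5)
Your overall architecture is genuinely different from the paper's: the paper adds the $4$-cycle $L$ on the corners of $B$ to $X$ and applies Theorem~\ref{thm: a cross or C-flat} to the \emph{outer} cycle, so that the hypothesis immediately yields $L$-flatness, and all the work goes into surgically converting that decomposition into a flatness certificate for $B'$ with a boundary set $Z'\subsetneq V(D')$. You invert this: you apply Theorem~\ref{thm: a cross or C-flat} to the \emph{inner} cycle $D'$ inside $B_0$, so the flatness certificate is immediate, and all the work is pushed into showing that a $D'$-cross in $B_0$ would contradict the hypothesis. That is where the argument breaks.

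The gap is the routing step in your last paragraph. The endpoints $s_1,s_2,t_1,t_2$ of a $D'$-cross are arbitrary vertices of $D'$, and in the worst case all four are internal vertices of a single subdivided edge (blue path) of $D'$, appearing on it in the order $s_1,s_2,t_1,t_2$; this is a legitimate cyclic order on $D'$, and such a cross does arise in $B_0$ (e.g.\ from two single-vertex components $p,q$ of $G\setminus V(B)$ with $p$ adjacent to $s_1,t_1$ and an interior vertex of $B'$, and $q$ adjacent to $s_2,t_2$ and an interior vertex). In that configuration $s_2$ and $t_1$ have degree $2$ in $B$ and their only neighbours in $\Lambda$ lie on the same blue path, trapped between $s_1$ and $t_2$; so there do not exist four disjoint paths from $\set{s_1,s_2,t_1,t_2}$ into $\Lambda$ at all, let alone order-preserving ones reaching the four corners. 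Even when all four endpoints do have attachments to the frame, if they all lie on one side of $D'$ between two consecutive corners, an order-preserving linkage to $a,b,c,d$ forces three of the four paths to sweep through a common angular sector of $\Lambda$, which has only two disjoint ``lanes'' ($\Gamma$ and $D'$); so the claimed matching is not realizable in a width-one frame, and the obstruction is not the corner-coincidence degeneracy you flag. The implication ``$D'$-cross in $B_0$ $\Rightarrow$ wall-cross for $B$ in $X$'' does appear to be true, but in the clustered configurations the wall-cross has to be extracted from a planarity/face analysis of the pieces attached to $B'$ (exactly the type-$1$ through type-$4$ classification and face surgery the paper performs on the $L$-flat decomposition), not from a radial routing through $\Lambda$. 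As written, your proof of the key implication fails, and repairing it essentially requires importing the paper's argument. (The rest of your write-up --- the definition of $F^*$, the containments $V(F^*)\subseteq V(X)$ and $V(F^*)\cap (V(B)\setminus V(B'))=\emptyset$ via Theorem~\ref{thm: cross-wall}, and the observation that $D'$-flatness of $B_0$ certifies flatness of $B'$ with $A_0\cap B_0=V(D')$ --- is sound, up to the same boundary-case handling of Theorem~\ref{thm: cross-wall} that the paper itself elides.)
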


We then obtain the following immediate corollary:
\begin{corollary}\label{cor: type 4 gives flat wall}
Let $G$ be any connected graph, $W$ a wall in $G$, $(\bset,\pset)$ a chain of $N$ walls of height at least $z$ in $G$, and $\tau<z/2$ some integer. Assume that for some $1<i<N$, the $\tau$-core wall $B_i'$ is a type-4 wall. Then there is an efficient algorithm to find a flat wall of size $((z-2\tau)\times (z-2\tau))$ in $G$. Moreover, if $B_i$ is a sub-wall of $W$, then so is the flat wall.
\end{corollary}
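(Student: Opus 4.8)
The plan is to deduce the corollary from Lemma~\ref{lemma: type 4 gives flat wall} by extracting from the core wall $B_i'$ a square sub-wall together with the separation the lemma requires. Recall that $B_i'$ is the sub-wall of $B_i$ spanned by rows $R_\tau,\ldots,R_{z-\tau+1}$ and all columns of $B_i$, so it has height exactly $z-2\tau+2$ and width at least $z\ge z-2\tau+2$. First I would let $B$ be the sub-wall of $B_i'$ spanned by \emph{all} of its rows and by its first $z-2\tau+2$ columns; then $B$ is a square wall of size $((z-2\tau+2)\times(z-2\tau+2))$, and since it is a sub-wall of $B_i'$ it is a sub-wall of $B_i$, hence of $W$ whenever $B_i$ is. Write $\Gamma$ for the outer boundary of $B$. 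Because $B$ uses all rows and the leftmost columns of $B_i'$, three of the four sides of $\Gamma$ — the top and bottom boundary rows of $B_i'$ restricted to the first $z-2\tau+2$ columns, and the leftmost column of $B_i'$ — lie on $\Gamma_i'$, the fourth side of $\Gamma$ runs through the interior of $B_i'$, and the top-left and bottom-left corners of $B$ are exactly $a_i'$ and $d_i'$.

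Next I would build the separation and invoke the lemma. Since $B_i'$ is type-$4$, by definition there is a separation $(X,Y)$ of $G$ with $B_i'\subseteq X$, $X\cap Y\subseteq\Gamma_i'$, $B_j\subseteq Y$ for every $j\ne i$, and $X$ containing no wall-cross for $B_i'$. Let $X^*$ be the subgraph of $G$ consisting of $\Gamma$ together with the connected component of $G\setminus V(\Gamma)$ that contains $V(B)\setminus V(\Gamma)$, and let $Y^*$ be the complementary subgraph; this is an efficiently computable separation of $G$ with $B\subseteq X^*$ and $V(X^*)\cap V(Y^*)\subseteq V(\Gamma)$. Granting that $X^*$ contains no wall-cross for $B$, Lemma~\ref{lemma: type 4 gives flat wall} applied to $B$, $\Gamma$ and $(X^*,Y^*)$ (with the lemma's parameter $z$ taken to be $z-2\tau$ here) efficiently returns a flat wall of size $((z-2\tau)\times(z-2\tau))$ that is a sub-wall of $B$, hence of $W$ when $B_i\subseteq W$; this is exactly the assertion of the corollary.

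So the crux — and the step I expect to be the main obstacle — is the claim that $X^*$ has no wall-cross for $B$. I would argue by contradiction: from a wall-cross for $B$ in $X^*$ (disjoint paths from $a_i'$ to the bottom-right corner of $B$ and from the top-right corner of $B$ to $d_i'$), extend the first path along the bottom boundary row of $B_i'$ and the second along the top boundary row of $B_i'$ out to $c_i'$ and $b_i'$ respectively; this should produce disjoint paths $a_i'\connect c_i'$ and $b_i'\connect d_i'$ lying in $X$, i.e.\ a wall-cross for $B_i'$ in $X$, contradicting that $B_i'$ is type-$4$. The delicate point is to show $X^*\subseteq X$, so that these extensions (running along $\Gamma_i'\subseteq B_i'\subseteq X$) are genuinely disjoint from the given paths; this is where I would use that no bridge is incident on $B_i'$. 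Indeed, since $V(B)\setminus V(\Gamma)\subseteq V(B_i'\setminus\Gamma_i')$, the component of $G\setminus V(\Gamma)$ through $V(B)\setminus V(\Gamma)$ can reach outside the region bounded by $\Gamma_i'$ only through a connected component $F$ of $G\setminus V(W')$ that meets $B_i'\setminus\Gamma_i'$ and also meets $B_i'$ on the far side of column $z-2\tau+2$ of $B_i'$; such an $F$ lies entirely in $X$, and, playing the role of an edge joining two vertices of $B_i'$ separated by a column (or an interior vertex to a boundary vertex), it already forces a wall-cross for $B_i'$ in $X$ — by Theorem~\ref{thm: cross-wall}, or by an explicit rerouting in the residual cases that its interiority hypotheses do not literally cover — again contradicting type-$4$. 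A minor additional point is that one should take the wall-cross paths internally disjoint from $\Gamma$, which can be arranged via Theorem~\ref{thm: a cross or C-flat}. All of the above is constructive, so the algorithm is efficient, and the ``moreover'' clause follows since the flat wall produced is a sub-wall of $B\subseteq B_i$.
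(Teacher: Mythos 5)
Your route is workable in outline, but it is a long detour around what the paper treats as an immediate consequence, and the detour is where all of your difficulties come from. The intended derivation is simply to apply Lemma~\ref{lemma: type 4 gives flat wall} to $B=B_i'$ itself, with $\Gamma=\Gamma_i'$ and with the separation $(X,Y)$ that is already part of the definition of a type-4 wall ($B_i'\subseteq X$, $X\cap Y\subseteq\Gamma_i'$, and $X$ contains no wall-cross for $B_i'$). Nothing in the proof of that lemma uses that the wall is square: it peels off the outer boundary of $B$ and certifies that the remaining wall is flat, so for the $((z-2\tau+2)\times r)$-wall $B_i'$ (with $r\ge z$) it returns a flat wall of height $z-2\tau$ and width at least $z-2\tau$, which is a sub-wall of $B_i'\subseteq B_i$ and hence of $W$. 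This is exactly how the lemma is invoked again in Section~\ref{sec: proof of main strong theorem}. By first truncating $B_i'$ to a square $B$, you discard the separation you were handed --- $X\cap Y\subseteq\Gamma_i'$ is not contained in the boundary $\Gamma$ of $B$ --- and are then forced to manufacture a new separation $(X^*,Y^*)$ and to re-prove the no-wall-cross hypothesis from scratch.

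That re-proof, as written, is the part that is not finished. First, ``$X^*\subseteq X$'' is not the condition you actually need: the extensions of the cross paths run along the portions of the top and bottom rows of $B_i'$ lying to the right of $\Gamma$, so what you must show is that the component of $G\setminus V(\Gamma)$ containing $B\setminus\Gamma$ meets $W'$ only inside $V(B)$ (in particular $V(X^*)\cap V(\Gamma_i')\subseteq V(\Gamma)$); having both sets inside $X$ does not give disjointness. Second, in the bad case where some component $F$ of $G\setminus V(W')$ (or an edge) joins the interior of $B$ to a vertex $v$ of $B_i'$ to the right of $B$, the vertex $v$ may lie on $\Gamma_i'$ while the interior endpoint may lie in row or column $2$ of $B_i'$, so neither hypothesis of Theorem~\ref{thm: cross-wall} literally applies; you acknowledge this, but the ``explicit rerouting in the residual cases'' is precisely the content that would have to be supplied. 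Neither issue is fatal --- the claim you are proving this way is true --- but both are avoided entirely by applying the lemma to the rectangular core wall directly.
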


\begin{proofof}{Lemma~\ref{lemma: type 4 gives flat wall}}
Let $B'$ be the sub-wall of $B$, obtained from $B\setminus \Gamma$, by deleting all degree-$1$ vertices (see Figure~\ref{fig: Bi'}). The size of $B'$ is $(z\times z)$. We will show that $B'$ is a flat wall in $G$.
We denote the boundary of $B'$ by $\Gamma'$. Recall that we are given a separation $(X,Y)$ of $G$, with $X\cap Y\subseteq \Gamma$, and $B\subseteq X$. 
We assume that the four corners of $B$ appear in clock-wise order $(a,b,c,d)$ along $\Gamma$, with $a$ being the top left corner. We denote the four corners of $B'$ by $a',b',c',d'$ similarly. Recall that graph $X$ contains no wall-cross for $B$, that is, we cannot connect $a$ to $c$ and $b$ to $d$ by disjoint paths in $X$.

We augment the graph $X$ by adding four edges: $(a,b),(b,c),(c,d)$, and $(d,a)$, and we denote by $L$ the cycle consisting of these four edges. Let $\tilde X$ denote the resulting graph.
Since there is no pair of disjoint paths in $X$, connecting $a$ to $c$ and $b$ to $d$, there is no $L$-cross in $\tilde X$, and so, from Theorem~\ref{thm: a cross or C-flat}, graph $\tilde X$ is $L$-flat. Let $G_0,G_1,\ldots,G_k,\tilde G$ be the corresponding collection of graphs, certifying that $\tilde X$ is $L$-flat (recall that we can find these graphs efficiently). We exploit this collection of graphs below in order to prove that $B'$ is a flat wall. 
We need the following simple observation.

\begin{observation}\label{obs: no direct path}
Let $P$ be any path in $G$, connecting a vertex $v\in \Gamma$ to a vertex $u\in B'\setminus \Gamma'$. Then $P$ must contain a vertex of $\Gamma'$.
\end{observation}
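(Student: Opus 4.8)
The statement is equivalent to: $V(\Gamma')$ separates $V(B'\setminus\Gamma')$ from $V(\Gamma)$ in $G$. The plan is to first push the claim into the subgraph $X$, and then contradict the hypothesis that $X$ contains no wall-cross for $B$. Note that this argument does not need the $L$-flat certificate $G_0,\dots,G_k,\tilde G$ (those are used only afterwards), which is why the statement is "simple".

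\emph{Reduction to $X$.} Since $B\subseteq X$, $X\cap Y\subseteq\Gamma$, and $V(B'\setminus\Gamma')$ is disjoint from $V(\Gamma)$, every vertex of $B'\setminus\Gamma'$ lies in $V(X)\setminus V(Y)$; in particular $u\in V(X)\setminus V(Y)$. Recall also that $G$ has no edge between $V(X)\setminus V(Y)$ and $V(Y)\setminus V(X)$. Given a hypothetical $v$--$u$ path $P=(p_0,\dots,p_m)$ in $G$ with $V(P)\cap V(\Gamma')=\emptyset$, let $q$ be the largest index with $p_q\in V(X)\cap V(Y)$, or $q=0$ if no such index exists; in either case $p_q\in V(\Gamma)$. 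A short argument using the absence of $(V(X)\setminus V(Y))$--$(V(Y)\setminus V(X))$ edges (and $p_m\in V(X)\setminus V(Y)$) shows that every vertex of the sub-path $P':=(p_q,\dots,p_m)$ lies in $V(X)$ and every edge of $P'$ lies in $E(X)$. So it suffices to prove: \emph{$X$ contains no path from $V(\Gamma)$ to $V(B'\setminus\Gamma')$ that avoids $\Gamma'$.}

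\emph{Producing a wall-cross.} Suppose for contradiction that such a path $Q=(q_0,\dots,q_m)$ exists in $X$, with $q_0\in\Gamma$, $q_m\in B'\setminus\Gamma'$, and $V(Q)\cap V(\Gamma')=\emptyset$; choose $Q$ of minimum length $m$. Two wall-combinatorial facts are needed, with $h=r=z+2$: (F1) $V(B'\setminus\Gamma')$ is exactly the sub-wall of $B$ spanned by rows $R_3,\dots,R_{h-2}$ and columns $C_3,\dots,C_{r-2}$ — this is the region appearing in case~(2) of Theorem~\ref{thm: cross-wall}; (F2) $\Gamma'$ separates $V(B'\setminus\Gamma')$ from $V(B)\setminus V(B')$ inside $B$, so every $B$-neighbor of a vertex of $V(B'\setminus\Gamma')$ lies in $V(B'\setminus\Gamma')\cup V(\Gamma')$. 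Let $j$ be the largest index with $q_j\notin V(B'\setminus\Gamma')$, so $q_jq_{j+1}\in E(X)$, $q_{j+1}\in V(B'\setminus\Gamma')$, and $q_j\notin V(B'\setminus\Gamma')\cup V(\Gamma')$. If $q_j\in V(B)$, then by (F2) the edge $q_jq_{j+1}$ is not an edge of $B$, hence it is a chord of $B$ inside $X$; moreover $q_j\in V(B)\setminus V(B')=V(\Gamma)$ (here I use that deleting $\Gamma$ from $B$ leaves no degree-$1$ vertices, so $V(B')=V(B)\setminus V(\Gamma)$). Adding this chord to $B$ and applying case~(2) of Theorem~\ref{thm: cross-wall} (with boundary vertex $q_j$ and interior vertex $q_{j+1}$) yields a wall-cross for $B$ contained in $B$ plus that chord, hence in $X$ — contradicting the hypothesis. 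If instead $q_j\notin V(B)$, let $i\le j$ be the largest index with $q_i\in V(B)$ (it exists since $q_0\in\Gamma$); then the sub-path of $Q$ from $q_i$ to $q_{j+1}$ is internally disjoint from $V(B)$, so contracting it to a single edge keeps $B$ a wall. If $q_i\in\Gamma$, case~(2) of Theorem~\ref{thm: cross-wall} again gives a wall-cross for $B$ in $X$, a contradiction; if $q_i\in V(B'\setminus\Gamma')$, then $(q_0,\dots,q_i)$ is a strictly shorter path in $X$ from $\Gamma$ to $B'\setminus\Gamma'$ avoiding $\Gamma'$, contradicting minimality; the remaining possibilities $q_i\in V(\Gamma')$ and $q_i\in V(B)\setminus V(B')\setminus V(\Gamma)$ are excluded by the avoidance of $\Gamma'$ and by $V(B')=V(B)\setminus V(\Gamma)$, respectively.

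\emph{Main obstacle.} The content-bearing part is the last paragraph, and the delicate points there are: verifying (F1) and (F2) (these use planarity of $B$ and, crucially, that $B$ has size $(z+2)\times(z+2)$, so there are two full concentric "rings" between $\Gamma$ and $B'\setminus\Gamma'$, and that $z$ is large enough for the height/width hypothesis $h,r\ge 5$ of Theorem~\ref{thm: cross-wall}); checking that every way the path $Q$ enters and leaves $V(B)$ is caught by one of the cases above; and confirming that in each case the "outer" endpoint fed into Theorem~\ref{thm: cross-wall} genuinely lands on $\Gamma$, which is exactly where one uses that $B\setminus\Gamma$ has no degree-$1$ vertices. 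Everything else is a routine combination of the separation $(X,Y)$, Theorem~\ref{thm: cross-wall}, and the hypothesis that $X$ has no wall-cross for $B$.
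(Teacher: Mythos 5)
Your proof is correct and takes essentially the same route as the paper's: restrict $P$ to a sub-path inside $X$ using the separation $(X,Y)$, extract from it a chord or a sub-path internally disjoint from $B$ that connects the region outside $B'$ to a vertex of $B'\setminus\Gamma'$, and feed that configuration to Theorem~\ref{thm: cross-wall} to produce a wall-cross for $B$ in $X$, contradicting the hypothesis. One small caveat: because $B$ is a subdivision, the identities you lean on in the case analysis --- $V(B')=V(B)\setminus V(\Gamma)$ and the ``exactly'' in (F1) --- fail in general (inner vertices of red paths between $R_1$ and $R_2$, or between $R_2$ and $R_3$, lie in the symmetric differences), so the configurations you exclude on those grounds can actually occur; they are harmless, though, since in each of them both endpoints are interior to $B$ and separated by a row or column, so case~(1) of Theorem~\ref{thm: cross-wall} applies, which is also how the paper's terse argument accommodates its endpoint $v'\in B\setminus B'$.
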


\begin{proof}
Assume otherwise. Since $\Gamma$ separates $X\setminus \Gamma$ from $Y\setminus \Gamma$, there is a sub-path $P'$ of $P$ that starts at a vertex of $\Gamma$, ends at a vertex of $B'\setminus \Gamma'$, and only uses the vertices of $X$. Moreover, there is a sub-path $P''$ of $P'$, that starts at a vertex $v'$ of $B\setminus B'$, ends at a vertex $u'$ of $B'\setminus \Gamma'$, and does not use any vertices of $B$ as inner vertices (where possibly $P''$ consists of only one edge). From Theorem~\ref{thm: cross-wall}, we conclude that $X$ contains a wall-cross for $B$, a contradiction.
 \end{proof}

Let $\rset$ be the set of all connected components of $G\setminus\Gamma'$, and let $R\in \rset$ be the component containing $B'\setminus \Gamma'$. Then, from Observation~\ref{obs: no direct path}, $R$ does not contain any vertices of $B\setminus B'$. Let $R'\in \rset$ be the connected component containing the vertices of $B\setminus B'$. Then $Y\subseteq R'$ must hold, since $\Gamma'$ cannot separate $\Gamma$ from $Y$, as $G$ is connected, $\Gamma$ separates $X\setminus \Gamma$ from $Y\setminus\Gamma$, while $\Gamma'\subseteq X\setminus \Gamma$.

We now turn to define the pegs of the wall $B'$. Consider the vertices that lie in the first row of $B'$. We can partition these vertices into three types.

\begin{itemize}

\item Vertices whose degree in $B'$ is $3$;

\item Vertices of $B'$ whose degree in both $B'$ and $B$ is $2$;

\item Vertices of $B'$ whose degree in $B'$ is $2$ and in $B$ is $3$.
\end{itemize}

Let $Z_1$ be the set of all vertices of the latter type. Define $Z_2,Z_3,Z_4$ similarly for the last row, first column, and last column of $B'$, respectively, and set $Z=\bigcup_{i=1}^4Z_i$. Notice that the corners of $B'$ belong to $Z$.
We choose the vertices of $Z$ to be the pegs of the wall $B'$ (see Figure~\ref{fig: pegs}). Notice that for each peg $v\in Z$, there is a path $P_v$ connecting $v$ to a vertex of $\Gamma$, such that all vertices of $P_v$, except for $v$, belong to $B\setminus B'\subseteq R'$.

\begin{figure}[h]
\scalebox{0.6}{\includegraphics{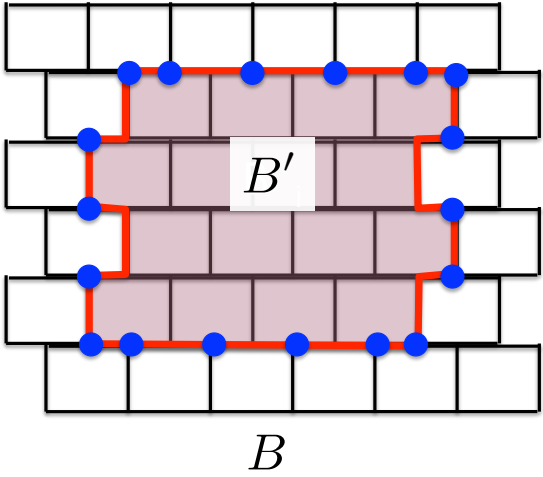}}\caption{The vertices of $Z$ are shown in blue.\label{fig: pegs}}
\end{figure}

In order to prove that $B'$ is a flat wall, we gradually construct a separation $(\aset,\bset)$ of $G$, with $B'\subseteq \bset$, $\aset\cap \bset\subseteq \Gamma'$, $Z\subseteq V(\aset)$, together with an $\aset\cap \bset$-reduction of $G[\bset]$, that is drawn in a disc with the vertices of $\aset\cap \bset$ drawn on its boundary in the order determined by $\Gamma'$.

We will use the fact that $\tilde X$ is $L$-flat, and exploit the corresponding graphs $G_0,G_1,\ldots,G_k,\tilde G$ in order to find the separation $(\aset,\bset)$ of $G$, and the $\aset\cap \bset$-reduction of $G[\bset]$. Consider the graphs $G_j$, for $1\leq j\leq k$. For each such graph $G_j$, the vertices of $V(G_j)\setminus V(G_0)$ are called the \emph{inner vertices} of $G_j$, and the vertices of $V(G_j)\cap V(G_0)$ are called the \emph{endpoints} of $G_j$. We partition the graphs $G_j$ into four types. For $1\leq j\leq k$:

\begin{itemize}
\item Graph $G_j$ is of type 1 iff none of its inner vertices belongs to $\Gamma'$;

\item Graph $G_j$ is of type $2$, iff it is not of type 1, and it has exactly two endpoints;

Notice that if $G_j$ does not belong to types $1$ and $2$, then it must have an inner vertex in $\Gamma'$, and three endpoints. Since $\Gamma'$ is a cycle, two of the endpoints of $G_j$ must belong to $\Gamma'$. 

\item We say that graph $G_j$ is of type $3$, iff it is not of types 1 and 2, and its third endpoint belongs to $R'$.
\item Otherwise, graph $G_j$ is of type $4$ (see Figure~\ref{fig: graph-types-sub}).
\end{itemize}

\begin{figure}[h]
\centering
\subfigure[The four graph types: $G_1$ is of type 1, $G_2$ is of type 2, $G_3$ is of type 3, and $G_4$ is of type 4.]{\scalebox{0.35}{\includegraphics{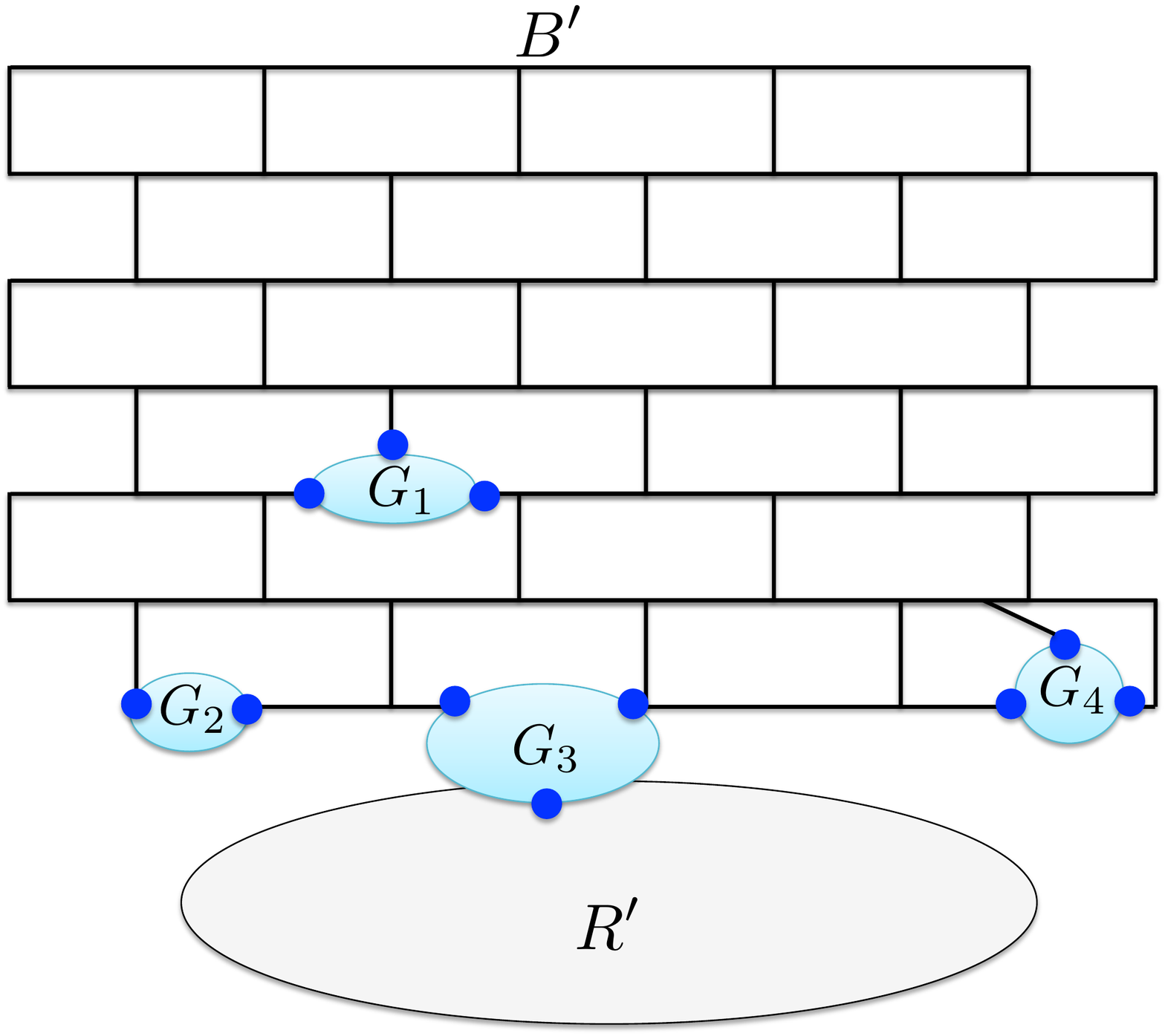}}\label{fig: graph-types-sub}}
\hspace{1cm}
\subfigure[Handling a type-3 graph]{
\scalebox{0.35}{\includegraphics{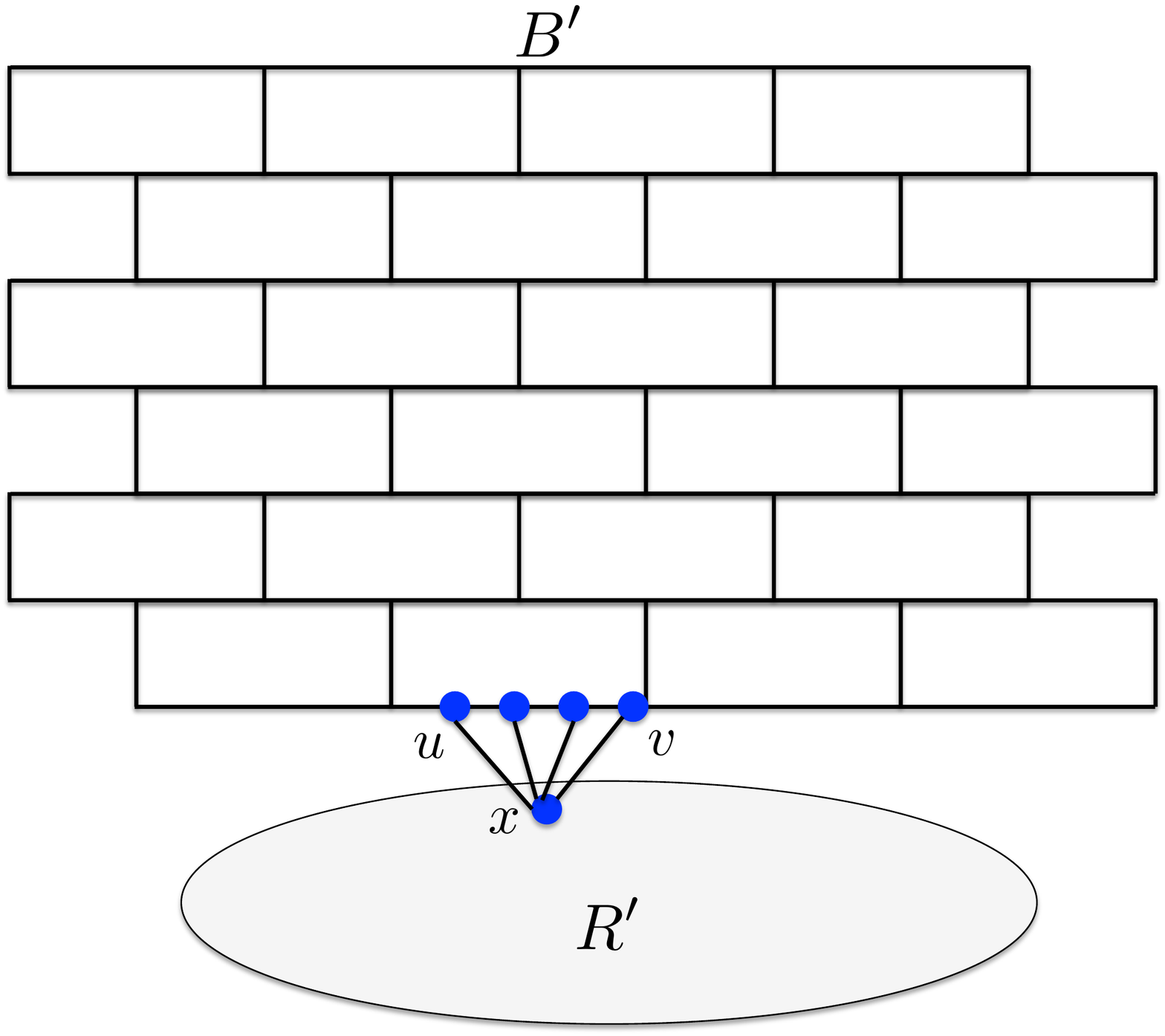}}\label{fig: type-3-graph}}
\caption{The four types of graphs \label{fig: graph-types}}
\end{figure}

Observe that if $G_j$ is of type 2 or 4, then it does not contain the vertices of $Z$ as its inner vertices, since for each $v\in Z$, there is a path $P_v$, connecting $v$ to $\Gamma$, and $P_v\setminus \set{v}\subseteq R'$, as observed above.
We need the following simple observation.

\begin{observation}\label{obs: type-3 gives a path}
For each $1\leq j\leq k$, if $G_j$ is a graph of type $3$, and $D=G_j\cap B'$, then $D$ is a sub-path of $\Gamma'$.
\end{observation}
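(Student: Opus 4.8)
The plan is to prove two claims: (a) $G_j$ contains no vertex of $V(B')\setminus V(\Gamma')$, so $D=G_j\cap B'$ is a subgraph of the cycle $\Gamma'$; and (b) that subgraph is a single arc. I will use repeatedly the \emph{propagation principle}: if $v$ is an inner vertex of $G_j$ (that is, $v\in V(G_j)\setminus V(G_0)$), then every edge of $\tilde X$ incident with $v$ lies in $G_j$, since otherwise that edge belongs to some $G_i$ with $i\neq j$, forcing $v\in V(G_i)\cap V(G_j)\subseteq V(G_0)$. A consequence: walking along any path of $\tilde X$ that starts at an inner vertex of $G_j$, each edge stays inside $G_j$ until the walk first meets a vertex of $V(G_0)$; hence every path of $\tilde X$ from a vertex of $V(G_j)\setminus V(G_0)$ to a vertex not in $V(G_j)\setminus V(G_0)$ passes through a vertex of $V(G_j)\cap V(G_0)$, i.e. through one of the three endpoints of $G_j$. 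Recall also the standing facts about the type-$3$ piece $G_j$: its endpoints are $p,q\in V(\Gamma')$ and $s\in R'$; since $R'$ is a component of $G\setminus\Gamma'$, $s\notin V(\Gamma')$, and since $V(B')\setminus V(\Gamma')\subseteq R$ while $R\neq R'$ (by Observation~\ref{obs: no direct path}, which places $\Gamma$ and $B'\setminus\Gamma'$ in different components of $G\setminus\Gamma'$), $s\notin V(B')$.

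For (a), suppose some $u\in V(G_j)\cap(V(B')\setminus V(\Gamma'))$. Then $u\in R$, so $u\notin\set{p,q,s}$ (as $p,q\in V(\Gamma')$, $s\notin V(B')$), hence $u$ is an inner vertex of $G_j$. Choose a corner $x$ of $B$ with $x\notin\set{p,q,s}$ — possible, since the four corners are distinct, lie on $\Gamma$ (hence differ from $p,q$), and at most one of them equals $s$. Inside the wall $B$ there is a path from $u$ to $x$ using only vertices of $B\setminus\set{p,q,s}$: $p,q$ lie on the inner boundary $\Gamma'$, $s$ (if it lies in $B$ at all) lies on the thin outer layer $B\setminus B'$, and for $z$ large, deleting three vertices from the wall $B$ still leaves its interior connected to one of its four corners. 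Since $x\in V(G_0)$ (it lies on $L$), this path runs from $u\in V(G_j)\setminus V(G_0)$ to $x\notin V(G_j)\setminus V(G_0)$ while avoiding all three endpoints of $G_j$, contradicting the previous paragraph. Hence $V(G_j)\cap V(B')\subseteq V(\Gamma')$; since the boundary of a wall is an induced cycle, $E(G_j)\cap E(B')\subseteq E(\Gamma')$ as well, so $D$ is a subgraph of $\Gamma'$, i.e. a disjoint union of arcs.

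For (b), take any arc-component $\alpha$ of $D$ containing an edge and let $x$ be an endpoint of $\alpha$. The edge of $\Gamma'$ at $x$ not in $\alpha$ is not in $G_j$ (else $\alpha$ would not be maximal), so it lies in $G_0$ or in another $G_i$; either way $x\in V(G_0)$, so $x\in V(G_0)\cap V(G_j)\cap V(\Gamma')\subseteq\set{p,q}$ (using $s\notin V(\Gamma')$). Thus every arc-component of $D$ joins $p$ to $q$, and $\Gamma'$ has only two such arcs; if $D$ contained both, then $\Gamma'\subseteq G_j$, and then a degree-$3$ vertex $w$ of $\Gamma'$ with $w\notin\set{p,q}$ would be an inner vertex of $G_j$, whence propagation along its edge into the interior of $B'$ would produce a vertex of $G_j$ in $V(B')\setminus V(\Gamma')$, contradicting (a). Finally, since $G_j$ is of type $3$ and hence not of type $1$, it has an inner vertex $v^{*}$ on $\Gamma'$, which by propagation lies in the interior of some arc-component of $D$; therefore $D$ is non-empty with exactly one arc-component, namely a subpath of $\Gamma'$, as required.

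I expect claim (a) to be the main obstacle — ruling out that a type-$3$ piece $G_j$ leaks into the interior of the core wall $B'$. The separation argument above is the cleanest route I see; its delicate points are that the interior of $B'$ is connected and that deleting the three endpoints of $G_j$ does not disconnect the interior of $B$ from all of its corners (both routine for $z$ large), together with the corner case $s\in V(B)$, handled by keeping the auxiliary path on the outer layer away from $s$. Claim (b) is then just an arc-counting argument on the cycle $\Gamma'$.
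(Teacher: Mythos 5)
Your proof is correct and takes essentially the same route as the paper's, which compresses the whole argument into two sentences: only the two endpoints of $G_j$ lying on $\Gamma'$ can attach $G_j\cap B'$ to the rest of $B'$, and such a $2$-vertex cut of a wall can only separate off a sub-path of its boundary $\Gamma'$. Your claims (a) and (b) are an explicit unpacking of exactly that fact, via the path-to-a-corner connectivity argument and the arc count on the cycle $\Gamma'$.
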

The observation follows since exactly two endpoints of $G_j$ belong to $B'$, and they must lie on $\Gamma'$. Any such 2-vertex cut of $B'$ can only disconnect a sub-path of $\Gamma'$ from the rest of the wall $B'$.

We are now ready to define the separation $(\aset,\bset)$ of $G$. First, we define a set $Z'=V(\aset)\cap V(\bset)$, as follows. Set $Z'$ contains all the vertices of $v$ of $\Gamma'$, such that there is an edge $(u,v)\in E(G)$ with $u\in R'$ (notice that this includes all vertices of $Z$). Additionally, for each type-3 graph $G_j$, we add all the vertices of $G_j\cap B'$ to $Z'$. From Observation~\ref{obs: type-3 gives a path}, $Z'\subseteq V(\Gamma')$.

Let $\rset'$ be the set of all connected components of $G\setminus Z'$. Then since $Z'$ separates $B'\setminus \Gamma'$ from $R'$, there is a connected component $\tilde R\in \rset'$ with $R\subseteq \tilde R$, and a connected component $\tilde R'\in \rset'$ with $R'\subseteq \tilde R'$.
The set $V(\bset)$ of vertices consists of all the vertices of $Z'$ and $V(\tilde R)$. We let $E(\bset)$ be all the edges of $G$ with both endpoints in $V(\bset)$. Clearly, $B'\subseteq \bset$. We set $V(\aset)=V(G)\setminus V(\tilde R)$. Set $E(\aset)$ contains all edges with both endpoints in $V(\aset)$, except for the edges with both endpoints in $\Gamma'$.

It is easy to see that $\aset\cap \bset=Z'\subseteq V(\Gamma')$, and $Z\subseteq \aset\cap \bset$. It now only remains to show that $G[\bset]$ is $\aset\cap \bset$-flat. 
In order to do so, we exploit the planar drawing of $\tilde G$, and turn it into a drawing of an $\aset\cap \bset$-reduction of $G[\bset]$.

Consider again the graphs $G_1,\ldots,G_k$. The only graphs $G_j$ that may contain the vertices of $\aset\cap \bset$ as inner vertices are graphs  of type $3$. Let $G_j$ be any such graph. Recall that $G_j\cap B'$ is a sub-path of $\Gamma'$. Denote this path by $P_j$, and its endpoints by $u$ and $v$. We claim that $G_j\cap \bset=P_j$. Indeed, if we let $S=V(G_j\setminus P_j)$, then all vertices of $S$ are disconnected from the vertices of $B'$ in $G\setminus Z'$, and so they do not belong to $\tilde R$. Therefore, $G_j\cap \bset=P_j$.

 Let $x$ denote the endpoint of $G_j$ that does not lie on $\Gamma'$, so $x\in \tilde R'$. Recall that the drawing of $\tilde G$ contains a face whose only endpoints are $u,v$ and $x$. Consider the curve on the boundary of that face, connecting $u$ to $v$, that does not contain $x$. We subdivide this curve with the vertices of $P_j$, so it now corresponds to a drawing of the path $P_j$. For each vertex $w$ of $P_j$, we add an edge $(x,w)$ to the graph, and add the drawing of this edge to our current planar drawing (see Figure~\ref{fig: type-3-graph}). Once we process all graphs $G_j$ of type 3 in this manner, we obtain a planar drawing of the resulting graph $\tilde G'$. Every vertex of $\aset\cap \bset$ now belongs to $\tilde G'$. Moreover, each such vertex $v\in V(\aset\cap \bset)$ has the following property: there is an edge in $\tilde G'$ connecting $v$ to some vertex of $\tilde R'$. Since $\tilde R'\cap \tilde{G}'$ is a connected graph, we can unify all vertices of $\tilde R'\cap \tilde G'$ into a single vertex $v_0$, and obtain a planar drawing of the resulting graph, where each vertex of $\aset\cap \bset$ is connected to $v_0$ with an edge. We can now immediately obtain a drawing of $\bset$ inside a disc, with the vertices of $\aset\cap \bset$ appearing on the boundary of the disc, in the order specified by $\Gamma'$. 
\end{proofof}


\subsection{Type-3 Core Walls}


The goal of this section is to prove the following theorem.

\begin{theorem}\label{thm: type-3 walls}
Let $G$ be a connected graph, $\tau>t$ an integer, $W$ a wall in $G$, and $(\bset,\pset)$ a chain of $N$ walls of height $z>2\tau$ in $G$. If the number of $\tau$-core walls $B'_i$ of type 3 with $1<i<N$ in the chain of walls is at least $2T$, then $G$ contains a $K_t$-minor. Moreover, if $B_1\in \bset$ is a sub-wall of $W$, then $G$ contains a model of a $K_t$-minor that is grasped by $W$, and it can be found efficiently given $(\bset,\pset)$.
\end{theorem}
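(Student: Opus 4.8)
The plan is to reduce to one of the graph families $\hset_1$, $\hset_2$, $\hset_3$ (or to the graph $H^*$), and then invoke the corresponding theorem from Section~\ref{sec: two graphs}. The idea is that a type-3 core wall $B_i'$ supplies a wall-cross in the ``exterior'' graph $X_i$ of the separation $(X_i,Y_i)$ witnessing that $B_i'$ is type-3; each such wall-cross, after being routed to the boundary of $B_i'$ and then through the appropriate rows/columns, gives two disjoint paths that ``braid'' around $B_i'$. Since we have at least $2T$ type-3 core walls, and since the chain structure places these core walls in a linear order connected by the row-paths of the basic walls and the linking paths $\pset_j$, we can stitch together a long ``highway'' consisting of the rows of the chain, and use the $2T$ braids to perform the $2T$ swap/shift operations needed to explore all $T$ pairs $\set{r,q}$. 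Concretely I expect this to instantiate $\hset_1$: the spine of the chain (say rows $R_{t+1},\dots,R_{2t}$ of the basic walls, threaded through the linking paths) plays the role of the grid $G'$, and each type-3 braid contributes the extra edges of $E'$; the disjointness of the $X_i$'s from $W'\setminus B_i'$ (by definition of the separation) is what keeps the braids from interfering with each other or with the spine.

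First I would fix, for each of the (at least) $2T$ indices $i$ with $B_i'$ of type 3, the separation $(X_i,Y_i)$ with $B_i'\subseteq X_i$, $X_i\cap Y_i\subseteq \Gamma_i'$, $B_j\subseteq Y_i$ for $j\ne i$, and a wall-cross $(P_1^i,P_2^i)$ in $X_i$ joining $a_i'$ to $c_i'$ and $b_i'$ to $d_i'$. Discard indices if necessary so that exactly $2T$ remain, and relabel them in increasing order. Since $X_i\setminus \Gamma_i'$ is separated from the rest of $W'$ by $\Gamma_i'$, the two paths $P_1^i,P_2^i$ together with short sub-paths of $\Gamma_i'$ and of the rows/columns of $B_i$ can be combined — exactly as in the proof of Theorem~\ref{thm: cross-wall}, but now in the exterior — into a single path $Q_i$ internally disjoint from $W'\setminus B_i'$ that enters $B_i'$ at a vertex on one side, crosses to the opposite side, and whose endpoints lie on $\Gamma_i'$ in ``braiding'' position (top-left to bottom-right, etc.). The key point, which I would verify carefully, is that the $Q_i$'s are pairwise disjoint and disjoint from the spine except at their prescribed endpoints: this follows because $Q_i\setminus B_i'\subseteq X_i\setminus\Gamma_i'$, which by the separation is disjoint from every $B_j$ ($j\ne i$) and from every linking path $\pset_j$, hence from the spine, and the portion of $Q_i$ inside $B_i'$ only uses a thin strip near $\Gamma_i'$ that we can keep clear of the spine (the spine runs through the middle rows of the basic walls while the braiding endpoints sit on $\Gamma_i'$, which is well separated from the middle rows since $z>2\tau>2t$).

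Next I would build the grid $G'$ witnessing membership in $\hset_1$. Contract the chain $W'(\bset,\pset)$: each basic wall $B_i$ contracts to a $(z\times z)$-grid, and the linking paths $\pset_j$ contract to single edges identifying the last column of one grid with the first column of the next, so the whole chain contracts to a $(z\times zN)$-grid $\tilde G$ (using that $W'$ is a wall of height $z$). Now the $2T$ braids $Q_i$, after contraction, become $2T$ extra edges joining two vertices of $\tilde G$, each pair separated by many columns (the whole width of one basic wall, which is $z\geq t+2$ since $z>2\tau>2t$). With the rows chosen so that the braiding endpoints avoid the top $t$ and bottom $t$ rows — which we can arrange since $\tau>t$, so $\Gamma_i'$ lies strictly inside $B_i$, at distance $\tau>t$ from the top and bottom — the hypotheses of the definition of $\hset_1$ are met: $|E'|\geq 2T$ (trim to exactly $T$? — no, $\hset_1$ wants $T$ edges; trim to $T$ braids, which is fine since $2T\geq T$; wait, $\hset_1$ needs exactly $T$ edges and $2T$ distinct endpoints, so keep $T$ braids and throw away the rest), all endpoints in the ``middle'' sub-grid $G_3$, pairwise separated by $\geq t+2$ columns, none in the first column. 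Hence Theorem~\ref{thm: family H1 to clique} applies and yields a $K_t$-minor; moreover, since the block $B_1$ of $\hset_1$ is (a sub-grid of) the contraction of the basic wall $B_1$, if $B_1$ is a sub-wall of $W$ then $B_1$'s contraction is a contraction of a sub-wall of $W$, so the resulting $K_t$-minor is grasped by $W$, and the whole construction is efficient given $(\bset,\pset)$ and the type-3 witnesses, which themselves are computable efficiently by the remark preceding the theorem.

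The main obstacle I anticipate is the bookkeeping in step two: turning the abstract wall-cross $(P_1^i,P_2^i)$ in $X_i$ into a single well-behaved braid $Q_i$ with endpoints in exactly the right ``opposite-corner'' positions on $\Gamma_i'$, while simultaneously guaranteeing that the braid stays clear of the spine rows inside $B_i'$. This requires an argument like the one in Theorem~\ref{thm: cross-wall} adapted to the exterior side of the separation, plus a careful choice of which rows of the basic walls form the spine (so that the contracted braid edge lands in $G_3$ and not in $G_1\cup G_2$). The other subtlety is making sure the ``at least $2T$'' count is enough: we need $T$ braids for the $T$ pairs and the definition of $\hset_1$ wants exactly $T$ extra edges, so $T$ braids suffice — but one must check the braids are genuinely usable as $\hset_1$-edges, i.e.\ their endpoints are distinct across braids (immediate, since distinct $B_i'$'s have disjoint $\Gamma_i'$'s) and separated by enough columns (immediate from the chain structure, one full basic-wall width apart). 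Everything else is routing inside grids, handled by Claim~\ref{claim: t-linkedness of grid} via Theorem~\ref{thm: family H1 to clique}.
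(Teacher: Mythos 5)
Your reduction breaks down at the step where you replace the wall-cross of $B_i'$ by a single ``braid'' $Q_i$ and then treat $Q_i$ as one extra edge of an $\hset_1$-instance. A wall-cross is a \emph{pair} of disjoint paths in $X_i$, and these paths are allowed to (and in general must) run through the deep interior of $B_i'$: the separation only guarantees $X_i\cap Y_i\subseteq \Gamma_i'$, it does not guarantee that the cross lives near $\Gamma_i'$. Concretely, take $X_i$ to be $B_i'$ together with a single chord $(u,v)$ whose endpoints are two interior vertices of $B_i'$ separated by a row; this is a legitimate type-3 wall (the chord is not a bridge, and Theorem~\ref{thm: cross-wall} produces a wall-cross), yet every corner-to-corner path in $X_i$ that leaves the wall does so only at that one interior chord, so no path with the properties you want --- internally disjoint from the grid you are contracting, or confined to ``a thin strip near $\Gamma_i'$'' --- exists. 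Worse, a \emph{single} path from $a_i'$ to $c_i'$ always exists inside the planar wall $B_i'$ itself (go along the boundary), so extracting one path from the cross discards exactly the information that makes type-3 walls useful; if one braid per wall sufficed, a plain planar grid would contain a $K_t$-minor. The crossing structure is only usable if both paths of the cross are kept and used \emph{simultaneously} to exchange two of the $t$ strands passing the wall --- which is what the paired diagonal edges of $H^*$ encode, and what a single $\hset_1$-edge cannot.

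The paper's proof avoids this by never routing the $t$ strands through the selected core walls at all: it picks $T$ type-3 walls spaced at least one wall apart, builds the ``spine'' from the top $t$ rows, the bottom $t$ rows, and a bundle of $t$ columns just before each selected wall (the odd blocks $J_r$), and uses each full wall-cross $Q^1_{i_r},Q^2_{i_r}$ --- extended along the first and last columns of the $t$-core sub-wall so that its four endpoints sit in the top and bottom row bundles --- as the two cross edges of the even block $M_r$ of $H^*$; Lemma~\ref{lemma: from H1 to clique} then performs the $T$ swaps. Since the strands live only in rows $1,\dots,t$ and $z-t+1,\dots,z$ and in columns disjoint from the selected walls, the (arbitrary) interior behaviour of the wall-crosses never interferes with them. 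A secondary issue with your write-up, fixable but worth noting, is that you keep $2T$ consecutive type-3 walls without spacing them out: the braid endpoints of two adjacent selected walls sit in adjacent columns of the contracted grid, violating the $(t+2)$-column separation that $\hset_1$ requires; the paper's greedy selection of every other wall (reducing $2T$ to $T$) is there precisely for this reason.
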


\begin{proof}
We will show that $G$ contains the graph $H^*$, defined in Section~\ref{sec: two graphs}, as a minor. We will also ensure that the first odd block $J_1$ of $H^*$ is a contraction of a sub-wall of $B_1$. The theorem will then follow from Lemma~\ref{lemma: from H1 to clique} and Observation~\ref{obs: transitivity of minors}. 

Let $\sset$ be the set of all $\tau$-core walls $B'_i$ of type 3, with $1<i<N$. We select a subset $\sset'\subseteq \sset$ of $T$ core walls, such that every consecutive pair of such walls is separated by at least one core wall. In other words, if we denote $\sset'=\set{B'_{i_1},B'_{i_2},\ldots,B'_{i_T}}$, where $1<i_1<i_2<\ldots<i_T<N$, then for all $1\leq r<T$, $i_{r+1}\geq i_r+2$. Subset $\sset'$ can be found using a standard greedy procedure, by ordering the walls in $\sset$ in the ascending order of their indices, and then choosing all walls whose location in this ordering is even. We then delete walls from $\sset'$ as necessary to ensure that $|\sset'|=T$.

We will use each wall $B'_{i_r}\in \sset'$ to ``implement'' the even block $M_r$ of $H_1$, and the wall $B_{i_{r}-1}$ (whose corresponding core wall cannot belong to $\sset'$ due to the choice of $\sset'$) to implement $J_r$. The only exception is when $r=1$, where we will use $B_1$ to implement $J_1$.

Fix some $1\leq r\leq T$. Recall that $a_{i_r}',b_{i_r}',c_{i_r}', d_{i_r}'$ are the corners of the wall $B'_{i_r}$, and they appear on its boundary $\Gamma'_{i_r}$ in this order. Moreover, since there is no bridge incident to $B'_{i_r}$, there is a separation $(X_{i_r},Y_{i_r})$ of $G$, with $B'_{i_r}\subseteq X_{i_r}$, $X_{i_r}\cap Y_{i_r}\subseteq \Gamma_{i_r}'$, such that all graphs $\set{X_{i_r}}_{j=1}^T$ are disjoint from each other. Since $B_{i_r}'$ is a type-3 core wall, we can efficiently find a pair $Q^1_{i_r},Q^2_{i_r}$ of disjoint paths in $X_{i_r}$ connecting $a_{i_r}'$ to $c_{i_r}'$ and $b_{i_r}'$ to $d_{i_r}'$, respectively. 


Fix some $1\leq r\leq T$. We define a set $\cset_r$ of $t$ columns of $W'$, as follows.
Let $r_1$ be the index of the last column of $B_{i_{r-1}}$ (for $r=1$, set $r_1=0$), and let $r_2$ be the index of the first column of $B_{i_r}$. Let $\cset_r'$ be the consecutive set of the columns of $W'$, starting with $C_{r_1+1}$, and ending with $C_{r_2-1}$. Observe that $|\cset_r'|>t$, since all columns of the basic wall $B_{i_r-1}$ belong to $\cset_r'$ (for $r=1$, this holds for $B_1$). 
Finally, let $\cset_r\subseteq \cset_r'$ be the set of the first $t$ columns of $\cset_r'$.
Notice that for all $1<r\leq T$, all columns of $\cset_r$ lie between the basic walls $B_{i_r-1}$ and $B_{i_r}$, while the columns of $\cset_1$ lie before $B_{i_1}$ and are contained in $B_1$.

Let $W''$ be the sub-graph of $W'$, obtained by taking the union of all columns in $\bigcup_{r=1}^T\cset_r$, the top $t$ rows and the bottom $t$ rows of $W'$. For each $1\leq r\leq T$, consider the $t$-core sub-wall $B_{i_r}''$ of $B_i$. We extend the paths $Q^1_{i_r}$ and $Q^2_{i_r}$ along the first and the last columns of $B_{i_r}''$, so that these paths become a wall-cross for $B_{i_r}''$, and they are internally disjoint from the top and the bottom row of $B_{i_r}''$.

Let $W^*$ be the union of $W''$ with $\set{Q^1_{i_r},Q^2_{i_r}\mid 1\leq i_r\leq T}$. 
In our final step, for each column $C_q$ each row $R_{\ell}$ of $W''$, we contract all edges in $C_q\cap R_{\ell}$. 
It is easy to see that this graph is a sub-division of $H^*$, and it is a minor of $G$. Moreover,  the first odd block $ J_1$ of the resulting copy of $H^*$ is a contraction of a sub-wall of $B_1$. Therefore, from Lemma~\ref{lemma: from H1 to clique}, $G$ contains a $K_t$-minor. Moreover, if $B_1$ is a sub-wall of $W$, then there is a model of $K_t$ grasped by $W$ in $G$, and this model can be found efficiently.
\end{proof}

\subsection{Type-1 Core Walls}


In this section we take care of type-1 core walls, by proving the following theorem.

\begin{theorem}\label{thm: type-1 walls}
Let $G$ be a connected graph, $\tau\geq 2t$ an integer, $W$ a wall in $G$, and $(\bset,\pset)$ a chain of $N$ walls of height $z>2\tau$ in $G$. If the number of $\tau$-core walls $B'_i\in \bset$  of type 1 with $1<i<N$ is at least $12T+6$, then $G$ contains a $K_t$-minor. Moreover, if $B_1\in \bset$ is a sub-wall of $W$, then $G$ contains a model of a $K_t$-minor grasped by $W$, and it can be found efficiently, given $(\bset,\pset)$.\end{theorem}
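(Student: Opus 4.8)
The plan is to show that, from at least $12T+6$ type-1 $\tau$-core walls, $G$ contains a graph from the family $\hset_3$ as a minor (with the special first block $J_1$ a contraction of a sub-wall of $B_1$); the result then follows from Theorem~\ref{thm: family H3 to clique} and the transitivity of minors (Observation~\ref{obs: transitivity of minors}). Recall that a type-1 core wall $B_i'$ has a neighborhood bridge $F_i$ incident on it: $F_i$ is an edge or a connected component of $G\setminus V(W')$, touching a vertex of $B_i'\setminus \Gamma_i'$ and a vertex of $W'\setminus B_i'$, with all of $W'$ that it touches lying in $\nset(B_i)=V(B_{i-1}\cup\pset_{i-1}\cup B_i\cup\pset_i\cup B_{i+1})$. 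First I would thin out the collection of type-1 walls: greedily select a sub-collection $\sset'$ of the type-1 core walls, say $B'_{i_1},\ldots,B'_{i_k}$ with $i_{j+1}\geq i_j+3$ (keeping every third one in index order), so that the neighborhoods $\nset(B_{i_j})$ are pairwise disjoint, and also skip indices near $1$ so that we may reserve $B_1$ (together with its neighbors) for implementing $J_1$. Starting from $12T+6$ walls this leaves at least $2T+2$ indexed walls, which is the number of edges needed by $\hset_3$.

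For each selected wall $B'_{i_j}$, I would use its neighborhood bridge to route a path $Q_j$ inside $G$ that starts at a vertex of $B'_{i_j}\setminus\Gamma'_{i_j}$, traverses $F_{i_j}$, and ends at another vertex of $W'$ that is still inside $\nset(B_{i_j})$; since $\nset(B_{i_j})$ is a wall-like planar piece of $W'$ (three consecutive basic walls plus connecting paths), I can extend both endpoints of this bridge-path, inside $W'\cap\nset(B_{i_j})$ and internally disjoint from everything used by other blocks, to obtain a path $Q_j$ whose two endpoints $x_j,y_j$ are well-separated horizontally in the long wall $W'$. Concretely, I want the endpoints to land in the interior of the central basic wall $B_{i_j}$ and in a neighboring basic wall, so that in the contracted grid $\tilde G$ (obtained from $W'$ by contracting $W'$ down to the $(z\times zN(N-2))$-grid it contains as a contraction) the pair $x_j,y_j$ is separated by at least $t+1$ columns, and distinct pairs are separated by at least $2t+2$ columns — exactly the quantitative conditions in the definition of $\hset_3$. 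I also need to ensure no endpoint lies in the first $t$ columns or last column of the grid; this is arranged by the reservation of $B_1$ and by using the slack in the width. The paths $Q_j$ are pairwise internally disjoint because the neighborhoods $\nset(B_{i_j})$ are disjoint and the bridges $F_{i_j}$ are disjoint from $W'$; contracting each $Q_j$ to a single edge of the underlying grid produces a graph $H\in\hset_3$ as a minor of $G$, with $2T+2\geq$ (in fact exactly as many as we need, discarding surplus) edges, and $J_1$ a contraction of a sub-wall of $B_1$.

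The main obstacle I expect is the bookkeeping needed to guarantee the \emph{separation parameters} of $\hset_3$ simultaneously with internal disjointness of the $Q_j$'s and with the $x$-endpoints landing in the ``middle'' rows $G_3$ of the grid (rows $R_{2t+1},\ldots,R_{h-2t}$). A neighborhood bridge only guarantees that $B_i'\setminus\Gamma_i'$ is touched, i.e. the endpoint avoids the top/bottom $\tau$ rows of $B_i$; taking $\tau\geq 2t$ is precisely what makes this endpoint fall in $G_3$ after contraction, so the hypothesis $\tau\geq 2t$ is used here. The delicate point is the \emph{other} endpoint $y_j$: a neighborhood bridge may touch $W'$ only at vertices of $B_{i_j}$ itself (not a neighboring basic wall), in which case $x_j$ and $y_j$ might be close, or even in the same basic wall; I would handle this by a case analysis — if the bridge reaches $B_{i_j-1}$ or $B_{i_j+1}$ we are done immediately, and if it only reaches $B_{i_j}$, then since the endpoint in $B_{i_j}$ is outside $\Gamma'_{i_j}$ we can still pick it far from the $x$-endpoint within $B_{i_j}$, using that $z>2\tau$ gives enough room, and then route an extension through the connecting paths $\pset_{i_j}$ into a neighboring basic wall to push the two endpoints far enough apart. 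Finally I would note that every step (selecting $\sset'$, finding bridge paths, routing extensions inside the planar pieces, performing the contraction) is efficient, so given $(\bset,\pset)$ the model of $K_t$ grasped by $W$ is found efficiently, as claimed.
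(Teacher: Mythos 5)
The central step of your plan---realizing a member of $\hset_3$ as a minor directly from the neighborhood bridges---does not go through, because $\hset_3$ requires each extra edge's endpoints $x_i,y_i$ to be separated by at least $t+1$ columns of the grid, and a neighborhood bridge gives no such guarantee. The bridge for $B'_{i_j}$ is only promised to attach to some vertex of $B'_{i_j}\setminus\Gamma'_{i_j}$ and to some vertex of $W'\setminus B'_{i_j}$ lying in $\nset(B_{i_j})$; that second attachment may sit in the top $\tau$ rows of $B_{i_j}$ directly above the first one (same column, zero column separation), or in the last column of $B_{i_j-1}$ while the first lies in the first column of $B_{i_j}$ (adjacent columns). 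So your claim that reaching a neighboring basic wall settles the matter immediately is not correct, and in the same-wall case the attachment vertices are fixed by the bridge---you cannot ``pick'' the $y$-endpoint far from the $x$-endpoint. Your proposed remedy, extending $Q_j$ through $W'\cap\nset(B_{i_j})$ (e.g.\ through the connecting paths $\pset_{i_j}$), destroys exactly the property the contraction argument needs: the paths realizing the edges of $E'$ must be internally disjoint from $W'$, since $W'$ is what gets contracted to the grid $G'$; a path that reuses wall vertices internally cannot be contracted to a single extra edge between two grid vertices without first excising those vertices from the wall, a surgery your proposal does not carry out or account for.

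This is precisely why the paper does not use $\hset_3$ for type-1 walls. After thinning to $4T+2$ bridges with pairwise disjoint neighborhoods, it splits by where the second endpoint $y_{i_r}$ lands. If at least $2T+2$ of them land in the top $t$ or bottom $t$ rows of $W'$, it builds a member of $\hset_2$, whose definition deliberately omits any column-separation requirement between $x_i$ and $y_i$ (only the $x$-endpoints must be spread out; the $y$-endpoints need only lie in $G_1\cup G_2$). Otherwise at least $2T$ bridges avoid the top and bottom $t$ rows at both ends; each such bridge, whose endpoints are separated by at least one row or column, is converted via Theorem~\ref{thm: cross-wall} into a wall-cross for an enlarged basic wall containing all of $\nset(B_i)$, and Theorem~\ref{thm: type-3 walls} (built on the graph $H^*$ and the permutation-swap argument) finishes---again with no need for horizontal separation between the two endpoints of a bridge. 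To salvage a single-gadget approach you would need either a strengthened realizability argument for $\hset_3$ tolerating nearly coincident endpoints, or a separate lemma handling the short bridges; as written, the proposal is missing the idea that handles them.
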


\begin{proof}
Let $\sset$ be the set of $\tau$-core walls $B'_i$ of type 1, with $1<i<N$.  We select a subset $\sset'\subseteq \sset$ of $4T+2$ core walls, such that, if we denote $\sset'=\set{B'_{i_1},B'_{i_2},\ldots,B'_{4T+2}}$, where $1<i_1<i_2<\ldots<i_{4T+2}<N$, then $i_1>2$, and for all $1\leq r<4T+2$, $i_{r+1}\geq i_r+3$. In other words, $B'_1,B'_2,B'_N\not \in \sset'$, and every consecutive pair of core walls is separated by at least two walls. In order to construct $\sset'$, we can use a greedy algorithm similar to the one in the proof of Theorem~\ref{thm: type-3 walls}: order the walls in $\sset$ in the ascending order of their indices; add to $\sset'$ all walls whose indices are $0$ modulo $3$ even in this ordering; delete walls from $\sset'$ as necessary until $|\sset'|=4T+2$ holds.

Consider now some wall $B'_{i_r}\in \sset'$. Recall that there is at least one neighborhood bridge $F_r$ incident on $B'_{i_r}$. This bridge must contain a path $P_{i_r}$, connecting a vertex of $B'_{i_r}\setminus \Gamma_{i_r}'$ to a vertex of $\nset(B_{i_r})\setminus B'_{i_r}$, such that $P_{i_r}$ does not contain any vertices of $W'$ as internal vertices. We denote the endpoint of $P_{i_r}$ that belongs to $B'_{i_r}\setminus \Gamma'_{i_r}$ by $x_{i_r}$, and its other endpoint by $y_{i_r}$. Notice that since $\tau\geq 2t$, $x_{i_r}$ cannot belong to the sub-walls of $W'$ spanned by the top $2t$ or the bottom $2t$ rows of $W'$. All paths $\set{P_i}_{B'_i\in \sset'}$ are completely disjoint from each other, since for $r\neq r'$, $|i_r-i_{r'}|\geq 3$, so the neighborhood bridges $F_{i_r},F_{i_{r'}}$ must be disjoint.

Let $\rset_1$ be the set of the top $t$ rows of $W'$, $\rset_2$ the set of the bottom $t$ rows of $W'$, and $\rset_3$ the set of all remaining $z-2t$ rows of $W'$. For $1\leq j\leq 3$, let $W_j$ be the sub-wall of $W'$ spanned by the rows in $\rset_i$ and all columns of $W'$.
We partition $\sset'$ into two subsets, $\sset_1,\sset_2$, where $\sset_1$ contains all walls $B'_{i_r}\in\sset'$ with $y_{i_r}\in V(W_1\cup W_2)$, and $\sset_2$ containing all remaining walls. The following two lemmas will finish the proof.

\begin{lemma}
If $|\sset_1|\geq 2T+2$, then $G$ contains a $K_t$-minor. Moreover, if $B_1$ is a sub-wall of $W$, then $G$ contains a model of $K_t$-minor grasped by $W$, and it can be found efficiently, given $(\bset,\pset)$.
\end{lemma}

\begin{proof}
We will show that $G$ contains a graph $H\in \hset_2$ as a minor, and then invoke Theorem~\ref{thm: family H2 to clique}. We start with the graph $G'=W'\cup\left(\bigcup_{B'_i\in \sset_1}P_i\right)$. For each column $C_j$ and row $R_i$ of $W'$, we contract all edges in $C_j\cap R_i$, obtaining a graph $H''$, which is a subdivision of a grid. We then turn $H''$ into a grid $H'$, as follows: for each maximal $2$-path $P$ of $H''$ that does not contain the corners of $H''$, we contract all but one edges of $P$. So far we have contracted edges of $W'$ to turn it into a grid, but we made no changes in the paths $P_i$, for $B'_i\in \sset_1$. Our last step is to contract all but one edges in each such path $P_i$. It is immediate to verify that the resulting graph belongs to the family $\hset_2$, and therefore, from Theorem~\ref{thm: family H2 to clique}, it contains a $K_t$-minor.

Let $J_1$ be the sub-graph of $H'$ spanned by the first $t$ columns of $H'$. If $B_1$ is a sub-wall of $W$, then $J_1$ is a contraction of a sub-wall of $W$, and so from Theorem~\ref{thm: family H2 to clique}, graph $G$ contains a $K_t$-minor grasped by $W$, and it can be found efficiently, given $(\bset,\pset)$.
\end{proof}

\begin{lemma}\label{lemma: type-1 walls easy case}
If $|\sset_2|\geq 2T$, then $G$ contains a $K_t$-minor. Moreover, if $B_1$ is a sub-wall of $W$, then $G$ contains a model of a $K_t$-minor grasped by $W$, and it can be found efficiently, given $(\bset,\pset)$.
\end{lemma}

\begin{proof}
The idea of the proof is to define a different partition $\bset'$ of $W'$ into basic walls, such that for each original basic wall $B_i$ with $B_i'\in \sset_2$, there is a basic wall  $\tilde B_i\in \bset'$ with $\nset(B_i)\subseteq \tilde B_i$. Let $\tilde B'_i$ be the $(t+1)$-core sub-wall of $\tilde B_i$, and let $Z_i$ be the graph $\tilde B'_i\cup P_i$. We show that $Z_i$ contains a wall-cross for $\tilde B'_i$, and so we reduce the problem to the case where the number of $(t+1)$-core walls of type $3$ is at least $2T$. Applying Theorem~\ref{thm: type-3 walls} then finishes the proof.

Formally, for each wall $B'_i\in \sset_2$, let $i_1$ be the index of the first column of $W'$ whose vertices are contained in $\nset(B_i)$, and let $i_2$ be the index of the last column of $W'$  whose vertices are contained in $\nset(B_i)$. We define a set $\cset_i$ of consecutive columns of $W'$ to contain all columns starting from $C_{i_1}$ and ending with $C_{i_2}$. Let $\tilde B_i$ be the sub-wall of $W'$ spanned by columns in $\cset_i$ and all rows of $W'$. Notice that for $B_i',B_j'\in \sset_2$, if $i\neq j$, then $\cset_i\cap \cset_j=\emptyset$ due to the choice of $\sset'$.  Let $\cset^*$ be the set of the first $t$ columns of $W'$. We define one additional basic wall $\tilde B^*$ to be the sub-wall of $W'$ spanned by the columns in $\cset^*$ and all rows of $W'$. Let $\bset'$ be the set of these new basic walls.

Let $W''$ be the sub-wall of $W'$ obtained by taking the union of  all the columns of $W'$ that are contained in the walls in $\bset'$, and all the rows of $W'$. Then $W''$ and the walls in $\bset'$ define a chain of $2T$ walls of height $z$. 
Let $G'$ be the union of $W''$ with all paths $P_i$ for $B_i'\in \sset_2$. 

For convenience, for each $B_i'\in \sset_2$, we re-name the path $P_i$ to be $P(\tilde B)$ where $\tilde B\in \bset'$ is the basic wall containing $B_i'$.

For each $\tilde B\in \bset'$, where $\tilde B\neq \tilde B^*$, let $\tilde B'$ be the $(t+1)$-core sub-wall of $\tilde B$, and let $G(\tilde B)$ be the union of $\tilde B$ with the path $P(\tilde B)$ (whose both endpoints must be contained in $\tilde B'$). Notice that the endpoints of $P(\tilde B)$ are separated by at least one row or at least one column, as one endpoint of $P(\tilde B)$ belongs to $B'_j\setminus \Gamma'_j$ and the other to $\nset(B_j)\setminus V(B_j')$ for the corresponding $2t$-core wall $B'_j\in \sset_2$. From Theorem~\ref{thm: cross-wall}, graph $G(\tilde B)$ contains a wall-cross for $\tilde B'$. Therefore, the chain of walls defined by $W''$ and $\bset'$ contains $2T$ $(t+1)$-core walls, that are type-3 walls in graph $G'$. From Theorem~\ref{thm: type-3 walls}, $G'$ must contain a $K_t$-minor. Moreover, if $B_1\in \bset$ is a sub-wall of $W$, then so is $\tilde B^*\in \bset'$. Therefore, from Theorem~\ref{thm: type-3 walls}, $G'$ must contain a $K_t$-minor grasped by $W$, and it can be found efficiently given $(\bset,\pset)$.
\end{proof}
\end{proof}

\label{----------------------------------------Proof of main weak thm--------------------------------}
\section{Proof of Theorem~\ref{thm: main weak}}\label{sec: proof of main week thm}
We assume w.l.o.g. that graph $G$ is connected - otherwise it is enough to prove the theorem for the connected component of $G$ containing $W$.
We set $z'=w+4t$. Let $W$ be the $R\times R$ wall in $G$. Using Theorem~\ref{thm: getting the chain}, we can build a chain $(\bset,\pset)$ of $N=8D^2(10T+6)+14T+8$ basic walls of height at least $z'$ in $G$, such that each wall $B_i\in \bset$ is a sub-wall of $W$. 
Let $W'=W'(\bset,\pset)$ be the sub-graph corresponding to the chain of walls.
Throughout the proof, we will set $\tau=2t$, and we will consider the set $\sset^*$ of all $\tau$-core walls $B'_i$ with $1<i<N$, so $|\sset^*|\geq 8D^2(10T+6)+14T+6$.

If at least one of the core walls $B'_i\in \sset^*$ is a type-4 wall, then from Corollary~\ref{cor: type 4 gives flat wall}, we can find a flat sub-wall of $W$ of size $((z'-2\tau)\times (z'-2\tau))=(w\times w)$. Therefore, we assume from now on that no core wall in $\sset^*$ is a type-4 wall.

If the number of type-3 core walls in $\sset^*$ is at least $2T$, then from Theorem~\ref{thm: type-3 walls}, we can efficiently construct a $K_t$-minor in $G$ grasped by $W$. If the number of type-1 core walls in $\sset^*$ is at least $12T+6$, then from Theorem~\ref{thm: type-1 walls}, we can construct a $K_t$-minor in $G$ grasped by $W$. Therefore, we assume from now on that the number of type-$2$ walls in $\sset^*$ is at least $8D^2(10T+6)+14T+6-2T-(12T+6)=8D^2(10T+6)$.

Let $\sset\subseteq \sset^*$ be the set of all core walls $B'_i$ of type 2, with $1<i<N$. We select a subset $\sset'\subseteq \sset$ of $4D^2(10T+6)$ core walls, such that every consecutive pair of such walls is separated by at least one wall. In other words,  if we denote $\sset'=\set{B'_{i_1},B'_{i_2},\ldots,B'_{i_{4D^2(10T+6)}}}$, where $1<i_1<i_2<\ldots<i_{4D^2(10T+6)}<N$, then for all $1\leq r<4D^2(10T+6)$, $i_{r+1}\geq i_r+2$. Subset $\sset'$ can be found using a standard greedy procedure, by ordering the walls in $\sset$ in the ascending order of their indices, and then choosing all walls whose location in this ordering is even. We then delete walls from $\sset'$ as necessary to ensure that $|\sset'|=4D^2(10T+6)$.
 We need the following claim.

\begin{claim}\label{claim: find the matching}
There is an efficient algorithm to find a set $\pset$ of $10T+6$ disjoint paths in $G$, such that:

\begin{itemize}
\item For each path $P\in \pset$, its endpoints are labeled $x_P$ and $y_P$. There is a core wall $B'_{i_P}\in \sset'$, such $x_P\in V(B'_{i_P}\setminus \Gamma'_{i_P})$, and $y_P\in V(W'\setminus \nset(B_{i_P}))$. Moreover, if $P,P'\in \pset$ are distinct, then $i_P\neq i_{P'}$.

\item The paths in $\pset$ are internally disjoint from $W'$.
\end{itemize}
\end{claim}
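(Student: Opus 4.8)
The plan is to extract one admissible path from the chosen bridge of each core wall in $\sset'$, and then to select a large pairwise-disjoint sub-collection using the degree bound. For the first part: every $B'_i\in\sset'$ is a type-2 core wall, so at least one bridge $F_i$ is incident on it and every bridge incident on $B'_i$ is a non-neighborhood bridge. If $F_i$ is an edge, then since it is a non-neighborhood bridge its endpoints are $x_i\in V(B'_i\setminus\Gamma'_i)$ and $y_i\in V(W'\setminus\nset(B_i))$, and we set $Q_i=F_i$. If $F_i$ is a connected component of $G\setminus V(W')$, it touches some $x_i\in V(B'_i\setminus\Gamma'_i)$ and, because it is not a neighborhood bridge, also touches some $y_i\in V(W'\setminus\nset(B_i))$; since $F_i$ is connected this yields a path $Q_i$ from $x_i$ to $y_i$ all of whose internal vertices lie in $F_i$, hence internally disjoint from $W'$. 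So each $B'_i\in\sset'$ carries a path $Q_i$ of the required form, and it remains only to keep $10T+6$ of them that are pairwise disjoint.

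For the selection, I would first pass to the sub-collection $\sset''\subseteq\sset'$ obtained by taking every other wall of $\sset'$ in order of index; then any two walls of $\sset''$ have indices differing by at least $3$, so their neighborhoods are pairwise disjoint, and $|\sset''|\ge 2D^2(10T+6)$. Disjointness of neighborhoods is used as follows: for $B'_i,B'_j\in\sset''$ the interior $V(B'_j\setminus\Gamma'_j)$ is disjoint from $\nset(B_i)$, so any path from $V(B'_i\setminus\Gamma'_i)$ to $V(B'_j\setminus\Gamma'_j)$ that is internally disjoint from $W'$ is itself an admissible path for $B'_i$. Consequently a single connected component $F$ of $G\setminus V(W')$ that touches the interiors of $k$ core walls of $\sset''$ is not wasteful: any family of vertex-disjoint paths inside $F$ pairing up those $k$ interior vertices immediately produces that many (up to a factor $2$) pairwise-disjoint admissible paths, one per paired core wall, and the only obstruction to finding $\Omega(k)$ such paths inside $F$ is a small vertex cut of $F$, which is controlled because any cut vertex of $F$ has degree at most $D$ in $G$. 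After disposing of the components shared by many core walls in this way, the remaining paths can be chosen with pairwise distinct ``bridges'', so that two of them can meet only at a vertex of $W'$; since the $x_i$'s are pairwise distinct, every such meeting is at a vertex $v$ entered by the two paths along distinct edges of $v$, and hence $v$ lies on at most $D$ of the paths. This makes the conflict graph on the surviving paths of maximum degree $O(D)$, so a greedy independent set of size $|\sset''|/O(D)\ge 10T+6$ gives the desired collection $\pset$; every step is constructive and runs in polynomial time.

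The main obstacle is exactly this selection step: turning ``one admissible path per core wall'' into ``many pairwise-disjoint admissible paths''. The subtlety is that an admissible path may be long and may thread through a connected component of $G\setminus V(W')$ that is a non-neighborhood bridge for a large number of core walls at once, so naively deleting used vertices and proceeding greedily can stall. The bounded-degree hypothesis enters twice — once to bound how many admissible paths can pass through any single vertex, and once to tame the ``star-like'' shared components — and this is the source of the $D^2$ factor in the bound on $N$; the remaining work is the bookkeeping needed to make the two $O(D)$ losses combine to the stated factor $4D^2$.
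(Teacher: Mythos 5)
Your overall plan is the paper's: place one terminal per wall of $\sset'$ on its non-neighborhood bridge, pair up terminals that land in the same connected component of $G\setminus V(W')$ to get internally disjoint admissible paths, and then greedily extract an endpoint-disjoint subfamily using the degree bound. You also correctly identify why the naive ``one path per wall, then greedy'' fails (many paths can thread one shared component). Two points, however, do not survive scrutiny as written.

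First, the pairing step inside a shared component $F$ with $k$ terminals is asserted, not proved. The claim that ``the only obstruction to finding $\Omega(k)$ such paths is a small vertex cut, controlled by the degree bound'' is both vague and, taken literally, false: if $F$ is a single vertex $c$ tagged by $k=D$ walls, every terminal-pairing path passes through $c$, so you get exactly one path, i.e.\ $k/D$, not $\Omega(k)$. The correct bound is $\geq (k-1)/D$, and the clean way to get it (this is what the paper does) is to take a spanning tree of $F$ together with the tagging edges, root it at an internal vertex so the terminals are leaves, and repeatedly pick the lowest node whose subtree contains at least two terminals: that subtree contains at most $D$ terminals by the degree bound, so pairing two of them and deleting the subtree loses at most $D$ terminals per extracted path. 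Your ``vertex cut'' heuristic would need something like Gallai's theorem on disjoint $T$-paths to be made rigorous; the tree argument avoids this entirely.

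Second, the constants do not close. The budget is $|\sset'|=4D^2(10T+6)$, and the two unavoidable losses are each a factor of essentially $2D$ (a $2D$ loss in the component-pairing step, and a $2D$ loss in the final greedy step, since a path conflicts with at most $2D-1$ others through its two endpoints). That uses up the factor $4D^2$ exactly, so your preliminary thinning of $\sset'$ to every other wall, which costs an additional factor of $2$, leaves you with only $(10T+6)/2$ paths. The thinning is also unnecessary: $\sset'$ is already chosen with consecutive indices differing by at least $2$, and $\nset(B_i)$ only reaches $B_{i-1}$ and $B_{i+1}$, so for distinct $B'_i,B'_j\in\sset'$ the terminal $x(B'_j)\in V(B'_j\setminus\Gamma'_j)$ already lies in $V(W')\setminus\nset(B_i)$; disjointness of neighborhoods is never needed. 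Drop the thinning and both the admissibility of paired terminals and the count $10T+6$ go through.
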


\begin{proof}
We start by constructing a set $\pset'$ of paths that have all the required properties, except that they may share endpoints (that is, instead of being completely disjoint, paths in $\pset'$ are only guaranteed to be internally disjoint). 

Consider some core wall $B'_i\in \sset'$. Recall that, since $B'_i$ is a type-2 wall, there is at least one non-neighborhood bridge incident on $B'_i$. Fix any such bridge $F$, and let $x(B'_i)$ be any vertex of $B'_i\setminus \Gamma'_i$ that $F$ touches. If $F$ consists of a single edge $e$, then we add this edge to $\pset'$ as one of the paths $P=\set{e}$. We let $x_P=x(B'_i)$, and $y_P$ the other endpoint of $e$. Since $F$ is a non-neighborhood bridge, $y_P\in V(W')\setminus \nset(B_i)$ as required. Otherwise, the bridge $F$ is a connected component of $G\setminus W'$ that touches $x(B'_i)$. Let $e_i$ be any edge connecting $x(B'_i)$ to some vertex $u_i\in V(F)$. We say that $B'_i$ \emph{tags} the component $F$ via the edge $e_i$ and the vertex $u_i$, and we call $x(B'_i)$ a \emph{terminal of $F$}.

Consider now some connected component $F$ of $G\setminus W'$. Assume first that exactly one core wall $B'_i$ tags $F$. Let $e_i$ be the edge via which $B'_i$ tags $F$, and $u_i$ the endpoint of $e_i$ that lies in $F$. Since $F$ is a non-neighborhood bridge incident on $B'_i$, $F$ must touch some vertex $y\in W'\setminus \nset(B_i)$. Let $P$ be any path connecting the vertex $u_i$ to $y$, where all inner vertices of $P$ belong to $F$, and let $P_i$ be the path obtained by concatenating $e_i$ and $P$. We then add $P_i$ to $\pset'$, setting $x_P=x(B'_i)$ and $y_P=y$.

Finally, assume that more than one core wall tags $F$. Let $F'$ be the union of $F$ and all edges via which $F$ is tagged. Let $T$ be any spanning tree of $F'$, rooted at some vertex whose degree is more than $1$. Notice that the terminals of $F$ must serve as the leaves of $T$, since the degree of each terminal is $1$ in $F'$. We now construct a collection $\pset(T)$ of node-disjoint paths connecting some terminals of $F$ to each other, such that all paths in $\pset(T)$ are contained in $F'$. We start with $\pset(T)=\emptyset$. While $T$ contains at least two terminals, we select the lowest vertex $v$ of $T$ whose subtree contains at least two terminals. Since the maximum vertex degree is bounded by $D$, the subtree of $v$ contains at most $D$ terminals if $v$ is the root of $T$, and at most $D-1$ terminals otherwise. We select any pair of terminals in the sub-tree of $v$ and connect them with a path $P$ contained in the subtree of $v$. This path is then added to $\pset(T)$, and all vertices of the subtree of $v$ (including $v$ itself) are deleted from $T$. We label the endpoints of $P$ as $x_P$ and $y_P$ arbitrarily. Notice that if $x_P=x(B'_i)$ for some $B'_i\in \sset'$, then $y_P\not\in \nset(B_i)$, since all core walls in $\sset'$ are separated by at least one wall. We continue this procedure until $T$ contains at most one terminal, and we terminate the algorithm at that point. Let $n_F$ be the number of terminals of $F$. Then in every iteration except for the last one, we add one path to $\pset(T)$ and delete at most $D-1$ terminals from the tree $T$. In the last iteration, we may add one path to $\pset(T)$ and delete up to $D$ terminals, or we may delete one terminal and not add any paths to $\pset(T)$. Therefore, $|\pset(T)|\geq (n_F-1)/D\geq n_F/(2D)$. We add the paths of $\pset(T)$ to $\pset'$. Once we process all connected components of $G\setminus W'$, set $\pset'$ must contain at least $\frac{|\sset'|}{2D}\geq 2D(10T+6)$ paths. It is easy to see from the construction that all these paths are internally node-disjoint, and each such path is internally disjoint from $W'$. For each path $P\in \pset'$, there is a distinct core wall $B'_i\in \sset'$ such that $x_P\in V(B'_i)\setminus V(\Gamma'_i)$, while $y_P\in W'\setminus \nset(B_i)$. 

Our last step is to select a subset $\pset\subseteq \pset'$ such that all paths in $\pset$ are completely disjoint.
 Start with $\pset=\emptyset$. While $\pset'\neq \emptyset$, select any path $P\in \pset'$ and add it to $\pset$. Delete from $\pset'$ the path $P$, and all paths sharing endpoints with $P$. Since the degree of every vertex is at most $D$, and the paths in $\pset'$ are internally disjoint, in every iteration we delete at most $2D-1$ paths from $\pset'$, and add one path to $\pset$. Therefore, $|\pset|\geq |\pset'|/2D\geq 10T+6$ as required, and the paths in $\pset$ are mutually disjoint.
\end{proof}

Finally, we show that $G$ contains a graph $H\in \hset_3$ as a minor, and then invoke Theorem~\ref{thm: family H3 to clique}. We start with the graph $G'=W'\cup\left(\bigcup_{P\in \pset}P\right)$. For each column $C_j$ and row $R_i$ of $W'$, we contract all edges in $C_j\cap R_i$, obtaining a graph $H''$, which is a subdivision of a grid. We then turn $H''$ into a grid $H'$, as follows: for each maximal $2$-path $P$ of $H''$ that does not contain the corners of $H''$, we contract all but one edges of $P$. So far we have contracted edges of $W'$ to turn it into a grid, but we have made no changes in the paths $P\in \pset$. Our last step is to contract, for each path $P\in \pset$, all but one edges of $P$. It is immediate to verify that the resulting graph belongs to the family $\hset_3$, and therefore, from Theorem~\ref{thm: family H3 to clique}, it contains a $K_t$-minor.
Let $J_1$ be the sub-graph of $H'$ spanned by the first $t$ columns of $H'$. Since $B_1$ is a sub-wall of $W$,  $J_1$ is a contraction of a sub-wall of $W$, and so from Theorem~\ref{thm: family H3 to clique}, graph $G$ contains a $K_t$-minor grasped by $W$, and it can be found efficiently.


\label{----------------------------------------Proof of main strong thm--------------------------------}
\section{Proof of Theorem~\ref{thm: main strong}}\label{sec: proof of main strong theorem}
We assume w.l.o.g. that graph $G$ is connected - otherwise it is enough to prove the theorem for the connected component of $G$ containing $W$.
We set $z'=w+4t$. Let $W$ be the $R\times R$ wall in $G$. Using Theorem~\ref{thm: getting the chain}, we can build a chain $(\bset,\pset)$ of $N=500T+200$ basic walls of height at least $z'$ in $G$, such that each wall $B_i\in \bset$ is a sub-wall of $W$. We set $\tau=2t$, and we will consider the set $\sset^*$ of all $\tau$-core walls $B'_i$ with $1\leq i\leq N$. Let $\Gamma'_i$ be the boundary of the $\tau$-core wall $B'_i$, and let $W'=W'(\bset,\pset)$ be the sub-graph of $G$ corresponding to $(\bset,\pset)$.

For each $B_i\in \bset$, we define a pair of vertex subsets $X_i=V(B_i')\setminus V(\Gamma_i')$ and $Y_i=V(W')\setminus \nset(B_i)$. Notice that $X_i\cap Y_i=\emptyset$ and each pair $(x,y)$ of vertices with $x\in X_i$, $y\in Y_i$ is separated by at least $t+1$ columns in $W'$. We denote $M_i=(X_i,Y_i)$, and we call it a \emph{demand pair} for $B_i$. We say that a path $P$ \emph{routes} the pair $M_i$, iff one of the endpoints of $P$ belongs to $X_i$, the other endpoint belongs to $Y_i$, and $P$ is internally disjoint from $W'$. Notice that if the endpoints of a path $P$ belong to two distinct sets $X_i,X_j$, then it is possible that $P$ routes both the pairs $M_i,M_j$. 
We say that the pair $M_i$ is \emph{routable} in a sub-graph $H$ of $G$, iff there is a path $P$ in $H$ that routes $M_i$. We start with the following theorem, whose proof is almost identical to the proof of Lemma 2.1 in~\cite{KTW}.

\begin{theorem}\label{thm: find apex vertices}
There is an efficient algorithm, that returns one of the following: either (1) a set $A$ of at most $40T+20$ vertices of $G$, and a set $\bset'\subseteq \bset\setminus\set{B_1,B_N}$ of least $396T+3t+150$ walls, such that for each $B_i\in \bset'$, $V(B_i)\cap A=\emptyset$, and $M_i$ is not routable in $G\setminus A$, or (2) a set $\pset^*$ of $10T+6$ disjoint paths in $G$, such that:

\begin{itemize}
\item For each path $P\in \pset^*$, the endpoints are labeled $x_P$ and $y_P$, and there is $1<i_P<N$, such that $x_P\in X_{i_P}$ and $y_P\in Y_{i_P}$;
\item If $P,P'\in \pset^*$ and $P\neq P'$, then $|i_P- i_{P'}|>1$; and

\item All paths in $\pset^*$ are internally disjoint from $W'$.
\end{itemize}
\end{theorem}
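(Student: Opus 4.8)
The plan is to mimic the proof of Lemma~2.1 in~\cite{KTW}: recast the statement as a single–commodity ``many vertex–disjoint routing paths or a small vertex cut'' problem, and then read off alternatives (1) and (2) from a maximum flow / minimum cut pair in an auxiliary graph.

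First I would reduce to bridges: since a path routing $M_i$ is internally disjoint from $W'$, apart from its two endpoints in $X_i=V(B_i')\setminus V(\Gamma_i')$ and $Y_i=V(W')\setminus\nset(B_i)$ it lies entirely inside one connected component of $G\setminus V(W')$ (a bridge), or it is a single edge of $G$ from $X_i$ to $Y_i$. So $M_i$ is routable in $G\setminus A$ exactly when some bridge of $G\setminus(V(W')\cup A)$ touches both $X_i$ and $Y_i$, or such an edge survives. I would then build an auxiliary graph $\hat G$ whose vertex set is $(V(G)\setminus V(W'))$, together with the relevant vertices of $W'$ lying in $\bigcup_i(X_i\cup Y_i)$, two terminals $\sigma_i,\tau_i$ for each basic wall $B_i$ with $1<i<N$, and a source $s$ and sink $s'$; edges are: the induced edges of $G$ among the first two vertex groups (with each kept $W'$–vertex split into an ``$\sigma/\tau$–side'' and a ``bridge–side'' copy so that no routing path can pass \emph{through} $W'$), $\sigma_i$ joined to the vertices of $X_i$, $\tau_i$ joined to the vertices of $Y_i$, $s$ joined to all $\sigma_i$ and $s'$ to all $\tau_i$, plus the degenerate single–edge routings handled directly. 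The point of this construction is that a family of internally vertex–disjoint $s$–$s'$ paths of $\hat G$ projects to a family of pairwise disjoint paths of $G$, each internally disjoint from $W'$ and each routing a genuine demand $M_i$, and conversely any path routing $M_i$ that is internally disjoint from $W'$ lifts to an $s$–$s'$ path through $\sigma_i,\tau_i$. To make the projection route the \emph{correct} demand (endpoints matching a common wall), I would exploit that the sets $Y_i$ differ only on the $O(1)$ walls near index $i$, splitting ``escape'' into left–escapes and right–escapes (for which the target sets are \emph{nested} in the index), so that each flow path can be post–processed at essentially no loss into one routing a single $M_i$ with well–defined index $i_P$.

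Next I would compute a maximum set of internally vertex–disjoint $s$–$s'$ paths in $\hat G$, of size $\phi$, and (Menger) a separating vertex set $C$ with $|C|=\phi$. If $\phi$ is large enough (roughly $20T+12$), then, since the $\phi$ projected routing paths use $\phi$ distinct walls (each $\sigma_i$ meets at most one path), keeping every other one in index order and discarding the at most two meeting $B_1,B_N$ leaves at least $10T+6$ of them with indices pairwise more than $1$ apart: this is $\pset^*$, alternative~(2), and the generous choice $N=500T+200$ is exactly what makes this pruning affordable. If $\phi$ is small, set $A=C\cap V(G)$, so $|A|\le\phi\le 40T+20$, and let $\bset'\sse\bset\setminus\set{B_1,B_N}$ consist of the walls $B_i$ with $\sigma_i,\tau_i\notin C$ and $V(B_i)\cap A=\emptyset$. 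At most $\phi$ walls fail the first condition and at most $|A|$ the second (each vertex of $A$ lies in at most one basic wall), so counting against $N-2$ leaves $|\bset'|\ge 396T+3t+150$, using $T=t(t-1)/2$. Finally, for $B_i\in\bset'$, a path routing $M_i$ in $G\setminus A$ would lift to a $\sigma_i$–$\tau_i$ path of $\hat G\setminus C$ (it avoids $A=C\cap V(G)$, uses no other terminal, and $\sigma_i,\tau_i\notin C$), hence to an $s$–$s'$ path of $\hat G\setminus C$, contradicting the choice of $C$; so $M_i$ is not routable in $G\setminus A$. All steps are standard polynomial–time computations.

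The main obstacle I anticipate is precisely the delicate part of the $\hat G$ construction: arranging the terminal gadgets and vertex splittings so that (a) max–flow/min–cut duality is tight, (b) every flow path projects to a path routing a genuine demand with a well–defined index (the ``escape'' condition depends on $i$, so naively the exit terminal need not match the entry terminal), and (c) the minimum cut decomposes cleanly into ``apex vertices of $G$'' and ``sacrificed walls'' with no hidden blow–up, while keeping every constant small enough that the index–pruning for alternative~(2) and the wall–accounting for alternative~(1) both fit inside $N=500T+200$. This bookkeeping is carried out for Lemma~2.1 in~\cite{KTW}, and the argument here would follow it almost verbatim.
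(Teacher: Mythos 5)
Your high-level dichotomy (many disjoint routing paths versus a small hitting set) is the right one, and your counting for both alternatives is in the right ballpark, but the proposal has a genuine gap exactly where you defer to ``bookkeeping carried out in [KTW]'': the reduction to a single-commodity max-flow/min-cut problem. The demands here are not single-commodity. A vertex-disjoint $s$--$s'$ path in your auxiliary graph travels $s\to\sigma_i\to x\to\cdots\to y\to\tau_j\to s'$ with $x\in X_i$ and $y\in Y_j$, and for $i\neq j$ nothing forces $y\notin \nset(B_i)$; such a path routes neither $M_i$ (wrong target set) nor $M_j$ (wrong source set), so a large flow value does not yield alternative (2). Conversely, if you connect each $\tau_i$ only to $Y_i$, Menger/LP duality no longer gives you a clean vertex cut certifying non-routability of every surviving $M_i$. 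Your suggested repair via nested left/right escape sets does not resolve this: even restricted to, say, left-escapes, the target set of $\sigma_i$ still depends on $i$, and a flow path leaving $X_i$ may terminate in $L_j\setminus L_i$ for $j>i$. You would need to prove an uncrossing/rerouting lemma showing mismatched flow paths can always be repaired at bounded loss, and that is precisely the missing content.

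The paper does not use a flow reduction at all. It runs an explicit augmenting procedure maintaining a set $\pset^*$ of valid routing paths and an auxiliary set $\qset$ of ``half-augmenting'' paths attached to interior vertices of paths in $\pset^*$, with the invariant that the walls $B_i$ still under consideration have $\nset(B_i)$ disjoint from all current path endpoints. The crucial step is: when a new path $Z$ routing $M_i$ first meets the system at an interior vertex $v$ of some $P'\in\pset^*$ that already carries an auxiliary path $Q$, one splits $P'$ at $v$ and at the attachment point of $Q$ and recombines with prefixes of $Z$ and $Q$ to obtain \emph{two} paths, each routing a valid demand --- and the reason the new endpoints land in the correct sets $Y_i$, $Y_{i_Q}$ is exactly the invariant that existing endpoints avoid the neighborhoods of the walls still in play. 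This endpoint-avoidance structure is what a generic max-flow formulation cannot encode, so your conditions (a) and (b) are not just delicate bookkeeping; as stated, (b) fails for the natural construction. To fix the proposal you would either have to prove the rerouting lemma above or abandon the flow framework in favor of the augmenting argument.
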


\begin{proof}
The proof uses arguments similar to those in~\cite{KTW}. Throughout the algorithm, we maintain two sets of paths, $\pset^*$ and $\qset$, such that the following invariants hold:
\begin{itemize}
\item Paths in $\pset^*$ are disjoint from each other, and are internally disjoint from $W'$.
\item For each path $P\in \pset^*$, the endpoints are labeled $x_P$ and $y_P$, and there is $1<i_P<N$, such that $x_P\in X_{i_P}$ and $y_P\in Y_{i_P}$;
\item If $P,P'\in \pset^*$ and $P\neq P'$, then $|i_P- i_{P'}|>1$.

Let $S=\set{B_i\in \bset \mid \exists P\in \pset^*\mbox{ such that $x_P$ or $y_P$ belong to $\nset(B_i)$}}$. Then:

\item For each path $Q\in\qset$, its endpoints are labeled $x_Q,a_Q$, such that there is $1<i_Q<N$ with $B_{i_Q}\not\in S$ and $x_Q\in X_{i_Q}$. There is also a path $P_Q\in \pset^*$, such that $a_Q$ is an {\bf inner} vertex of $P_Q$.
\item If $Q,Q'\in \qset$ are distinct paths, then $i_Q\neq i_{Q'}$ and $P_Q\neq P_{Q'}$.
\item All paths in $\qset$ are disjoint from each other and do not contain the vertices of $W'\cup (\bigcup_{P\in \pset^*}P)$ as inner vertices.
\end{itemize}

We start with $\pset^*,\qset=\emptyset$. Clearly, all invariants hold for this choice of $\pset^*$ and $\qset$. In every iteration, we either add one path to $\pset^*$ and set $\qset=\emptyset$, or we add one path to $\qset$ without changing $\pset^*$. An iteration is executed as follows. If $|\pset^*|\geq 10T+6$, then we stop and output $\pset^*$. From our invariants, it is clear that $\pset^*$ has all the desired properties. We assume from now on that $|\pset^*|<10T+6$, and so $|\qset|\leq |\pset^*|<10T+6$. Let $\bset^*\subset \bset\setminus\set{B_1,B_N}$ be the set of all walls $B_i$, such that an endpoint of some path in $\pset^*\cup \qset$ belongs to $\nset(B_i)$, and let $\bset'=\bset\setminus (\bset^*\cup\set{B_1,B_N})$. Then $|\bset^*|\leq 3(2|\pset^*|+|\qset|)\leq 9|\pset^*|\leq 90T+45$, and $|\bset'|\geq |\bset|-90T-45-2\geq 396T+3t+150$. 
Let $A$ be the set of all endpoints of the paths in $\pset\cup \qset$, so $|A|\leq 40T+20$.
We consider two cases.
We say that Case 1 happens if for some wall $B_i\in \bset'$, there is a path $Z$ routing $M_i$ in graph $G\setminus A$, and otherwise we say that Case 2 happens. If Case 2 happens, then we terminate the algorithm and return $\bset'$ and $A$. Clearly, this is a valid output. Otherwise, let $Z$ be any path routing some demand pair $M_i$ with $B_i\in \bset'$ in graph $G\setminus A$. Let $x_Z$ be the endpoint of $Z$ that belongs to $X_i$, and $y_Z$ its other endpoint. If $Z$ is disjoint from all paths in $\pset^*$, then we simply add $Z$ to $\pset^*$ and set $\qset=\emptyset$. It is easy to verify that all invariants continue to hold. In particular, since no vertex of $A$ belongs to $\nset(B_i)$, for each path $P\in \pset^*$ with $P\neq Z$, $|i_P-i|>1$. Therefore, we assume from now on that there is some path $P\in \pset^*$, such that $Z\cap P\neq \emptyset$.
Let $v$ be the first vertex on path $Z$ (counting from $x_Z$) that belongs to $V(\pset^*\cup \qset)$. Assume first that $v$ lies on some path $Q\in \qset$.
From the definition of $\bset'$, $x_Q\not\in \nset(B_i)$, and so $x_Q\in Y_{i}$. Moreover, since $a_Q,x_Q\in A$, $v\neq a_Q$. Let $P^*$ be the concatenation of the segment of $Z$ from $x_Z$ to $v$, and the segment of $Q$ from $x_Q$ to $v$. We add path $P^*$ to $\pset^*$, setting $\qset=\emptyset$. We set $x_{P^*}=x_Z$, $y_{P^*}=x_Q$ and $i_{P^*}=i$. It is easy to verify that $P^*$ is disjoint from all other paths in $\pset^*$, and all other invariants continue to hold.

Assume now that $v\in P'$ for some path $P'\in \pset^*$. Since $x_{P'},y_{P'}\in A$, $v$ is not an endpoint of $P'$. If there is no path $Q\in \qset$ with $P_Q=P'$, then we add $Z$ to $\qset$, setting $a_{Z}=v$ and $P_{Z}=P'$. Therefore, we assume from now on that there is some path $Q\in \qset$ with $P_Q=P'$. Assume w.l.o.g. that $v$ lies closer to $x_{P'}$ on path $P'$ than $a_Q$. We define two new paths: path $P_1$ is the concatenation of the segment of $Z$ from $x_Z$ to $v$ and the segment of $P'$ from $x_{P'}$ to $v$, and path $P_2$ is the concatenation of $Q$ and the segment of $P'$ from $a_Q$ to $y_{P'}$ (see Figure~\ref{fig: two paths}). We set $x_{P_1}=x_{Z}$, $y_{P_1}=x_{P'}$, $i_{P_1}=i$ (the index of the wall $B_i$ with $x_Z\in X_i$), $x_{P_2}=x_{Q}$, $y_{P_2}=y_{P'}$, and $i_{P_2}$ the index of the wall $B_{i_Q}$ with $x_Q\in X_{i_Q}$. We then delete $P'$ from $\pset^*$, add $P_1,P_2$ to $\pset^*$, and set $\qset=\emptyset$. It is immediate to verify that all invariants continue to hold.

\begin{figure}[h]
\scalebox{0.4}{\includegraphics{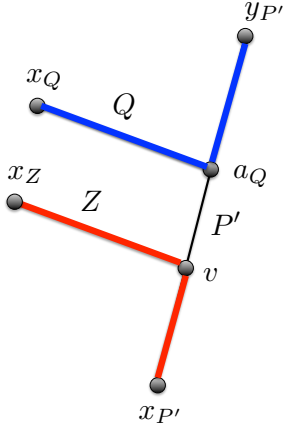}}\caption{Path $P_1$ is shown in red and $P_2$ in blue.\label{fig: two paths}}
\end{figure}

From the above discussion, the algorithm returns a valid output. Each iteration can be implemented efficiently. Since in every iteration, either $|\pset^*|$ increases, or $|\qset|$ increases while $|\pset^*|$ remains the same, and since $|\qset|\leq |\pset^*|$ always holds, the number of iterations is bounded by $(10T+6)^2$.
\end{proof}

We apply Theorem~\ref{thm: find apex vertices} to our chain of walls $(\bset,\pset)$.
 Assume first that the outcome of Theorem~\ref{thm: find apex vertices} is a set $\pset^*$ of $10T+6$ paths.
We show that $G$ contains a graph $H\in \hset_3$ as a minor, exactly as in the proof of Theorem~\ref{thm: main weak}, and then invoke Theorem~\ref{thm: family H3 to clique}. We start with the graph $G'=W'\cup\left(\bigcup_{P\in \pset^*}P\right)$. For each column $C_j$ and row $R_i$ of $W'$, we contract all edges in $C_j\cap R_i$, obtaining a graph $H''$, which is a subdivision of a grid. We then turn $H''$ into a grid $H'$, as follows: for each maximal $2$-path $P$ of $H''$ that does not contain the corners of $H''$, we contract all but one edges of $P$. So far we have contracted edges of $W'$ to turn it into a grid, but we made no changes in the paths $P\in \pset^*$. Our last step is to contract, for each path $P\in \pset^*$, all but one edges of $P$. It is easy to verify that the resulting graph belongs to the family $\hset_3$. Indeed, Theorem~\ref{thm: find apex vertices} ensures that for $P,P'\in \pset^*$ where $P\neq P'$, the vertices $x_P$ and $x_{P'}$ belong to core walls $B'_{i_P},B'_{i_{P'}}$, with $|i_P-i_{P'}|>1$. In other words, the two walls are separated by at least one wall, and $x_P,x_{P'}$ are separated by at least $2t$ columns. The definition of the pairs $M_i$ ensures that for each path $P\in \pset^*$, $y_P$ and $x_P$ are also separated by at least $2t$ columns. For every $P\in \pset^*$, $x_P\in X_{i_P}=B'_{i_P}\setminus \Gamma'_{i_P}$, so $x_P$ does not lie in the top $2t$ or the bottom $2t$ rows of the grid.
We can now apply Theorem~\ref{thm: family H3 to clique} to find a $K_t$-minor grasped by $W$. 

Assume now that the outcome of Theorem~\ref{thm: find apex vertices} is a set $A$ of at most $40T+20$ vertices of $G$, and a set $\bset'\subseteq \bset\setminus\set{B_1,B_N}$ of at least $396T+3t+150$ walls, such that for each $B_i\in \bset'$, $V(B_i)\cap A=\emptyset$, and $M_i$ is not routable in $G\setminus A$. Let $\bset''\subseteq \bset'$ be a subset of $132T+t+50$ walls, such that for each pair $B_i,B_{i'}\in \bset''$, with $i\neq i'$, $|i-i'|\geq 3$. We can find $\bset''$ by standard methods: order the walls in $\bset'$ in their natural left-to-right order, and select all walls whose index is $1$ modulo $3$ in this ordering, discarding any excess walls as necessary, so $|\bset''|=132T+t+50$. Lastly, we would like to ensure that $\bset''$ does not contain walls $B_i$ with $1\leq i\leq t-3$ and $N-t+4\leq i\leq N$, by simply removing all such walls from $\bset''$. Since there are at most $t-3$ such walls in $\bset''$, the final size of $\bset''$ is at least $132T+50$.  Notice that if $B_i,B_j\in \bset''$ with $i\neq j$, then there is no path $P$ in $G\setminus A$, such that $P$ is internally disjoint from $W'$ and it connects $X_i$ to $V(B_j)$, since $V(B_i)\subseteq Y_i$ and  $M_i$ is not routable in $G\setminus A$.

Assume that $A=\set{a_1,\ldots,a_{m}}$, where $m\leq 40T+20$. Our next step is to gradually construct, for each $1\leq j\leq m$, a collection $\qset_j$ of paths, where for each path $Q\in \bigcup_{j=1}^m\qset_j$ there is an index $i_Q$ with $B_{i_Q}\in \bset''$, such that $Q$ starts at a vertex of $X_{i_Q}$, and for $Q\neq Q'$, $i_Q\neq i_{Q'}$. All paths in set $\qset_j$ must terminate at $a_j$, and all paths in $ \bigcup_{j=1}^m\qset_j$ are internally disjoint from $W'\cup A$, and mutually disjoint from each other, except for possibly sharing their last endpoint (we view the paths as directed towards the vertices of $A$).

We start with $\qset_j=\emptyset$ for all $1\leq j\leq m$. We say that vertex $a_j\in A$ is \emph{active} iff $|\qset_j|<2t$. Let $A^*\subseteq A$ be the set of all  vertices that are inactive in the current iteration. We say that a wall $B_i\in \bset''$ is \emph{active}, iff no path of $\bigcup_{j=1}^m\qset_j$ starts at a vertex of $X_i$. An iteration is executed as follows. Assume that there is a path $Q$ in $G\setminus A^*$, connecting a vertex $v\in X_i$, for some active wall $B_i$, to some active vertex $a_j\in A\setminus A^*$, such that $Q$ contains no vertices of $W'\cup A$ as inner vertices. We claim that $Q$ is disjoint from all paths $Q'\in \bigcup_{j=1}^m\qset_j$, except possibly for sharing the last vertex $a_j$ with $Q'$. Indeed, assume for contradiction that $Q'$ and $Q$ share some vertex other than $a_j$, say vertex $u$. Let $B_{i_{Q'}}$ be the wall to which the first vertex of $Q'$ belongs. Then, since $B_i$ is still active, $i_{Q'}\neq i$ must hold, and so $V(B_{i_{Q'}})\subseteq Y_i$. Concatenating the segments of $Q$ and $Q'$ between their starting endpoints and $u$, we obtain a path connecting $X_i$ to $Y_i$, that does not contain any vertices of $A$ and is internally disjoint from $W'$, contradicting the fact that $M_i$ is not routable in $G\setminus A$. Therefore, $Q$ is disjoint from all paths $Q'\in \bigcup_{j=1}^m\qset_j$, except for possibly sharing its last vertex $a_j$ with $Q'$. We then add $Q$ to $\qset_j$, and continue to the next iteration. It is easy to see that each iteration can be computed efficiently. The algorithm terminates when we cannot make progress anymore: that is, for each active wall $B_i$, there is no path  $Q$ connecting a vertex in $X_i$ to a vertex in $A\setminus A^*$, such that $Q$ is internally disjoint from $W'\cup A$. It is easy to see that the number of iterations is bounded by $|\bset''|$, and so the algorithm can be executed efficiently. Consider the final set $A^*$ of inactive vertices, and the final set $\bset^*$ of active walls. We say that Case 1 happens if $|A^*|\geq t-4$; we say that Case 2 happens if $|A^*|\leq t-5$, but $|\bset''\setminus \bset^*|\geq 80T+40+6t^2$; otherwise we say that Case 3 happens. We analyze each of the three cases separately.

\paragraph{Case 1}
We show that if Case 1 happens, then we can find a model of $K_t$ grasped by $W$.
We define a new graph $Z$, whose vertex set is $V(Z)=\set{v_1,\ldots,v_t,u_1,\ldots,u_{t-4}}$, and the set of edges is a union of two subsets:

\[E_1=\set{(v_i,u_j)\mid 1\leq i\leq t; 1\leq j\leq t-4}\quad\quad\mbox{and}\quad\quad E_2=\set{(v_i,v_j)\mid 1\leq i<j\leq 4}.\]

 In other words, $Z$ is obtained from $K_{t,t-4}$, by adding the $6$ edges connecting all pairs of vertices in $\set{v_1,\ldots,v_4}$. It is easy to see that $Z$ contains a $K_t$-minor, by contracting, for each $5\leq i\leq t-4$, the edge $(v_i,u_{i+4})$. We will show that $G$ contains graph $Z$ as a minor, and provide an efficient algorithm for embedding $Z$ into $G$. It is then easy to find a model of $Z$, and consequently, a model of $K_t$ in $G$.

Since we assume that Case 1 happened, $|A^*|\geq t-4$ when the algorithm terminates. Let $a_1,\ldots,a_{t-4}$ be arbitrary $t-4$ vertices of $A^*$. We will embed, for each $1\leq i\leq t-4$, vertex $u_i$ of $Z$ into $\set{a_i}$: the connected sub-graph of $G$ consisting of only the vertex $a_i$.

We say that a basic wall $B_i\in \bset''$ is \emph{bad} iff some vertex of $a_1,\ldots,a_{t-4}$ belongs to $\nset(B_i)$. Since every pair of walls in $\bset''$ is separated by at least two walls, the number of bad walls in $\bset''$ is at most $t-4$.

 Consider now some vertex $a_j$, for $1\leq j\leq t-4$, and the corresponding set $\qset_j$ of $2t$ paths.  We discard from $\qset_j$ all paths that originate at a vertex that belongs to a bad wall. We also discard additional paths from $\qset_j$ as needed, until $|\qset_j|=t$ holds.
Let $\qset_j=\set{Q_1^j,\ldots,Q_t^j}$ be this final set of paths, and for each $1\leq i\leq t$, we denote by $x_i^j$ the first endpoint of path $Q_i^j$. We assign the label $i$ to $x_i^j$, denoting $\ell(x_i^j)=i$. Let $\qset=\bigcup_{j=1}^{t-4}\qset_j$, and let $\tilde A=\set{a_1,\ldots,a_{t-4}}$.

Our next step is to define a collection $\lset=\set{L_1,\ldots,L_t}$ of $t$ disjoint paths contained in $W'\setminus \tilde A$, such that, for each $1\leq i\leq t$, path $L_i$ contains all vertices whose label is $i$. We will then embed each vertex $v_i$ of $Z$ into the path $L_i$. The edge $(v_i,u_j)$ of $Z$, for $1\leq i\leq t$, $1\leq j\leq t-4$, will then be embedded into $Q_i^j$. Finally, we will define a new set $\sset$ of $6$ disjoint paths contained in $W'\setminus \tilde A$, that connect every pair of paths in $\set{L_1,\ldots,L_4}$. We will ensure that the paths in $\sset$ are internally disjoint from the paths in $\lset$. They are also guaranteed to be internally disjoint from the paths in $\qset$, since all paths in $\sset$ are contained in $W'$. The paths in $\sset$ will be used to embed the edges of $E_2$.

Recall that the walls $B_1,\ldots,B_{t-3}$ do not belong to $\bset''$. Let $B_{i^*}\in \set{B_1,\ldots,B_{t-3}}$ be any of these walls that does not contain vertices of $\tilde A$. Similarly, let $B_{i^{**}}\in\set{B_{N-t+4,\ldots,B_N}}$ be any wall that does not contain vertices of  $\tilde A$. Notice that for each wall $B_i\in \bset''$, $i^*<i<i^{**}$ must hold. Let $\tilde{\bset}\subseteq \bset''$ be the set of all walls $B_i$ that contain the vertices $x_q^j$, for $1\leq j\leq t-4$, $1\leq q\leq t$.

Let $\rset$ be any set of $t$ rows of $W'$, such that no vertex in $\tilde A$ lies in a row of $\rset$. Since there are $4t+w$ rows in $W'$, and $|\tilde A|=t-4$, such a set exists. We assume that the paths in $\rset$ are ordered in their natural top-to-bottom order.

Consider now some vertex $x^j_q$, for some $1\leq j\leq t-4$, $1\leq q\leq t$, and assume that $x^j_q$ belongs to some basic wall $B_i\in \tilde \bset$. Recall that the label of $x^j_q$ is $q$, and $\nset(B_i)$ does not contain vertices of $\tilde A$. Let $S_1$ be the set of $t$ vertices lying in the first column of $B_{i-1}$, that belong to the rows in $\rset$, such that exactly one vertex from each row in $\rset$ belongs to $S_1$. Define $S_2$ similarly for the last column of $B_{i+1}$. 
We will construct a set $\lset(B_i)$ of $t$ disjoint paths, contained in $W'[\nset(B_i)]$, connecting the vertices of $S_1$ to the vertices of $S_2$, such that the $q$th path of $\lset(B_i)$ in their natural top-to-bottom order contains $x^j_q$.

 Assume first that $x^j_q$ belongs to some row $R_{s}$ of $B_{i}$. Then $2t<s<z'-2t+1$ must hold, as $x_q^j\in X_i$.
Let $\rset'$ be the set of the top $q-1$ rows of $W'$, the bottom $t-q$ rows of $W'$, and the row $R_s$ (so row $R_s$ is the $q$th row in the set $\rset'$ in their natural top-to-bottom order). Let $T_1$ be a set of $t$ vertices lying in the last column of $B_{i-1}$, that belong to the rows of $\rset'$, such that exactly one vertex from each row in $\rset'$ belongs to $T_1$. Define $T_2$ similarly for the first column of $B_{i+1}$. We now build three sets of paths: $\lset^1$ is a set of $t$ disjoint paths contained in $B_{i-1}$, connecting the vertices of $S_1$ to the vertices of $T_1$; $\lset^2$ is the set of $t$ disjoint paths containing, for each row $R\in \rset'$, the segment of $R$ between the unique vertex of $\rset'\cap T_1$ and the unique vertex of $\rset'\cap T_2$, and $\lset^3$ is a set of $t$ disjoint paths contained in $B_{i+1}$, connecting the vertices of $T_2$ to the vertices of $S_2$ (the existence of the sets $\lset^1,\lset^3$ of paths follows from Claim~\ref{claim: t-linkedness of wall}). Let $\lset(B_i)$ be the concatenation of the paths in $\lset^1,\lset^2,\lset^3$. Then the $q$th path of $\lset(B_i)$ in their natural top-to-bottom order must contain $x^j_q$.

If $x^j_q$ does not belong to a row of $B_i$, but instead lies on a red path of $B_i$, let $P$ be that red path, and assume that its two endpoints, $u$ and $u'$ belong to rows $R_s$ and $R_{s+1}$, respectively. We define sets $S_1, T_1$ and $S_2$ exactly as before. We change the definition of the set $T_2$ slightly: instead of a vertex from row $R_s$ lying in the first column of $B_{i+1}$, we include a vertex from $R_{s+1}$,  lying in the first column of $B_{i+1}$. The definitions of the paths $\lset^1$ and $\lset^3$ remain unchanged. The set $\lset^2$ also remains unchanged, except for the path contained in the row $R_s$. We replace that path with the following path: we include a segment of $R_s$ between with the unique vertex in $T_1\cap V(R_s)$ and $u$, the path $P$, and the segment of $R_{s+1}$ between $u'$ and the unique vertex in $T_2\cap V(R_{s+1})$. We then let $\lset(B_i)$ be the concatenation of $\lset^1,\lset^2,\lset^3$. Then  the $q$th path of $\lset(B_i)$ in their natural top-to-bottom order must contain $x^j_q$.

For each wall $B_i\in \tilde{\bset}$, we have defined a collection $\lset(B_i)$ of $t$ disjoint paths, that are contained in $W'[\nset(B_i)]$, where for $1\leq j\leq t$, the $j$th path starts and terminates at the $j$th row of $\rset$. We now connect all these paths together, as follows. For each consecutive pair $B_i,B_{i'}$ of walls in $\tilde{\bset}$, for each $1\leq j\leq t$, let $u$ be the last endpoint of the $j$th path in $\lset(B_i)$, and let $u'$ be the first endpoint of the $j$th path in $\lset(B_{i'})$. Both $u$ and $u'$ must belong to the $j$th row of $\rset$. We use a segment of that row to connect $u$ to $u'$. Once we process every consecutive pair of walls in $\tilde{\bset}$, we obtain a set $\lset$ of $t$ disjoint paths, such that, for each $1\leq q\leq t$, all vertices whose label is $q$ are contained in the $q$th path of $\rset$. Moreover, all paths in $\lset$ are contained in $W'\setminus \tilde A$.

Let $L_1,L_2,L_3,L_4$ be the first four paths of $\lset$. We extend the four paths slightly to the right and to the left, as follows. For each $1\leq j\leq 4$, let $R_{i_j}$ be the row to which the endpoints of the path $L_j$ belong (that is, $R_{i_j}$ is the $j$th row of $\rset$). Let $C_1,C_2,C_3$ be the first three columns of $B_{i^*}$, and let $C_4$ be the first column of $B_{i^{**}}$. (Recall that $B_{i^*}\in \set{B_1,\ldots,B_{t-3}}$ and $B_{i^{**}}\in\set{B_{N-t+4,\ldots,B_N}}$, and they do not contain vertices of  $\tilde A$.) We extend $L_1$ to the left along the row $R_{i_1}$ until it contains a vertex of $C_1$ (see Figure~\ref{fig: 4-clique}). We extend $L_2$ to the left along the row $R_{i_2}$ until it contains a vertex of $C_3$, and we extend it to the right along $R_{i_2}$ until it contains a vertex of $C_4$. We extend $L_3$ to the left along the row $R_{i_3}$ until it contains a vertex of $C_2$. Finally, we extend $L_4$ to the left along the row $R_{i_4}$ until it contains a vertex of $C_1$, and we extend $L_4$ to the right along the row $R_{i_4}$ until it contains a vertex of $C_4$. This finishes the definition of the paths $L_1,\ldots,L_t$. We embed, for each $1\leq i\leq t$, the vertex $v_i$ of graph $Z$ into the path $L_i$. Each edge $(u_j,v_i)$ of $E_1$, for $1\leq j\leq t-4$, $1\leq i\leq t$, is embedded into the path $Q_i^j$. Since $L_i$ is guaranteed to contain the endpoint $x_i^j$ of $Q_i^j$ (whose label is $i$), this is a valid embedding. Finally, we need to show how to embed the $6$ edges of $E_2$. Observe that each of the four paths $L_1,L_2,L_3,L_4$ intersect the column $C_3$, partitioning it into three segments, each connecting a consecutive pair of these paths. We use these three segments of $C_3$ to embed the edges $(v_1,v_2)$, $(v_2,v_3)$ and $(v_3,v_4)$. Column $C_2$ is only intersected by paths $L_1,L_3$ and $L_4$. We use the segment of $C_2$ between rows $R_{i_1}$ and $R_{i_3}$ to embed the edge $(v_1,v_3)$. Column $C_1$ is only intersected by $L_1$ and $L_4$. We use a segment of $C_1$ between rows $R_{i_1}$ and $R_{i_4}$ to embed the edge $(v_1,v_4)$. Finally, column $C_4$ is only intersected by $L_2$ and $L_4$, and we use it similarly to embed the edge $(v_2,v_4)$. 

\begin{figure}[h]
\scalebox{0.4}{\includegraphics{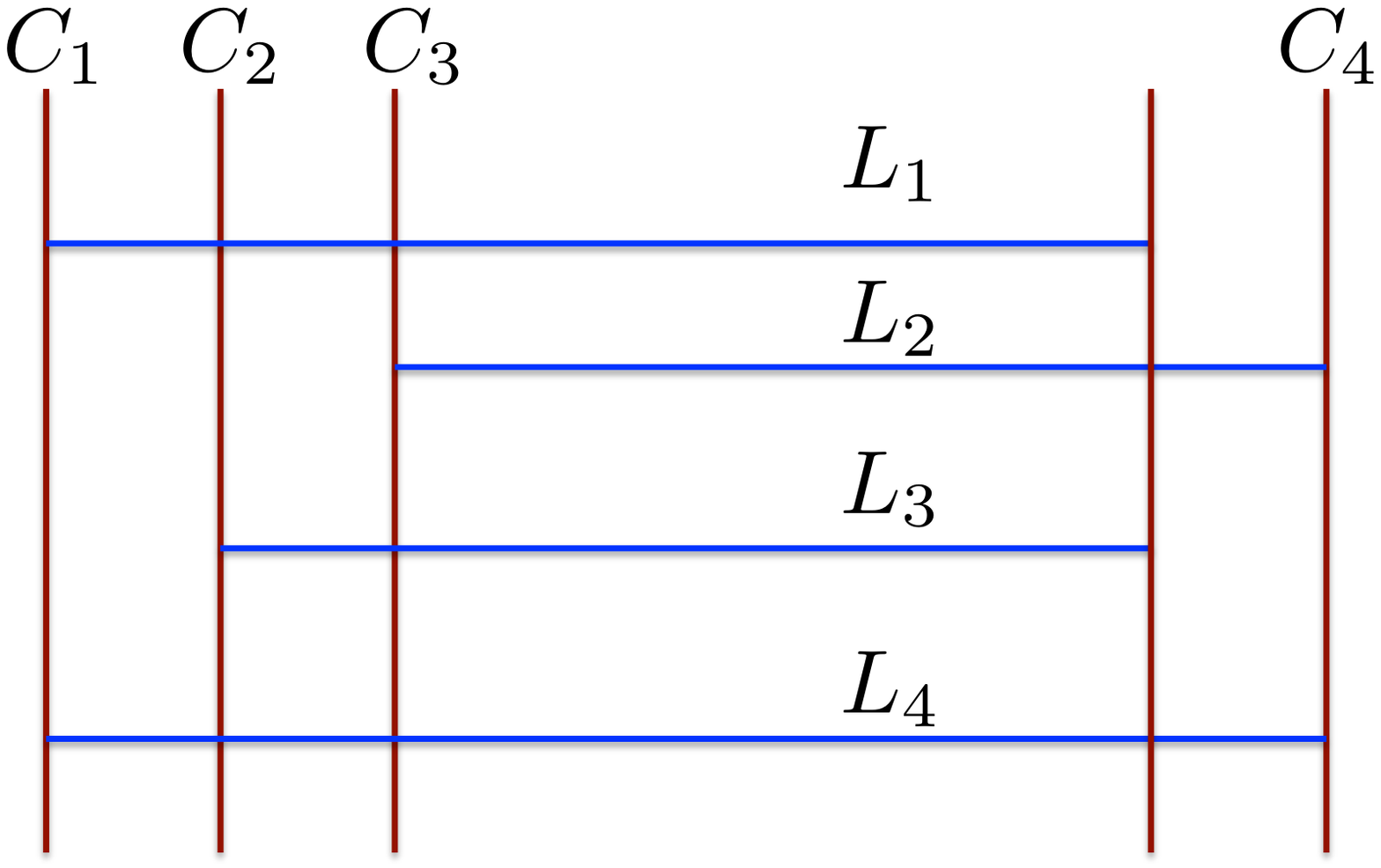}}\caption{Constructing the paths in $\sset$\label{fig: 4-clique}}
\end{figure}

This finishes the definition of the embedding of the graph $Z$ into $G$. As observed before, we can now find a model of a $K_t$-minor in $G$. It is easy to see that the model of the $K_t$ minor is grasped by $W'$, since there is at least one basic wall $B_i$, such that every path in $\lset$ intersects every column of $B_i$, and all basic walls $B_i$ are sub-walls of $W$.

\paragraph{Case 2}
If Case 2 happens, then $|\bset''\setminus \bset^*|\geq 80T+40+6t^2$. 
For each vertex $a_j\in A$, for each path $Q\in \qset_j$, let $x(Q)$ denote the endpoint of $Q$ that is different from $a_j$. 
We say that a wall $B_i\in \bset''$ is \emph{bad} iff $\nset(B_i)$ contains a vertex of $A$. Since for each pair $B_i,B_{i'}\in \bset''$, with $i\neq i'$, $|i-i'|\geq 3$ holds, the number of bad walls in $\bset''\setminus \bset^*$ is at most $|A|\leq 40T+20$.

For each vertex $a_j\in A$, we delete from $\qset_j$ all paths $Q$ where $x(Q)$ belongs to a bad wall. Then after this procedure, $\sum_{a_j\in A}|\qset_j|\geq |\bset''\setminus\bset^*|-(40T+20)\geq 40T+20+6t^2$, and $\sum_{a_j\in A}(|\qset_j|-1)\geq 6t^2$. 

We discard some additional vertices from $A$, to obtain the final set $\tilde A$. First, for each vertex $a_j\in A$, if $|\qset_j|\leq 1$, then we delete $a_j$ from $A$. It is easy to see that 
$\sum_{a_j\in A}(|\qset_j|-1)$ does not decrease. Finally, let $A'\subset A$ be the subset of vertices that lie in the first $2t$ rows of $W'$, and let $A''\subset A$ be the subset of vertices lying in the last $2t$ rows of $W'$. We assume w.l.o.g. that $\sum_{a_j\in A''}(|\qset_j|-1)\leq \sum_{a_j\in A'}(|\qset_j|-1)$. We delete the vertices of $A''$ from $A$, obtaining the final set $\tilde A$. From our construction, $\sum_{a_j\in \tilde A}(|\qset_j|-1)\geq 3t^2$. Let $\rset$ be the set of the bottom $2t$ rows of $W'$. Then no vertex of $\tilde A$ belongs to a row of $\rset$.

Our next step is to find a collection $A_1,\ldots,A_t$ of $t$ disjoint subsets of $\tilde A$, such that for each $1\leq r\leq t$, $\sum_{a_j\in A_r}(|\qset_j|-1)\geq t$. We will also delete some paths from sets $\qset_j$ for $a_j\in A_r$, to ensure that this summation is exactly $t$ for each set $A_r$. 

We find the partition of $\tilde A$ via a simple greedy procedure. Assume w.l.o.g. that $\tilde A=\set{a_1,\ldots,a_{m'}}$. Let $j$ be the smallest index, such that $\sum_{i=1}^j(|\qset_i|-1)\geq t$. Then, since for all $a_i\in \tilde A$, $2\leq |\qset_i|\leq 2t$, $\sum_{i=1}^j(|\qset_i|-1)\leq 3t$. We delete paths from $\qset_j$, until $\sum_{i=1}^j(|\qset_i|-1)= t$ holds. From our choice of $j$, $|\qset_j|\geq 2$ continues to hold. We set $A_1=\set{a_1,\ldots,a_j}$, delete the vertices of $A_1$ from $\tilde A$, and continue to the next iteration. Since $\sum_{a_i\in \tilde A}(|\qset_i|-1)\geq 3t^2$, we can continue this process for $t$ iterations, and find the desired collection $A_1,\ldots,A_t$ of subsets of $\tilde A$.

Consider some set $A_r$, for $1\leq r\leq t$. Recall that for each vertex $a_i\in A_r$, $|\qset_i|\geq 2$. We select one arbitrary path $Q_i^r\in \qset_i$, and we assign to its endpoint $x(Q_i^r)$ the label $(t+r)$. Let $\sset_1^r=\set{Q_i^r\mid a_i\in A_r}$ be the set of all paths whose endpoint is assigned the label $(t+r)$, and let $\sset_2^r=\left(\bigcup_{a_i\in A_r}\qset_i\right )\setminus \sset_1^r$ be the set of the remaining paths. Then $|\sset_2^r|=t$. For each $Q\in \sset_2^r$, we assign to $x(Q)$ a label in $\set{1,\ldots,t}$, such that each label is assigned to exactly one endpoint $x(Q)$ of a path in $\sset_2^r$.

In the rest of the proof, we will embed a $K_{t,t}$-minor into $G$, as follows. We denote the two sets of vertices in the bi-partition of $K_{t,t}$ by $\set{v_1,\ldots,v_t}$ and $\set{u_1,\ldots,u_t}$. We will build $2t$ paths $P_1,\ldots,P_{2t}$ in $W'\setminus \tilde A$, such that for each $1\leq i\leq 2t$, path $P_i$ contains all vertices with label $i$. This is done very similarly to the algorithm used in Case 1. 

For each $1\leq r\leq t$, let $C_r$ be the union of the path $P_{t+r}$ and all paths in $\sset_1^r$. Note that $C_r$ is a connected graph, as for each $Q\in \sset_1^r$, the label of $x(Q)$ is $t+r$, and so $P_{t+r}$ must contain $x(Q)$. For each $1\leq r\leq t$, we embed the vertex $v_r$ of $K_{t,t}$ into $C_r$. For each $1\leq i\leq t$, we embed the vertex $u_i$ of $K_{t,t}$ into the path $P_i$. The edge $(v_r,u_j)$ is then embedded into the unique path $Q\in\sset_2^r$ whose endpoint $x(Q)$ has label $j$. Then one endpoint of $Q$ belongs to $A_r$, and hence to $C_r$, while the other must lie on $P_j$. This finishes the description of the embedding of $K_{t,t}$ into $G$. It is now easy to obtain an embedding of $K_t$ into $G$: for each $1\leq i\leq t$, we take the union of the embeddings of $v_i,u_i$ and edge $(v_i,u_i)$ to obtain an embedding of the $i$th vertex of $K_t$. The edge connecting the $i$th and the $j$th vertices of $K_t$ is then embedded into the same path as the edge $(v_i,u_j)$ of $K_{t,t}$. We will ensure that each path $P_i$ that we construct intersects at least $t$ columns of $W'$. It then follows that the corresponding model of $K_t$ is grasped by $W$.

It now only remains to define the set $\set{P_1,\ldots,P_{2t}}$ of disjoint paths in $W'\setminus \tilde A$, such that for each $1\leq i\leq 2t$, all vertices whose label is $i$ belong to $P_i$, and each path $P_i$ intersects at least $t$ columns of $W'$. Recall that we have defined a set $\rset$ of $2t$ rows of $W'$ that do not contain the vertices of $\tilde A$. The remainder of the proof closely follows the analysis for Case 1.
For $1\leq j\leq 2t$, let $U_j$ be the set of vertices whose label is $j$, and let $U=\bigcup_{j=1}^{2t}U_j$. Let $\tilde \bset\subseteq \bset''$ be the set of walls containing the vertices of $U$.
Recall that for each such wall $B_i\in \tilde {\bset}$, exactly one vertex of $U$ belongs to $B_i$, and it must lie in $X_i$. Moreover, $\nset(B_i)\cap \tilde A=\emptyset$, and if $B_i,B_{i'}\in \tilde{\bset}$ with $i\neq i'$, then $|i-i'|\geq 3$ must hold. 

Consider now some vertex $x\in U$, and assume that $x$ belongs to some basic wall $B_i\in \tilde \bset$, and that the label of $x$ is $q$. Let $S_1$ be the set of $2t$ vertices lying in the first column of $B_{i-1}$, that belong to the rows in $\rset$, such that exactly one vertex from each row in $\rset$ belongs to $S_1$. Define $S_2$ similarly for the last column of $B_{i+1}$. 
We construct a set $\lset(B_i)$ of $2t$ disjoint paths, contained in $W'[\nset(B_i)]$, connecting the vertices of $S_1$ to the vertices of $S_2$, such that the $q$th path of $\lset(B_i)$ in their natural top-to-bottom order contains $x$. This is done exactly like in the analysis of Case 1.

For each wall $B_i\in \tilde{\bset}$, we have therefore defined a collection $\lset(B_i)$ of $2t$ disjoint paths, that are contained in $W'[\nset(B_i)]$, where for $1\leq j\leq 2t$, the $j$th path starts and terminates at the $j$th row of $\rset$. We now connect all these paths together, as before: for each consecutive pair $B_i,B_{i'}$ of walls in $\tilde{\bset}$, for each $1\leq j\leq 2t$, let $u$ be the last endpoint of the $j$th path in $\lset(B_i)$, and let $u'$ be the first endpoint of the $j$th path in $\lset(B_{i'})$. Both $u$ and $u'$ must belong to the $j$th row of $\rset$. We use a segment of that row to connect $u$ to $u'$. Once we process every consecutive pair of walls in $\tilde{\bset}$, we obtain a set $\lset$ of $2t$ disjoint paths, such that, for each $1\leq q\leq 2t$, all vertices whose label is $q$ are contained in the $q$th path of $\lset$. Moreover, all paths in $\lset$ are contained in $W'\setminus \tilde A$, and each path intersects at least $t$ columns of $W'$.

\paragraph{Case 3}
If Case $3$ happens, then $|A^*|\leq t-5$, and $|\bset^*|\geq 132T+50-(80T+40+6t^2)\geq 14T+6$. Recall that we have the following properties for the walls in $\bset^*$:

\begin{properties}{P}
\item For each $B_i,B_{i'}\in \bset^*$ with $i\neq i'$, $|i-i'|\geq 3$; \label{prop: separated}

\item For each $B_i\in \bset^*$, $A\cap V(B_i)=\emptyset$;

\item For each $B_i\in \bset^*$, there is no path $Q$ connecting a vertex of $X_i$ to a vertex of $A\setminus A^*$ in $G$, such that $Q$ is internally disjoint from $W'\cup A$; \label{prop: no path Q}
\item For each $B_i\in \bset^*$, there is no path $P$ routing $M_i$ in $G\setminus A^*$.\label{prop: not routable}
\end{properties}

In order to see that the last property holds, assume for contradiction that there is a path $P$ routing $M_i$ in $G\setminus A^*$. Since from Theorem~\ref{thm: find apex vertices} $M_i$ is not routable in $G\setminus A$, path $P$ must contain a vertex of $A\setminus A^*$, contradicting Property~(\ref{prop: no path Q}).

We say that $B_i\in \bset^*$ is a type-$\tone$ wall iff there is a path $P_i$ connecting $X_i$ to $V(W'\setminus B_i')$ in $G\setminus A^*$, such that $P_i$ is internally disjoint from $W'\cup A^*$. Let $B_i\in \bset^*$ be a wall of type-$\tone$, and let $P_i$ be the corresponding path. We denote by $x_i$ the endpoint of $P_i$ lying in $X_i=V(B'_i\setminus\Gamma_i')$, and by $y_i$ its other endpoint. From Property~(\ref{prop: not routable}), $y_i\not\in Y_i$, and so $y_i\in \nset_i\setminus V(B_i')$ must hold. Observe that if $B_i,B_{i'}\in \bset^*$ are type-$\tone$ walls, then $P_i$ and $P_{i'}$ must be disjoint - otherwise, by combining $P_i$ and $P_{i'}$, we can obtain a path $P$ connecting $x_i\in X_i$ to $x_{i'}\in Y_i$ in graph $G\setminus A^*$, contradicting Property~(\ref{prop: not routable}).
Let $\bset^*_1$ be the set of all type-$\tone$ walls in $\bset^*$. 

We say that Case 3a happens if $|\bset^*_1|\geq 12T+6$. Assume that Case 3a happens, and consider the sub-graph $G'$ of $G$, obtained by taking the union of $W'$ and the paths $P_i$ for all $B_i\in \bset^*_1$. Then $W'$ is a chain of $N$ walls of height at least $z'$ in $G'$, and for every wall $B_i\in \bset^*_1$, the corresponding $\tau$-core wall $B_i'$ is a type-1 wall in $G'$, with $P_i$ being the neighborhood bridge for $B_i$. Applying Theorem~\ref{thm: type-1 walls} to $G'$ and $W'$, we obtain an efficient algorithm to find a model of a $K_t$-minor in $G'$ (and hence in $G$), grasped by $W$.

From now on we assume that Case 3a does not happen. Let $\bset^*_2=\bset^*\setminus\bset^*_1$ and consider any wall $B_i\in \bset^*_2$. 
Then $(G\setminus A^*)\setminus \Gamma_i'$ consists of at least two connected components, with one of them containing $B'_i\setminus \Gamma'_i$. Therefore, there is a separation $(X,Y)$ of $G\setminus A^*$, with $B'_i\subseteq X$, $X\cap Y\subseteq \Gamma_i'$, and for each $B_j\in \bset^*_2$ with $j\neq i$, $B_j\subseteq Y$. Recall that the corners $a_i',b'_i,c_i',d_i'$ of the wall $B'_i$ are fixed. We assume that they appear on $\Gamma_i'$ in this order clockwise. If graph $X$ contains a wall-cross for $B'_i$ (that is, a pair of disjoint paths connecting $a_i'$ to $c_i'$ and $b_i'$ to $d_i'$), then we say that wall $B_i$ is of type $\ttwo$. Otherwise, it is of type $\tthree$. 

If at least one wall $B_i\in \bset^*_2$ is a type-$\tthree$ wall, then from Lemma~\ref{lemma: type 4 gives flat wall}, we can efficiently find a flat wall $B'$ of size $((z'-2\tau)\times(z'-2\tau))=((z'-4t)\times(z'-4t))=(w\times w)$ in $G\setminus A^*$, such that $B'$ is a sub-wall of $B_i$ and hence of $W$.

From now on we assume that all walls in $\bset^*_2$ are of type $\ttwo$. Recall that $|\bset^*_2|\geq 14T+6-(12T+6)\geq 2T$. For each wall $B_i\in \bset^*_2$, let $Q_i^1,Q_i^2$ be the pair of disjoint paths realizing the wall-cross for $B_i'$, and let $\tilde{Q}=\bigcup_{B_i\in \bset^*_2}\set{Q_i^1,Q_i^2}$. As in Case 3a, all paths in $\tilde Q$ must be completely disjoint, since otherwise we can combine two such paths to obtain a routing of some demand $M_i$ for $B_i\in \bset^*_2$ in graph $G\setminus A^*$, contradicting Property~\ref{prop: not routable}. Consider the sub-graph $G'$ of $G$, obtained by taking the union of $W'$ and the paths in $\tilde Q$. Then $W'$ is a chain of $N$ walls of height at least $z'$ in $G'$, and for every wall $B_i\in \bset^*_2$, the corresponding $\tau$-core wall $B_i'$ is a type-3 wall in $G'$, with $Q_i^1,Q_i^2$ being the corresponding wall-cross. Applying Theorem~\ref{thm: type-3 walls} to $G'$ and $W'$, we can find a model of a $K_t$ minor in $G'$ (and hence in $G$), grasped by $W$.

\label{------------------------------negative result----------------------------}
\section{A Lower Bound}\label{sec: lower bound}
In this section we prove Theorem~\ref{thm: lower bound}.
We can assume that $w,t\geq 4000$: otherwise, we can use a graph $G$ consisting of a single vertex.
We round $w$ down to the closest integral multiple of $4$, and
we set $w'=w/4-8$ and $t'=\floor{t/30}$. In order to construct the graph $G$, we start with a grid whose height and width is $(w't'-1)$. For each $0\leq i< t'$, $0\leq j<t'$, vertex $v(iw',jw')$ is called a \emph{special vertex}. The unique cell of the grid for which $v(iw',jw')$ is the left top corner is denoted by $Q(iw',jw')$, and we call it a \emph{black cell}. For each black cell $Q(iw',jw')$, we add the two diagonals $(v(iw',jw'),v(iw'+1,jw'+1))$ and $(v(iw'+1,jw'),v(iw',jw'+1))$ to the graph. This completes the definition of the graph $G$. Clearly, $G$ contains a wall of size $\Omega(w't')=\Omega(wt)$ as a minor.
 We next prove that $G$ does not contain a $K_t$-minor.

\begin{theorem}\label{thm: no clique minor}
Graph $G$ does not contain a $K_t$-minor.
\end{theorem}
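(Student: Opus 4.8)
The plan is to bound the number of "large" branch sets in any putative $K_t$-model of $G$ and show it is less than $t$, yielding a contradiction. The key structural fact about $G$ is that it is a grid of side roughly $w't'$ together with a set of diagonal pairs, one pair per black cell, where the black cells are spaced $w'$ apart in both coordinates; so there are $t'^2$ black cells, and altogether only $2t'^2$ diagonal edges. Delete all diagonal edges: the remaining graph is planar (it is a plain grid), so it contains no $K_5$-minor. Hence in any $K_t$-model $f$ of $G$, the edges of $K_t$ cannot all be "realized" inside the grid; at least all but at most some bounded number of the $K_t$-edges must use a diagonal edge. More precisely, consider the vertices of $K_t$ as the branch sets $f(v_1),\ldots,f(v_t)$. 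For a pair $\{i,j\}$ whose connecting edge of $G$ is a diagonal edge, both $f(v_i)$ and $f(v_j)$ must contain an endpoint of that diagonal, i.e. a vertex of the black cell $Q$ that owns it; since a black cell has only $4$ vertices, at most $4$ branch sets can meet any single black cell.

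**Counting.** Since $G\setminus(\text{diagonals})$ is planar and has no $K_5$-minor, among the $t$ branch sets there must be a set $S$ of at least $t-4$ of them such that the subgraph of $K_t$ induced on $S$ is \emph{not} realized purely within the grid — equivalently, every one of the $\binom{|S|}{2}$ edges among branch sets in $S$ is carried by a diagonal edge. (If five branch sets had all ten mutual edges realized in the grid, that would be a $K_5$-minor in the planar graph $G\setminus(\text{diagonals})$, impossible.) Fix such an $S$ with $|S| = t-4 =: s$. Each pair in $S$ is "assigned" a black cell (the one owning the diagonal edge realizing that pair), and by the $4$-vertices-per-cell observation, each black cell is assigned to at most $\binom{4}{2}=6$ pairs. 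Therefore the number of black cells needed is at least $\binom{s}{2}/6$. On the other hand the number of black cells is exactly $t'^2 = \lfloor t/30\rfloor^2$. So we would need $\binom{t-4}{2}/6 \le \lfloor t/30\rfloor^2$, i.e. roughly $t^2/12 \le t^2/900$, which fails for all $t\ge$ some small constant — in particular for $t\ge 4000$. This contradiction shows $G$ has no $K_t$-minor.

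**The step I expect to be the real obstacle.** The subtle point — and the one I would write most carefully — is the claim that at least $t-4$ of the branch sets have \emph{all} their mutual $K_t$-edges carried by diagonals. The clean way to get this: let $H$ be the auxiliary graph on vertex set $[t]$ in which $\{i,j\}$ is an edge iff the $K_t$-edge between $f(v_i)$ and $f(v_j)$ can be realized by a grid (non-diagonal) edge of $G$. Since realizing a subset of $K_t$-edges inside the grid exhibits a minor of the grid, and the grid is planar hence $K_5$-minor-free, $H$ contains no $K_5$; but a $K_5$-free graph on $t$ vertices can be made independent by deleting few vertices only if it is sparse — that is false in general (e.g. complete bipartite graphs are $K_5$-free but have huge independence-deletion number). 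So I cannot simply take an independent set in $H$. The correct fix is to \emph{not} require $S$ to be $H$-independent, but instead to keep track of pairs carried by diagonals directly: every pair $\{i,j\}$ either lies in $H$ or is assigned a black cell. The number of pairs \emph{not} in $H$ is at least $\binom{t}{2} - |E(H)|$; and since $H$ is $K_5$-minor-free it is planar-like sparse, $|E(H)| \le 3t-6$ (a $K_5$-minor-free graph has at most $3n-6$ edges only if also $K_{3,3}$-minor-free — but $K_5$-minor-free alone gives $|E(H)|\le 3n-6$ by Mader/Wagner). Hence the number of diagonal-carried pairs is at least $\binom{t}{2}-(3t-6)$, and each black cell absorbs at most $6$ of them, so $\binom{t}{2}-(3t-6) \le 6t'^2 = 6\lfloor t/30\rfloor^2 \le 6t^2/900 = t^2/150$, which is false for $t\ge 4000$. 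This is the version I would carry out; the main work is justifying $|E(H)|\le 3t-6$ and the $6$-pairs-per-cell bound, both of which are short once the setup is in place.
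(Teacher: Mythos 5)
Your route is genuinely different from the paper's (which draws $G$ with $(t'-1)^2$ crossings and converts a $K_t$-model into a drawing of $K_t$ with too few crossings, contradicting the lower bound $\optcro{K_t}>(t-1)^4/272$), and the pair-counting strategy is sound in outline, but there is a genuine gap in the step you flagged as the crux. You argue that the auxiliary graph $H$ of grid-realized pairs is a minor of the plain grid (hence planar/$K_5$-minor-free, hence $|E(H)|\le 3t-6$) because "realizing a subset of $K_t$-edges inside the grid exhibits a minor of the grid." This presupposes that each branch set $f(v_i)$ is still connected after the diagonal edges are deleted. Nothing guarantees that: a branch set is only required to induce a connected subgraph of $G$, and it may consist of two pieces of the grid joined solely through a diagonal of some black cell (containing two opposite corners of the cell but neither of the other two, which may be occupied by other branch sets). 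In that case $G\setminus(\text{diagonals})$ does not contain $H$ as a minor, and the bound $|E(H)|\le 3t-6$ is unjustified. Note that the cheap fixes fail here: one cannot simply discard the branch sets touching black cells, because there are $t'^2=\Theta(t^2/900)$ black cells, far more than $t$ branch sets, so almost every branch set could be affected.

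The gap is repairable, and you should repair it at the level of components rather than branch sets. Let $d_i$ be the number of diagonals with both endpoints in $f(v_i)$; then $f(v_i)$ splits into at most $d_i+1$ connected pieces in the grid, and $\sum_i d_i\le 2t'^2$ since $G$ has only $2t'^2$ diagonals in total. Taking all these pieces as branch sets of a new graph $\tilde H$ (with an edge whenever a grid edge joins two pieces), $\tilde H$ \emph{is} a minor of the planar grid, so $|E(\tilde H)|\le 3|V(\tilde H)|-6\le 3(t+2t'^2)-6$, and $|E(H)|\le|E(\tilde H)|$. Your counting then becomes $\binom{t}{2}\le |E(H)|+(\text{diagonal-carried pairs})\le 3t-6+6t'^2+2t'^2$ (each diagonal has two endpoints in disjoint branch sets, so it carries at most one pair -- your bound of $6$ per cell is a harmless overestimate of the true bound $2$), i.e.\ $\binom{t}{2}\le 3t+8\lfloor t/30\rfloor^2\le 3t+t^2/112$, which is still false for $t\ge 4000$. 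With that repair the argument goes through; as written, the unpatched claim that $H$ embeds as a minor of the grid is the step that would fail.
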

\begin{proof}
The proof uses the notions of graph drawing and graph crossing number. A drawing of a graph $H$ in the plane is a mapping, in which every vertex of $H$ is mapped into a point in the plane, and every edge into a continuous curve connecting the images of its endpoints, such that no three curves meet at the same point, and no curve contains an image of any vertex other than its endpoints.
A \emph{crossing} in such a drawing is a point where the images of two edges intersect, and the \emph{crossing number} of a graph $H$, denoted by $\optcro{H}$, is the smallest number of crossings achievable by any drawing of $H$ in the plane.
We use the following well-known theorem~\cite{ajtai82,leighton_book}.

\begin{theorem}\label{thm: crossing number of a clique}
For any graph $G=(V,E)$ with $|E|>7.5|V|$, $\optcro{G}\geq\frac{|E|^3}{33.75|V|^2}$. In particular, for all $n>16$, $\optcro{K_n}>(n-1)^4/272$.
\end{theorem}


Assume for contradiction that $G$ contains a $K_t$-minor, and consider its model $f$. The main idea is to use the natural drawing $\psi$ of $G$, that contains $(t'-1)^2$ crossings, together with the model $f$, to obtain a drawing $\psi'$ of $K_t$ with fewer than $(t-1)^4/272$ crossings, leading to a contradiction.
For convenience, instead of defining a drawing of $K_t$, we define a drawing of another graph $H$, obtained from $K_t$ by subdividing each edge $e\in E(K_t)$ by two new vertices, $u(e)$ and $u'(e)$. Clearly, a drawing of $H$ with $z$ crossings immediately gives a drawing of $K_t$ with $z$ crossings. We let $V_1=V(K_t)$, and $V_2=V(H)\setminus V_1$. For each vertex $v\in V_1$, let $\delta(v)$ be the set of edges of $H$ incident on $e$. Let $E'$ be the set of edges of $H$ whose both endpoints belong to $V_2$.

We first define the drawings of the vertices of $V_1$.
For each vertex $v\in V(K_t)$, we select an arbitrary vertex $x_v\in f(v)$. The drawing of $v$ in $\psi'$ is at the same point as the drawing of $x_v$ in $\psi$.

We now turn to define the drawings of the vertices of $V_2$ and the edges of $H$. Along the way, for each vertex $v\in V_1$, we will define a set $\pset(v)$ of paths contained in $f(v)$. The paths in $\pset(v)$ will be used to define the drawings of the edges in $\delta(v)$. 
For each edge $e\in E(H)$, we will associate a path $Q(e)\subseteq G$ with $e$, and we will draw the edge $e$ \emph{along} the drawing of the path $Q(e)$ in $\psi$. In other words, let $\gamma$ be the drawing of the path $Q(e)$ in $G$. The drawing $\psi'(e)$ of $e$ will start at the first endpoint of $\gamma$, and then will continue very close to $\gamma$, in parallel to it and without crossing it, so that $\psi'(e)$ does not contain the images of any vertices of $G$, except for the endpoints of $\gamma$. It will then terminate at the drawing of the second endpoint of $\gamma$. Notice that for now we allow $\psi'(e)$ to self-intersect arbitrarily. Consider now two edges $e,e'\in E(H)$, and their corresponding paths $Q(e)$, $Q(e')$. 
We distinguish between three types of crossings between $\psi'(e)$ and $\psi'(e')$. Type-1 crossings arise whenever an edge $e^*\in Q(e)$ crosses an edge $e^{**}\in Q(e')$ in $\psi$. The number of type-1 crossings between the images of  $e$ and $e'$ in $\psi'$ is bounded by the number of crossings between the edges of $Q(e)$ and the edges of $Q(e')$ in $\psi$. We will ensure that for each edge $e^*\in E(G)$, at most $t-1$ paths in $\set{Q(e)\mid e\in E(H)}$ contain $e^*$. Therefore, the number of type-1 crossings can be bounded by $(t-1)^2$ times the number of crossings in $\psi$, giving the total bound of $(t-1)^2\cdot (t'-1)^2$. If two paths $Q(e),Q(e')$ share some edge $e^*\in E(G)$, then the portions of the images of $e$ and $e'$ that are drawn along $e^*$ may cross arbitrarily. Similarly, if $Q(e)$ and $Q(e')$ share some vertex $v\in V(G)$ where $v$ is an inner vertex on both paths, then the images of $e$ and $e'$ may cross arbitrarily next to  $\psi(v)$. We call all such crossings type-2 crossings. We also include among type-2 crossings the self-crossings of an image of any edge $e\in E(H)$, that are not type-1 crossings. (We will eventually eliminate all type-2 crossings.) Finally, if an endpoint $v$ of some path $Q(e)$ also belongs to some path $Q(e')$, where $e,e'\in E(H)$ and $e\neq e'$, then we allow the images of $e$ and $e'$ to cross once due to this containment. We call all such crossings \emph{type-3 crossings}. We will ensure that each vertex $v\in V(G)$ may serve as an inner vertex in at most $t-1$ paths $\set{Q(e)\mid e\in E(H)}$, and, since the number of vertices of $G$ serving as endpoints of paths in $\set{Q(e)\mid e\in E(H)}$ is at most $2|E(H)|$, the number of all type-3 crossings will be bounded by $2|E(H)|\cdot (t-1)\leq 6t(t-1)^2$.


We now proceed to define the drawings of the vertices of $V_2$ and the edges of $H$, along with the sets $\pset(v)$ of paths for all $v\in V_1$.
We start with $\pset(v)=\emptyset$ for all $v\in V_1$.

Let $e=(v,v')$ be any edge of $K_t$. Recall that $f(e)$ is an edge $e'$, connecting some vertex $a\in f(v)$ to some vertex $b\in f(v')$. Since $f(v)$ induces a connected sub-graph in $G$, let $P_1$ be any simple path connecting $x_v$ to $a$ in $G[f(v)]$. Similarly, let $P_2$ be any path connecting $b$ to $x_{v'}$ in $G[f(v')]$. 
Consider the vertices $u(e),u'(e)$ that subdivide the edge $e$ in $H$, and assume that $u(e)$ lies closer to $v$ than $u'(e)$ in the subdivision.
We denote the edges $e_1=(v,u(e))$, $e_2=(u(e),u'(e))$, and $e_3=(u'(e),v')$.
We draw the edge $e_2$ along the drawing $\psi(e')$, where $u(e)$ is drawn at $\psi(a)$, and $u'(e)$ is drawn at $\psi(b)$, and we set $Q(e_2)=(e')$.
We draw the edge $e_1=(v,u(e))$ of $H$ along the path $P_1$, setting $Q(e_1)=P_1$, and we add $P_1$ to $\pset(v)$. Similarly,  we draw the edge $(v',u'(e))$ along the path $P_2$, setting $Q(e_2)=P_2$. We then add $P_2$ to $\pset(v')$.

Recall that the graphs $\set{G[f(v)]\mid v\in V(K_t)}$ are completely disjoint. Moreover, the edges $\set{f(e)\mid e\in E(K_t)}$ are all distinct, and they do not belong to the graphs $\set{G[f(v)]\mid v\in V(K_t)}$.
Therefore, for $v,v'\in V_1$ with $v\neq v'$, the paths in $\pset(v)$ and $\pset(v')$ are completely disjoint. It is then easy to see that type-2 crossings in $\psi'$ are only possible between the images of edges $e,e'\in E(H)$, where $e,e'\in \delta(v)$ for some $v\in V_1$. Our next step is to re-route the edges of $\delta(v)$ along the paths contained in $f(v)$ in such a way that their corresponding drawings do not have type-2 crossings. In order to do so, we perform a simple un-crossing procedure. 
Given a pair $e,e'\in \delta(v)$ of edges, whose images have a type-2 crossing in $\psi'$, we remove one of the type-2 crossings, without increasing the total number of crossings in the current drawing, by un-crossing the images of the two edges, as shown in Figure~\ref{fig: uncrossing}. We continue performing this procedure, until for each vertex $v\in V_1$, for every pair $e,e'\in \delta(v)$, the images of $e$ and $e'$ do not have a type-2 crossing. We can still associate, with each edge $e\in \delta(v)$, a path $Q(e)\subseteq G[f(v)]$, such that $e$ is drawn along $Q(e)$. We also eliminate type-2 self-crossings of an edge by simply shortcutting the image of the edge at the crossing point.
Eventually, only type-1 and type-3 crossings remain in the graph. An edge $e^*\in E(G)$ may belong to at most $(t-1)$ paths in $\set{Q(e)\mid e\in E(H)}$ (if $e$ lies in $G[f(v)]$ for some $v\in V_1$, then it may only belong to the paths in $\pset(v)$; otherwise, it may belong to at most one path $Q(e')$ for $e'\in E'$); similarly, a vertex $u\in V(G)$ may be an inner vertex on at most $(t-1)$ paths in $\set{Q(e)\mid e\in E(H)}$. Therefore, as observed before, the total number of crossings in $\psi'$ is bounded by $(t-1)^2(t'-1)^2+6t(t-1)^2\leq (t-1)^2(\frac t{30}-1)^2+6t(t-1)^2<(t-1)^4/272$, contradicting Theorem~\ref{thm: crossing number of a clique}.

\begin{figure}[h]
\scalebox{0.5}{\includegraphics{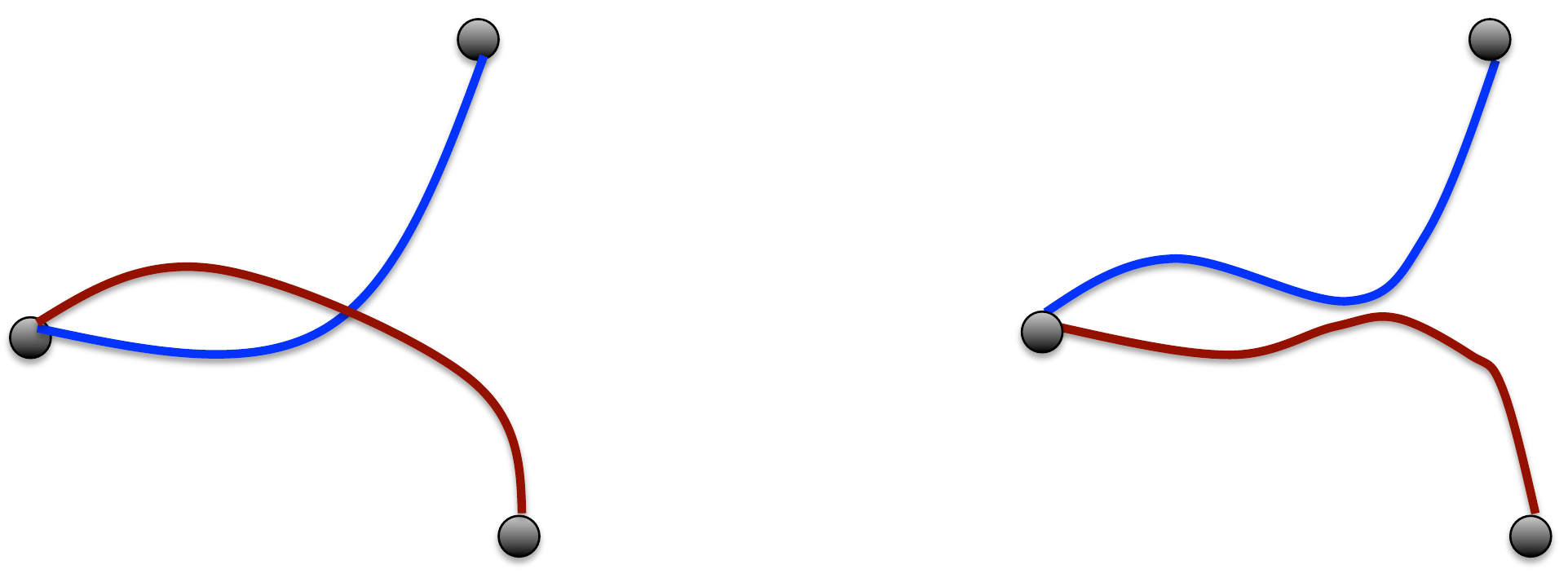}}\caption{Uncrossing the drawings of a pair of edges to eliminate a type-2 crossing.}\label{fig: uncrossing}
\end{figure}

\end{proof}

The following theorem completes the proof of Theorem~\ref{thm: lower bound}.
\begin{theorem}\label{thm: no flat wall}
Graph $G$ does not contain a flat wall of size $(w\times w)$.
\end{theorem}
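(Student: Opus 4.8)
The plan is to assume for contradiction that $G$ has a flat wall $W^{*}$ of size $(w\times w)$, witnessed by a separation $(A,B)$ with $A\cap B\subseteq V(D)$ (where $D=\Gamma(W^{*})$), with $V(W^{*})\subseteq B$ and, for a suitable choice of pegs, all pegs of $W^{*}$ — in particular all four corners $a^{*},b^{*},c^{*},d^{*}$ of $W^{*}$ — in $A\cap B$, and with some $(A\cap B)$-reduction $H$ of $G[B]$ drawn in a disc so that $A\cap B$ lies on the boundary circle in the cyclic order induced by $D$. The conceptual heart is a Jordan-curve obstruction: a planar graph drawn in a disc with four vertices $p_{1},p_{2},p_{3},p_{4}$ on the boundary circle in this cyclic order contains no two vertex-disjoint paths, one joining $p_{1}$ to $p_{3}$ and one joining $p_{2}$ to $p_{4}$. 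Moreover, $(A\cap B)$-reductions never delete a vertex of $A\cap B$ and replace the removed side of an order-$\le 3$ separation by a clique, so — exactly as in the observation recalled right before Theorem~\ref{thm: a cross or C-flat}, since at most one of two vertex-disjoint paths can enter the removed side — the existence in $G[B]$ of two vertex-disjoint paths joining an interleaving pair of vertices of $A\cap B$ (ordered along $D$) is inherited by $H$. Hence it suffices to produce such a ``cross'' inside $G[B]$, which then contradicts the disc-drawing of $H$.

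The black cells provide the cross. First I would show that $W^{*}$ ``encloses'' a black cell. Since $W^{*}$ has treewidth $\ge w$ (it contains the $(w\times w)$-grid as a minor), while $W^{*}$ is a subgraph of the sub-grid $G_{0}$ of $G$ spanned by the rows and columns that $V(W^{*})$ meets, and $G_{0}$ is a grid with only cell-diagonals added (so a column-sweep decomposition gives $\mathrm{tw}(G_{0})\le 2\ell$, with $\ell$ the smaller side of $G_{0}$), we get $\ell\ge w/2$. Applying the same bound to the central sub-wall $W_{1}$ of $W^{*}$ spanned by rows $(R_{c+1},\dots,R_{w-c})$ and columns $(C_{c+1},\dots,C_{w-c})$ for a small absolute constant $c$ shows that $W_{1}$ — which is disjoint from $D$ and lies deep inside $W^{*}$ — spans a sub-grid both of whose sides exceed $2w'$ (using $w=4(w'+8)$ and $w\ge 1000$). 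As black cells occur in $G$ at spacing $w'$ in each coordinate, this sub-grid contains a black cell $\beta$, with all four of its corners $a,b,c,d$ (and a layer of the grid around them) lying in the connected component of $G\setminus V(D)$ that contains $W^{*}\setminus D$. In particular $a,b,c,d\in V(B)$, and the two diagonals of $\beta$ are edges of $B$.

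Now I would build the cross. Take $a,b,c,d$ to be the north-west, north-east, south-east and south-west corner of $\beta$. Route four grid paths $\pi_{a},\pi_{b},\pi_{c},\pi_{d}$ inside $B$ leaving $a,b,c,d$ and heading, within the respective coordinate quadrants around $\beta$, to the north-west, north-east, south-east, south-west, each stopping at its first vertex on $D$; because the quadrants are pairwise disjoint these paths are pairwise disjoint and internally disjoint from $D$, and their endpoints $q_{a},q_{b},q_{c},q_{d}$ appear on $D$ in this cyclic order. Using the wall structure of $W^{*}$ (its outer layers, between $W_{1}$ and $D$) — equivalently, short disjoint arcs of $D$ if one first passes, by routine manipulation, to the separation with $A\cap B=V(D)$ and $B$ the closed interior of $D$ — extend $q_{a},q_{b},q_{c},q_{d}$ by four pairwise disjoint paths in $B$ to the corners $a^{*},b^{*},c^{*},d^{*}$ of $W^{*}$ respectively; this is possible because $(q_{a},q_{b},q_{c},q_{d})$ and $(a^{*},b^{*},c^{*},d^{*})$ interleave compatibly around $D$. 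Concatenating, the extension of $\pi_{a}\cup\{ac\}\cup\pi_{c}$ to $a^{*}$ and $c^{*}$ and the extension of $\pi_{b}\cup\{bd\}\cup\pi_{d}$ to $b^{*}$ and $d^{*}$ give two vertex-disjoint paths in $G[B]$ joining the interleaving pairs $\{a^{*},c^{*}\}$ and $\{b^{*},d^{*}\}$ of vertices of $A\cap B$ — the desired cross, a contradiction.

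The step I expect to be the real obstacle is the geometric one in the second paragraph: giving a precise meaning to ``$W^{*}$ encloses $\beta$'' — controlling how many layers of $W^{*}$ separate $D$ from $\beta$, verifying that $\beta$ and the grid around it really fall on the interior side of $D$, and coping with the fact that $D$ need not bound a disc in the natural (crossing) drawing of $G$, since $W^{*}$ may traverse the diagonals of other black cells. I would handle this by first deleting from $W^{*}$ its diagonal edges (or selecting $\beta$ among black cells whose diagonals $W^{*}$ avoids) and/or by using mod-$2$ winding numbers, together with the routine reduction to the separation in which $A\cap B=V(D)$ and $B$ is the closed interior of $D$. Everything else — the treewidth bound, the Jordan obstruction, the inheritance of crosses under $(A\cap B)$-reductions, and the appeal to Theorem~\ref{thm: a cross or C-flat} — is routine given the material already developed.
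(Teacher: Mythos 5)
Your endgame (a black cell's two diagonals force a cross between interleaving vertices of $A\cap B$, contradicting flatness) is the same as the paper's, but the way you propose to get there has a genuine gap, and it is exactly the step you flag as ``the real obstacle.'' Your argument needs two geometric facts about the \emph{natural drawing} of the host graph: (i) that the cycle $D=\Gamma(W^{*})$ encloses the chosen black cell $\beta$, so that the four quadrant paths leaving the corners of $\beta$ actually reach $D$ before leaving $B$ (a priori a path heading north-west could run to the boundary of the host grid without ever meeting $V(D)$, in which case those vertices lie in $B$ but no cross is produced); and (ii) that the four first-hit vertices $q_{a},q_{b},q_{c},q_{d}$ appear along the \emph{abstract} cycle $D$ in a cyclic order compatible with the quadrant labels and with the corners $a^{*},b^{*},c^{*},d^{*}$. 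Neither is available: $W^{*}$ is an arbitrary subdivision embedded in a non-planar host (its branch paths may use diagonals of other black cells and wind arbitrarily), so the image of $D$ under the natural drawing is a self-crossing closed curve, ``inside'' is undefined, and there is no reason the order of the $q$'s along the curve agrees with their order along the cycle $D$. Your proposed fixes do not close this: deleting the diagonal edges used by $W^{*}$ may destroy the wall (and $D$ itself), and a mod-$2$ winding-number notion of interior still leaves the cyclic-order claim (ii) unproved. The bounding-box/treewidth argument of your second paragraph only shows a black cell exists in the rows and columns met by $W^{*}$; it says nothing about which side of the separation its corners lie on, which is the point at issue.

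The paper avoids all of this with a purely combinatorial argument that you could adopt. It takes the nested sub-walls $W_{i}$ (rows and columns $i$ through $w-i+1$) with boundaries $\Gamma_{i}$, shows via Theorem~\ref{thm: cross-wall} that each $\Gamma_{i}$ separates $W_{i}\setminus\Gamma_{i}$ from the rest of $G$ (yielding separations $(A_{i},B_{i})$), and then locates the black cell by connectivity rather than geometry: a vertex $v^{*}$ in the middle of $W^{*}$ reaches some black-cell corner by a path of length at most $w'$ in $G$, and since $w'+4<w/4-3$ some $\Gamma_{i}$ avoids that path and the cell's corners, which places all four corners in $B_{i}$. The contradiction (Lemma~\ref{lemma: no black cell in Bi}) is then obtained by Menger-type routing --- Claims~\ref{claim: four paths} and~\ref{claim: re-routing paths} push four disjoint paths from the cell's corners outward through $\Gamma_{i},\Gamma_{i-1},\ldots,\Gamma$ to four pegs --- with no appeal to any planar drawing of $G$. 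If you replace your quadrant-path construction by this nested-boundary routing, the rest of your write-up (the inheritance of a cross under $C$-reductions and the disc-drawing obstruction) goes through.
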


\begin{proof}
Assume otherwise. Let $W$ be the flat wall of size $(w\times w)$, $U$ the set of the pegs of $W$, $\Gamma$ its boundary, and $(A,B)$ the separation of $G$ certifying the flatness of $W$: that is, $W\subseteq B$, $A\cap B\subseteq \Gamma$, $U\subseteq A\cap B$, and $B$ is $A\cap B$-flat.

For $1\leq i\leq w/4$, let $W_i$ be the sub-wall of $W$ spanned by rows $(R_i,\ldots,R_{w-i+1})$ and columns $(C_i,\ldots,C_{w-i+1})$, so $W_1=W$, and let $\Gamma_i$ be the boundary of $W_i$.

Fix some $1< i\leq w/4$. Observe that every path $P$ in graph $G$ connecting a vertex of $W_i\setminus \Gamma_i$ to a vertex of $W\setminus W_i$ must contain a vertex of $\Gamma_i$: otherwise, there is a path $P'$ whose endpoints belong to  $W_i\setminus \Gamma_i$ and $W\setminus W_i$, respectively, and $P'$ is internally disjoint from $W$. Path $P'$ must be contained in $B$, since $\Gamma$ separates $A$ from $B$, and $P'$ is internally disjoint from $\Gamma$. Then we can use Theorem~\ref{thm: cross-wall} to build a wall-cross for $W$ in graph $B$, contradicting the fact that $W$ is a flat wall.
Therefore, there is a separation $(A_i,B_i)$ of $G$, with $W_i\subseteq B_i$ and $A_i\cap B_i\subseteq \Gamma_i$, such that $A\subseteq A_i$ and $W\setminus W_i\subseteq A_i$.

Following is the central lemma in the proof of Theorem~\ref{thm: no flat wall}.

\begin{lemma}\label{lemma: no black cell in Bi}
Let $3\leq i\leq w/4$, let $Q$ be any black cell, and let $S$ be the set of the $4$ vertices serving as the corners of $Q$. Then $S\not\subseteq B_i$.
\end{lemma}

Before we prove Lemma~\ref{lemma: no black cell in Bi}, let us complete the proof of Theorem~\ref{thm: no flat wall} using it.
%
%
%
%
Let $v^*$ be one of the vertices in the intersection of row $R_{w/2}$ and column $C_{w/2}$ of $W$. Then there must be a black cell $Q$ in $G$, such that there is a path $P$ of length at most $w'$ from $v^*$ to one of the corners of the cell $Q$ in $G$. Let $S'$ be the set of the vertices on $P$, and the vertices that serve as corners of $Q$, so $|S'|\leq w'+4$. Consider the cycles $\Gamma_4,\ldots,\Gamma_{w/4}$. Since $|S'|\leq w'+4<w/4-3$, at least one of these cycles $\Gamma_i$ does not contain any vertex of $S'$. Therefore, in $G\setminus \Gamma_i$, $v^*$ is connected to all vertices of $S'$, and in particular $S'\subseteq B_i$, a contradiction.
From now on we focus on proving Lemma~\ref{lemma: no black cell in Bi}. 
Our starting point is the following simple claim.
\begin{claim}\label{claim: four paths}
Let $1\leq i\leq w/4$, let $Q=Q(i'w',j'w')$ be any black cell, and let $S$ be the set of the four vertices serving as the corners of $Q$. Assume further that $S\subseteq B_i$. Then there are four disjoint paths in $B_i$ connecting the vertices of $S$ to the vertices of $\Gamma_i$.
\end{claim}

\begin{proof}
We first prove that there are four disjoint paths in $G$  connecting the vertices of $S$ to the vertices of $\Gamma$. Assume otherwise.
Then there is a set $X$ of $3$ vertices separating $S$ from $V(\Gamma)$ in $G$.
Since $w>32$, $|V(\Gamma)|> 32$.

Let $\cset'$ be the subset of the columns of the grid, such that for each column $C\in \cset'$, $C\cap \Gamma\neq \emptyset$. Similarly, 
let $\rset'$ be the subset of the rows of the grid, such that for each row $R\in \rset'$, $R\cap \Gamma\neq \emptyset$. Clearly, either $|\cset'|\geq 4$, or $|\rset'|\geq 4$ must hold: otherwise, all vertices of $\Gamma$ are contained in the columns of $\cset'$ and the rows of $\rset'$, and, if $|\cset'|,|\rset'|<4$, then $|V(\Gamma)|\leq 9$ must hold, a contradiction.
We assume w.l.o.g., that $|\cset'|\geq 4$. Then at least one column of $\cset'$ contains no vertex of $X$. Let $C^*$ be that column.

Consider the following four paths: $P_1$ is obtained by connecting the top left corner of $Q$ to the top of the grid $G'$, along the column $C_{j'w'}$; $P_2$ is obtained similarly by connecting the top right corner of $Q$ to the top of the grid, along the column $C_{j'w'+1}$. Define $P_3,P_4$ similarly, by connecting the bottom two corners of $Q$ to the bottom of the grid. Then all four paths are node-disjoint, and have length at least $w'-1$ each. At least one of these paths does not contain a vertex of $X$. Assume w.l.o.g. that it is $P_1$. Finally, let $R^1,R^2,R^3,R^4$ be the set of any 4 rows of the grid that intersect $P_1$: since $|V(P_1)|\geq w'-1$, such a set exists. At least one of these four rows contains no vertex of $X$ - assume w.l.o.g. that it is $R^1$. Then $R^1$ intersects both $P_1$, and either $V(\Gamma\setminus X)$ or $V(C^*\setminus \Gamma)$. Therefore, the union of $P_1,R^1,C^*$ and $\Gamma\setminus X$ must contain a path connecting a vertex of $S$ to a vertex of $\Gamma$ in $G\setminus X$, a contradiction.

Therefore, $G$ contains four disjoint paths connecting the vertices of $S$ to the vertices of $\Gamma$. Each such path must contain at least one vertex of $\Gamma_i$, since $\Gamma_i$ separates $B_i$ from $\Gamma$. By truncating these paths, we can ensure that they terminate at the vertices of $\Gamma_i$, and they do not contain the vertices of $\Gamma_i$ as inner vertices. The resulting set of paths must then be contained in $B_i$.
\end{proof}

We will also repeatedly use the following simple claim, whose proof can be found, e.g. in~\cite{RS90}, and is included here for completeness.

\begin{claim}\label{claim: re-routing paths}
Let $H$ be any graph, $X,Y$ any pair of disjoint vertex subsets of $H$, and assume that there is a set $\pset$ of $k$ disjoint paths connecting the vertices of $X$ to the vertices of $Y$ in $H$. Let $X'\subseteq X$, and assume that there is a set $\pset'$ of $k-1$ disjoint paths connecting the vertices of $X'$ to the vertices of $Y$ in $H$, such that the paths in $\pset'$ are internally disjoint from $X\cup Y$. 
Then there is a set $\pset''$ of $k$ disjoint paths connecting the vertices of $X$ to the vertices of $Y$ in $H$, such that the paths in $\pset''$ are internally disjoint from $X\cup Y$, and $k-1$ of the paths in $\pset''$ originate at the vertices of $X'$.
\end{claim}

\begin{proof}
Let $H'$ be the graph obtained from $H$ by unifying the vertices of $X\setminus X'$ into a single vertex $v^*$. Let $X^*=X'\cup\set{v^*}$. We claim that there is a set $\pset''$ of $k$ disjoint paths in $H'$ connecting the vertices of $X^*$ to the vertices of $Y$. Assume otherwise. Then there is a set $Z$ of $k-1$ vertices, separating $X^*$ from $Y$ in $H'$. Since the paths of $\pset'$ are contained in $H'$, each vertex of $Z$ must lie on a distinct path of $\pset'$, and in particular $v^*\not\in Z$. Set $\pset$ of paths defines $k$ internally disjoint paths connecting the vertices of $X^*$ to the vertices of $Y$ in $H'$. The only vertex that the paths in $\pset$ may share is $v^*$. Since $v^*\not\in Z$, there is at least one path $P\in \pset$ that does not contain a vertex of $Z$, a contradiction. Therefore, there is a set $\pset''$ of $k$ disjoint paths in $H'$ connecting the vertices of $X^*$ to the vertices of $Y$. These paths naturally define the desired set of paths in $H$.
\end{proof}

We are now ready to complete the proof of Lemma~\ref{lemma: no black cell in Bi}. 
Fix some $3\leq i\leq w/4$, and let $Q$ be some black cell, such that the set $S$ of the four vertices serving as the corners of $Q$ is contained in $B_i$. From Claim~\ref{claim: four paths}, we can find two disjoint paths, $P_1,P_2$, connecting the vertices of $S$ to the vertices of $\Gamma_i$ in $B_i$. Let $a,b,c,d$ be the four corners of the wall $W_{i-1}$, in this clock-wise order, where $a$ is the top left corner. It is easy to see that we can extend the two paths $P_1,P_2$, using the edges of $W_i\setminus W_{i-1}$, so that they connect two vertices of $S$ to $a$ and $c$, the two paths remain disjoint, and are contained in $B_{i-1}$. 

From Claim~\ref{claim: four paths}, there are three disjoint paths in $B_{i-1}$, connecting the vertices of $S$ to the vertices of $\Gamma_{i-1}$. Using Claim~\ref{claim: re-routing paths}, we can assume that two of these paths terminate at $a$ and $c$, respectfully. The third path can then be extended, using the edges of $\Gamma_{i-1}$, so that it terminates at either $c$ or $d$, and it remains disjoint from the first two paths. We assume w.l.o.g. that it terminates at $c$. Let $P'_1,P'_2,P'_3$ be the resulting three paths.

Let $a',b',c',d'$ be the four corners $W$, that appear on $\Gamma$ in this clock-wise order, where $a'$ is the top left corner. We can extend the three paths $P'_1,P'_2,P'_3$, using the edges of $E(W)\setminus E(W_{i-1})$, so that they connect three vertices of $S$ to $a',b'$ and $c'$, such that the three paths remain disjoint. It is easy to see that all three paths are contained in $B=B_1$. Finally, using Claim~\ref{claim: four paths}, there are four disjoint paths in $B$, connecting the vertices of $S$ to the vertices of $\Gamma$. Using Claim~\ref{claim: re-routing paths}, we can assume that three of these paths terminate at $a'$, $b'$ and $c'$. The vertices $a',b',c'$ partition $\Gamma$ into three segments, each of which contains at least two pegs. Let $x$ be the endpoint of the fourth path. Then we can always extend the fourth path along $\Gamma$, so that it remains disjoint from the first three paths, and it terminates at a peg of $W$. As the corners of a wall are a subset of its pegs, we now obtained a set $\pset$ of four disjoint paths, connecting the vertices of $S$ to the vertices of $U$, where $\pset\subseteq B$. Using the paths in $\pset$, and the edges of the cell $Q$, we can route any matching between the four corresponding pegs in graph $B$. This contradicts the fact that $B$ is $A\cap B$-flat.
This completes the proof of Lemma~\ref{lemma: no black cell in Bi}, and of Theorem~\ref{thm: no flat wall}.
\end{proof}

\bibliographystyle{alpha}
\bibliography{flat-wall-thm}

\begin{thebibliography}{ACNS82}

\bibitem[ACNS82]{ajtai82}
M.~Ajtai, V.~Chv\'{a}tal, M.~Newborn, and E.~Szemer\'{e}di.
\newblock Crossing-Ðfree subgraphs.
\newblock {\em Theory and Practice of Combinatorics}, pages 9--Ð12, 1982.

\bibitem[GT13]{flat-wall-GT}
Archontia~C Giannopoulou and Dimitrios~M Thilikos.
\newblock Optimizing the graph minors weak structure theorem.
\newblock {\em SIAM Journal on Discrete Mathematics}, 27(3):1209--1227, 2013.

\bibitem[Jun70]{Jung}
H.~A. Jung.
\newblock Eine verallgemeinerung des n-fachen zusammenhangs fŸr graphen.
\newblock {\em Math. Ann.}, 187:95---103, 1970.

\bibitem[KTW12]{KTW}
Ken-Ichi Kwarabayashi, Robin Thomas, and Paul Wollan.
\newblock A new proof of the flat wall theorem.
\newblock {\em preprint}, 2012.
\newblock arXiv:1207.6927.

\bibitem[Lei83]{leighton_book}
F.~T. Leighton.
\newblock {\em Complexity issues in {VLSI}: optimal layouts for the
  shuffle-exchange graph and other networks}.
\newblock MIT Press, 1983.

\bibitem[RS86]{GMT-RS}
Neil Robertson and Paul~D Seymour.
\newblock Graph minors. v. excluding a planar graph.
\newblock {\em Journal of Combinatorial Theory, Series B}, 41(1):92--114, 1986.

\bibitem[RS90]{RS90}
N.~Robertson and P.D. Seymour.
\newblock Graph minors. ix. disjoint crossed paths.
\newblock {\em J. Comb. Theory Ser. B}, 49(1):40--77, June 1990.

\bibitem[RS95]{flat-wall-RS}
Neil Robertson and Paul~D Seymour.
\newblock Graph minors. xiii. the disjoint paths problem.
\newblock {\em Journal of Combinatorial Theory, Series B}, 63(1):65--110, 1995.

\bibitem[Sey06]{Seymour06}
Paul~D. Seymour.
\newblock Disjoint paths in graphs.
\newblock {\em Discrete Mathematics}, 306(10-11):979--991, 2006.

\bibitem[Shi80]{Shiloach80}
Yossi Shiloach.
\newblock A polynomial solution to the undirected two paths problem.
\newblock {\em J. ACM}, 27(3):445--456, 1980.

\bibitem[Tho80]{Thomassen80}
C.~Thomassen.
\newblock 2-linked graphs.
\newblock {\em Erop. J. Combinatorics}, 1:371---378, 1980.

\bibitem[Whi32]{Whitney}
Hassler Whitney.
\newblock Congruent graphs and the connectivity of graphs.
\newblock {\em American Journal of Mathematics}, 54(1):150--168, 1932.

\end{thebibliography}

\end{document}
---------------------------------------Leftovers-----------------------

\subsection*{Case 1: $\bset'$ contains at least $8t^2$ core walls of type 1a.}
 Let $\bset''\subseteq \bset'$ be the subset of the type-1a walls. For each core wall $B'_i\in \bset''$, let $F_i$ be any type-1 neighborhood bridge incident on $B'_i$. Notice that if $B_i',B_j'\in \bset''$ and $i\neq j$, then $F_i\neq F_j$, since $|i-j|> 2$, and $F_i$ only touches vertices of $N_i\cup B_i$, which are disjoint from $N_j\cup B_j$. If $F_i$ is a single edge $e=(v_i,u_i)$ with $v_i\in B'_i$, $u_i\in N_i$, then let $P_i$ be the path consisting of this single edge. Otherwise, let $P_i$ be any path, whose first vertex is $v_i\in B'_i$, last vertex $u_i\in N_i$, and all inner vertices belong to $F_i$. Such a path exists from the definition of a type-1 neighborhood bridge. We view $v_i$ as the first vertex of $P_i$, and $u_i$ as the last vertex of $P_i$. 

Let $G'$ be the graph obtained by the union of $H$ with the paths $P_i$ for all $B'_i\in \bset''$. The idea is to redefine the basic walls of $H$, so that for each $B_i\in \bset'$, $B_i$ together with its neighborhood forms a single basic wall. We then show that each such new wall is either a type-2 or type-3 wall in the graph $G'$, by exploiting the paths $P_i$. Applying Theorem~\ref{thm: type-3 walls} or Theorem~\ref{thm: type-2 walls} to the resulting graph will then finish the analysis of Case 1.

We now provide more details. Fix some core wall $B'_i\in \bset''$. Let $\cset_i$ be the set of columns of $H$, that includes all columns of $B_i,B_{i-1}$, and $B_{i+1}$. Notice that since $v_i\in B_i'\setminus \Gamma_i$, while $u_i\not \in B_i$, there is at least one column $C_i\in \cset_i$, such that $v_i$ lies strictly to the right of $C_i$ and $u_i$ lies strictly to the left of $C_i$, or vice versa. If $v_i$ lies on some column of $B_i$, then we let $C'_i$ be that column; otherwise we let $C'_i$ be an arbitrary column of $B_i$, $C'_i\neq C_i$. Similarly, if $u_i$ lies on some column of $\cset_i$, then we let $C''_i$ be that column; otherwise we define $C''_i\neq C_i,C'_i$ arbitrarily. We now select any subset $\cset'_i\subseteq \cset_i$ of $z$ columns, such that $C_i,C'_i,C''_i\in \cset_i'$, and we delete from our graph all internal vertices of red paths that correspond to the edges of columns $\cset_i\setminus \cset'_i$. Let $\tilde{B_i}$ be the resulting wall, spanning all columns in $\cset'$, and all rows of $H$. We process all walls $B_i\in \bset''$ in this manner. We also let $\cset_1$ be the set of all columns of the first basic wall, $B_1$. In the end, we delete all inner vertices of all red paths that lie on the columns of $H$ that do not belong to any set $\cset_i$, for $B_i\in \bset''\cup\set{B_1}$. Let $G''$ be this final graph. Then $G''$ consists of a new wall $H''$, containing at least $8t^2+1$ basic walls $\tilde B_i$: in addition to the original first basic wall $B_1$, it contains one such basic wall $\tilde B_i$ for each $B'_i\in \bset''$. For each such basic wall $\tilde B_i$, graph $G''$ contains a path $P_i$, whose endpoints $v_i,u_i\in \tilde B_i$ are separated by at least one column of $\tilde B_i$. Moreover, all such paths $P_i$ are completely disjoint. 

For each such basic wall $\tilde B_i$, we define a core wall $\tilde B_i'$ exactly as before:  It is the sub-wall of $\tilde B_i$ obtained by taking the rows $R_{2t+1},\ldots,R_{2t+r}$ of $\tilde B_i$, and all the red paths of $\tilde B_i$ connecting them. We denote the boundary of the core wall $\tilde B'_i$ by $\tilde{\Gamma}_i$.

We claim that each wall $\tilde B_i$ is either a type-2 or a type-3 wall of $G''$. Indeed, consider some such wall $\tilde B_i$. Recall that there is a path $P_i$ in $G''$, connecting $v_i\in \tilde B_i'\setminus \Gamma_i$ to $u_i\in \tilde B_i$. If $u_i\not \in \tilde B'_i$, then $P_i$ is a type-2 bridge incident on the core wall $\tilde B'_i$, and so $\tilde B_i$ is a type-2 wall. Otherwise, if we denote by $a,b,c,d$ the four corners of $\tilde B'_i$, and assume that they appear on $\tilde {\Gamma}_i$ in this order, then $\tilde B'_i\cup P_i$ contains two disjoint paths: one connecting $a$ to $c$, and one connecting $b$ to $d$. Therefore, $\tilde B_i$ is a type-3 wall. 

We conclude that either at least half of the core walls in $G''$ are type-2 walls, or at least half of them are type-3 walls. Applying Theorem~\ref{thm: type-2 walls} or Theorem~\ref{thm: type-3 walls} then implies that $G''$, and hence $G$, contains a $K_t$-minor grasped by $W$. 

\subsection*{Case 2:  $\bset'$ contains at least $2^{x-2}t^2D^2-8t^2$ core walls of type 1b.}
From now on we only focus on core walls of type 1b, and we let $\bset''\subseteq \bset'$ be the subset of such walls. We need the following theorem.

\begin{theorem}\label{thm: finding the matchings}
There is an efficient algorithm to find a collection $\pset=\set{P_1,\ldots,P_{64t^2}}$ of $64t^2$ disjoint paths, and denote the endpoints of each path $P_j$ by $a_j$ and $b_j$ respectively, such that:

\begin{itemize}
\item For each path $P_j$, there is some core wall $B_{i_j}'$, such that $a_j\in V(B'_{i_j}\setminus \Gamma_{i_j})$, and for $j\neq j'$, $B'_{i_j}\neq B'_{i_{j'}}$.

\item For each path $P_j$, vertex $b_i$ belongs to $H\setminus B_{i_j}$, and it does not belong to the neighborhood of $B_{i_j}$.

\item Each path $P_j$ is internally disjoint from $H$.

\end{itemize}
\end{theorem}

\begin{proof}
In order to prove the theorem, it is enough to find a collection $\pset'$ of $128t^2D$ paths, for which all conditions of the theorem hold, except that the paths in $\pset'$ are allowed to share endpoints (that is, paths in $\pset'$ are pairwise internally disjoint). Indeed, given such a collection $\pset'$ of paths, we can greedily construct the desired set $\pset$ of paths as follows. Start with $\pset=\emptyset$. While $\pset'\neq \emptyset$, select any path $P\in \pset'$ and add it to $\pset$. Delete from $\pset'$ the path $P$, and all paths sharing endpoints with $P$. Since the degree of every vertex is at most $D$, and the paths in $\pset'$ are internally disjoint, in every iteration we delete at most $2D-1$ paths from $\pset'$, and add one path to $\pset$. Therefore, $|\pset|\geq 64t^2D$, and the paths in $\pset$ are mutually disjoint.

For convenience, for each path $P_i\in \pset'$, we refer to $a_i$ as the first vertex of $P_i$, and to $b_i$ as the last vertex of $P_i$.
We now focus on finding the set $\pset'$ of internally disjoint paths. Let $\bset$ be the set of all core walls of type $1$. Consider some such core wall $B'_i\in \bset$. Recall that there is at least one type-1 bridge incident on $B'_i$. Fix any such bridge. If this bridge consists of a single edge, then we add this edge to $\pset'$ as one of the paths. The first vertex of this path is the vertex of $B'_i\setminus \Gamma_i$. Otherwise, the bridge is some connected component $C$ of $G\setminus V(H)$ that touches some vertex $v_i\in B'_i\setminus \Gamma_i$. Let $e$ be any edge connecting $v_i$ to some vertex $u_i\in V(C)$. We say that $B'_i$ \emph{tags} the component $C$ via the edge $e$ and vertex $v_i$, and we call $v_i$ the terminal of $C$.

Consider now some connected component $C$ of $G\setminus V(H)$. Assume first that exactly one core wall tags $C$, and assume that this wall is $B'_i$. Since $C$ is a type-1 bridge incident on $B'_i$, $C$ must touch some vertex $u_i\in H\setminus B_i$. Let $P_i$ be any path connecting the vertex $v_i$ (via which $B'_i$ tagged $C$) to $u_i$, where all inner vertices of $P_i$ belong to $C$. We then add $P_i$ to $\pset'$.

Finally, assume that more than one core wall tags $C$. Let $C'$ be the union of $C$ and all edges via which $C$ is tagged. Let $T$ be any spanning tree of $C'$, rooted at some vertex whose degree is more than $1$. Notice that the terminals of $C$ must serve as the leaves of $T$, since the degree of each terminal is $1$ in $C'$. We now construct a collection $\pset(T)$ of node-disjoint paths connecting some terminals of $C$ to each other, such that all paths in $\pset(T)$ are contained in $C'$. We start with $\pset(T)=\emptyset$. While $T$ contains at least two terminals, we select the lowest vertex $v$ of $T$ whose subtree contains at least two terminals. Since the maximum vertex degree is bounded by $D$, the subtree of $v$ contains at most $D$ terminals. We select any pair of terminals in the sub-tree of $v$ and connect them with a path inside the subtree of $v$. This path is then added to $\pset(T)$, and all vertices of the subtree of $v$ (including $v$ itself) are deleted from $T$. We continue this procedure until $T$ contains at most one terminal, and we terminate the algorithm at that point. Let $n_C$ be the number of terminals of $C$. Then in every iteration we add one path to $\pset(T)$ and delete at most $D$ terminals from the tree $T$. Therefore, $\pset(T)$ contains at least $|n_T|/D$ paths. We add the paths in $\pset(T)$ to $\pset'$, designating the first and the last endpoint of each path arbitrarily. Once we process all connected components of $G\setminus H$, set $\pset'$ must contain at least $128t^2D$ paths. It is easy to see from the construction that all these paths are internally node-disjoint, and each such path is internally disjoint from $H$. The first endpoint of each path belongs to a distinct core wall $B'_i$, and does not lie on its boundary $\Gamma_i$, while the last vertex belongs to $H\setminus B'_i$.
\end{proof}

Consider the set $\pset$ of $64t^2$ paths given by Theorem~\ref{thm: finding the matchings}. Construct an auxiliary directed graph $X$ as follows. For every type-1 core wall $B'_i$, there is a vertex $v_i$ in $X$. There is a directed edge $(v_i,v_{i'})$ iff there is some path $P\in \pset$, whose first vertex belongs to $B'_i\setminus \Gamma_i$, and last vertex belongs to $B_{i'}^+$. 
Graph $X$ then contains $64t^2$ edges. For each vertex $v$ of $X$ whose in-degree is at least $4$, we delete the unique edge leaving $v$ (if such an edge exists) from $X$, and let $X'$ denote the resulting graph. Since the number of vertices with in-degree at least $4$ is bounded by $16^2$, $X'$ still contains at least $48t^2$  edges.

We say that a vertex $v_i$ in graph $X'$ is bad iff its in-degree is at least $4$. An edge $e$ of $X'$ is bad if its second endpoint is a bad vertex. We distinguish between two cases. We say that Case 1 happens if at least half the edges in $X'$ are good; otherwise, we say that Case 2 happens.

Assume first that Case 1 happens. Then there is a subset of $4t^2$ edges of $X$ that form a matching $M$. Indeed, let $X'$ be the subgraph of $X$ induced by edges that are not bad. Then the maximum in-degree of any vertex of $X'$ is at most $3$. Let $U$ be the subset of vertices of $X'$ whose out-degree is $1$. Then the number of edges in $X$ is  $|U|$. Therefore, at last one vertex $u\in U$ has in-degree at most $1$. Let $e$ be the unique edge leaving $u$, and assume that its other endpoint is $v$. We add the edge $e$ to $M$, and we delete from $X$ all edges incident on $u$ and $v$. Notice that the number of such edges is at most $4$. We continue this process until $X'$ contains no more edges. In every iteration, we add one edge to $M$ and delete at most $5$ edges from $X'$. Therefore, in the end, $X'$ will contain at least $24t^2/5$ edges. 

Assume now that Case 2 happens. Let $X''$ be the subgraph of $X$ induced by bad edges. Then $X''$ consists of a collection of stars, where each star has at least $4$ leaves.

Case 1a: neighbors
Case 1b: second endpoint not in top t/bottom t rows
Case 1c: second endpoint is in top t/bottom t rows.

Plan:

\begin{enumerate}
\item Define the matchings. Two cases: either a large matching between blocks, or many participate in stars. Assume Case 1.

\item If many edges of the matching involve vertices from top/bottom t rows: build a path traversing the top/bottom t rows, and define a matching by these paths. Now we have a matching between inner vertices.

\item If many edges of the matching match neighboring blocks, this reduces to type-2 or type-3 core walls. So assume this is not the case.

\item For each edge of the matching, the two endpoints are marked as $i$ and $j$, so that for every pair $(i,j)$, one of the edges is marked by this pair. We then maneuver the $t$ paths in such a way that the $i$th path always hits the vertex marked as $i$. This gives a clique minor. Only need $t^2$ edges in the matching.

\item for Case 2: we use the blocks that serve the centers of the stars (the top half or the bottom half of the rows only) to build one long path that defines a matching of the leaves of the star. We then mark each matching edge as $(i,j)$ as before, and maneuver the paths as before, so that the $i$th path goes through vertices marked by $i$. Additionally, we ensure that the transit blocks are only traversed on their top/bottom $t$ paths.
\end{enumerate} 
------------------------------------

(Looks like we can get away with $16k^4\log k$ clusters. We build the matching between blocks as before, this time matching of size $16k^4\log k$. This matching defines intervals. Partition the intervals by their length to $\log k$ groups, where intervals in each group are within factor $2$ in length from each other. One of the groups contains $k^4$ intervals. Focus on that group. Say interval length is between $\ell$ and $2\ell$. Let $U$ be the set of all intervals in the group. Put sticks at distance $2\ell$ from each other, and add a random offset. We are expected to stab half the intervals. We now only consider the intervals that we stab, and assume there are $4k^4$ of them. Partition them into two subsets:  those stabbed by odd-indexed sticks, and those stabbed by even-indexed sticks. One of the two groups will have $k^4$ intervals. We now obtain a bunch of cliques, where between different cliques, the intervals are disjoint. In each clique we can find a large nested or twisted set, and use them to build the independent set. (the other side, which is used to connect the nested/twisted sets will only be used for transit. So we'll only use the top $k$ or the bottom $k$ rows. Can throw away half the intervals in the nested/twisted set that use top or bottom rows). We'll end up with at least $k^2$ intervals, because for any collection $x_1,\ldots,x_i$ of positive numbers, $\sum_i\sqrt x_i\geq \sqrt{\sum_{i}x_i}$. (replacing $\sqrt x_i$ by $a_i$, this is the same as saying $\sum_i a_i\geq \sqrt{\sum_ia^2_i}$, which is equivalent to $(\sum_ia_i^2)\geq \sum_i a_i^2$). In total, we need $k^4\log k$ clusters, of size $k\times k$. If we take a big grid to be of size $k^3\times k^3$ (ignoring logarithmic factors), then contracting each $k\times k$ cluster, we get a $k^2\times k^2$ grid, which will let us build a long path with $k^4$ clusters in it).

\subsection{Type-1 Core Walls}
The main goal of this subsection is to prove the following theorem.

\begin{theorem}\label{thm: type-1 walls}
If the number of core walls of type 1 is at least $2^xt^2D^2$, then $G$ contains a $K_t$-minor that is grasped by $W$.
\end{theorem}

The rest of this subsection is devoted to proving Theorem~\ref{thm: type-1 walls}. Let $\bset$ be the set of all core walls of type $1$, and assume that $|\bset|\geq 2^xt^2D^2$. We construct a subset $\bset'\subseteq \bset$ of at least $2^{x-1}t^2D^2$ core walls, such that $B_1'\not \in \bset'$, and for every pair $B_i',B_j'\in \bset'$, $|i-j|>2$. It is easy to construct $\bset'$ in a greedy fashion, as follows. Start with $\bset$ and discard $B_1'$ from it, if $B_1'\in \bset$. Next, iteratively take a wall $B_i'\in \bset$ with smallest index $i$, add $B_i'$ to $\bset'$, and delete from $\bset$ walls $B_i',B_{i+1}',B_{i+2}'$ (if they belong to $\bset$). It is easy to see that at the end of this procedure, $\bset'$ will contain at least $\frac{2^xt^2D^2} 3-1\geq 2^{x-2}t^2D^2$ walls.

For each core wall $B_i'\in \bset'$, the \emph{neighborhood} of $B_i$, denoted by $N_i$, consists of all vertices that belong to $B_{i-1}$, $B_{i+1}$, and the sets $\pset_i,\pset_{i-1}$ of paths, excluding the vertices of $\Gamma_i$ (recall that the paths in $\pset_i$ connect the last column of $B_i$ to the first column of $B_{i+1}$, and the paths in $\pset_{i-1}$ connect the last column of $B_{i-1}$ to the first column of $B_i$, while $\Gamma_i$ is the boundary of the core wall $B'_i$). Given a bridge $F$ that is incident on $B'_i$, we say that $F$ is a \emph{neighborhood bridge} iff all vertices of $H$ that $F$ touches belong to the neighborhood of $B_i$, or to $B_i$ itself. In other words, if $F$ is an edge $(u,v)$ with $v\in B'_i$, then $u\in N_i$. Otherwise, $F$ is a connected component of $G\setminus H$, touching some vertex $v_i\in B'_i$, and all other vertices that $F$ touches belong to $N_i\cup B_i$. (If $F$ is a type-1 neighborhood bridge, then at least one vertex that $F$ touches must belong to $N_i$).
We say that a core wall $B_i'$ is of type 1a iff at least one type-1 bridge $F$ incident on $B'_i$ is a neighborhood bridge. Otherwise, we say that $B_i$ is of type 2a. We now consider two cases. The first case is when at least $8t^2$ walls in $\bset'$ are type-1a walls, and the second case is when the number of such walls is less than $8t^2$.